\newtheorem{theorem}{Theorem}[section]
\newtheorem{lemma}[theorem]{Lemma}
\newtheorem{proposition}[theorem]{Proposition}
\def\1g{1\hskip -3pt \mbox{l}}
    \title{Orthogonal Impulse Response Analysis in Presence of Time-Varying Covariance}
\author{
{\sc Valentin Patilea\footnote{CREST Ensai,
Campus de Ker-Lann,
51 Rue Blaise Pascal,
BP 37203, 35172 BRUZ Cedex
FRANCE; email: patilea@ensai.fr. This author acknowledges
support from the research program \emph{Mod\`eles et traitements math\'ematiques des donn\' ees en tr\`es grande dimension,} of Fondation du Risque and MEDIAMETRIE}
\qquad \qquad
Hamdi Ra\"{\i}ssi\footnote{Instituto de Estad\'{\i}stica, PUCV,
Errazuriz 2734, Valpara\'{\i}so, CHILE; email: hamdi.raissi@pucv.cl. This author  acknowledges the ANID funding Fondecyt 1160527.}
\footnote{The authors are grateful to Fulvio Pegoraro (Banque de France) for comments on the manuscript.}
\qquad \qquad
}
}
\begin{document}

    \maketitle


    \abstract{In this paper the orthogonal impulse response functions (OIRF) are studied in the non-standard, though quite common, case where the covariance of the error vector is not constant in time. The usual approach for taking into account such behavior of the covariance consists in  applying the standard tools to sub-periods of the whole sample. We underline that such a practice may lead to severe upward bias. We propose a new approach intended to give what we argue to be a more accurate resume of the time-varying OIRF. This consists in averaging the Cholesky decomposition of nonparametric  covariance estimators.  In addition an index is developed to evaluate the heteroscedasticity effect on the OIRF analysis. The asymptotic behavior of the different estimators considered in the paper is investigated. The theoretical results are illustrated by Monte Carlo experiments. The analysis of the orthogonal response functions of the U.S. inflation to an oil price shock, shows the relevance of the tools proposed herein for an appropriate analysis of economic variables.}


%
%
%
%

\quad

\textbf{\em Keywords:}
Impulse response analysis; Kernel smoothing;
Time-varying covariance; VAR models\\

    \section{Introduction}
    \label{intro}

In time series econometrics, it is common to investigate sub-samples of a full time series \color{black} in order to capture changes in the data. Reference can be made to Dees and Saint-Guilhem (2011), or Diebold and Yilmaz (2014) who considered rolling windows. In order to accommodate possible regime switches, Bernanke and Mihov (1998a) constitute different sub-periods for measuring monetary policy. Strongin (1995) split the data considered for the study according to the Federal Reserve operating procedures. Nazlioglu, Soytas and Gupta (2015) propose a pre-crisis, in-crisis and post-crisis split type to carry out a volatility spillover analysis, while Strohsal, Proa\~{n}o and Wolters (2019) consider pre and post 1985 financial liberalization samples. Blanchard and Simon (2001), Stock and Watson (2005) and Alter and Beyer (2014), use both rolling windows and static periods, to describe non constant dynamics in the series they study.

Our main message, focused on the orthogonal impulse response functions (OIRF) analysis, is that if one wishes to work with fixed sub-samples (for periods comparisons), it is advisable one to carry out a pointwise estimation, and then resume it using averages according to the periods of interest. This leads us to introduce in the following what will be called the \emph{averaged OIRF}. By doing so, an accurate picture of the non constant dynamics is obtained. As a matter of fact, applying the standard tools to sub-samples can, in some sense, lead to bias distortions in resuming the time-varying dynamics of a series. Several available approaches for the pointwise OIRF estimation could be used for our task (see Primiceri (2005), or Giraitis, Kapetanios and Yates (2018)). In this paper, we develop the above-presented idea in the important case of vector autoregressive (VAR) models with constant autoregressive parameters but with time-varying covariance structure. Indeed, it is often admitted that the conditional mean is constant, while the variance is time-varying  (see Bernanke and Mihov (1998b), Sims (1999), Stock and Watson (2002), Kew and Harris (2009), Cavaliere, Rahbek and Taylor (2010) or Patilea and Ra\"{\i}ssi (2012) among others). In addition, it is widely known that non constant variance is common for economic variables. For instance Sensier and van Dijk (2004) found that more than 80\% of the 214 U.S. economic variables they studied have a non constant variance (see Aue, H\"{o}rmann, Horv\`{a}th and Reimherr (2009) for break detection in the covariance structure). For this reason, the series under study are assumed non stationary, due to the non constant variance (i.e. the heteroscedasticity is unconditional). In the following, we present a univariate example commonly considered in the literature, to illustrate the main idea of the paper.

\subsection{A univariate example}

Many economic variables display huge shifts or noticeable long-range effects. For instance, emerging countries have experienced fast changes during the period 1990-2020. Also, important changes can be observed in the commodities markets. Thus, let us consider the log differences of the monthly global price of brent crude, in U.S. Dollars per barrel, from January 1990 to July 2020\footnote{The data can be download from the website site of the federal reserve bank of Saint Louis, https://fred.stlouisfed.org/series/POILBREUSDM}. From Figure \ref{fig1}, it seems that the oil prices log differences are subject to clear heteroscedastic effects. Using the adaptive approach introduced by Xu and Phillips (2008), the conditional mean of the conditional mean is filtered by fitting an AR model.

Using this simple framework, we illustrate the ways of resuming the time-varying response functions to a \emph{rescaled}\footnote{The term \emph{rescaled} is taken from L\"{u}tkepohl (2005,p53).} impulse for an univariate series.
Let us define by $\sigma_t^2=g^2(t/T)$ the (unobserved) innovations variance at time $1\leq t \leq T$, where $g(.)$ is a function fulfilling some regularity conditions. As the (unobserved) moving average coefficients $\phi_i$ are constant in our case, it suffices to focus on the changes in the variance to capture the evolutions of the rescaled impulse response functions (IRF) $\phi_i\sigma_{t-i}$. The usual way to resume the time varying IRF over a given period would be to estimate the standard tool that assume a constant variance. This would lead to estimate  $\phi_i(\int g^2)^{0.5}$ (which will be called the approximate IRF), whereas $\phi_i\int g$ (which will be called  the averaged IRF) is more sound to resume the IRF. Here, the integrals account for the averaging over given periods of interest. Indeed, if the purpose is to find a resume of the IRF, averaging over the values of fixed periods seems more reasonable than considering a kind of norm such as $(\int g^2)^{0.5}$. Clearly the averaged and approximated IRF are in general different, as long as the variance structure is non constant. More precisely, more the variance varies over time, larger the discrepancy between the averaged and the approximated IRF is. Therefore, in the following we also  propose to build  and indicator of the variability of the variance based on the discrepancy between the averaged and approximated IRFs. It is important to underline that the robustness/stability studies often rely on the simple (graphical) examination of the different OIRF. As this way of proceeding is subjective, our indicator is intended to quantify such a kind of analyses.

In order to have an idea about the differences between the two quantities, kernel estimators of $\int g$ and $(\int g^2)^{0.5}$ are given for the pre and post 2008 subprime mortgage crisis (see Table \ref{tab00}). As expected, the approximated IRF estimates are noticeably greater than the averaged IRF estimates. For example, we found that the approximated IRF can be greater than the averaged IRF by 51.73\% for post-crisis period, and 4.09\% for the pre-crisis period. These discrepancies are consequences of fast changes in the variance of the series. This can be explained by the fact that the approximated IRF uses the squared observations (or residuals), while the averaged IRF considers estimates of the variance structure. In our particular example, comparing the pre and post crisis outputs, a practitioner could conclude that the IRFs did not significatively changed in mean examining the averaged estimators. If one relies on the approximated approach, then one could spuriously find an increase of the IRF due to the crisis effect.

The structure of this study is as follows. In section \ref{impulse} the vector autoregressive model with unconditionally heteroscedastic innovations is presented. Next, different possible concepts of OIRF that could be considered in our framework are discussed. Moreover, we introduce a scalar variance variability index that measures the departure from the standard constant variance VAR setup. The Section \ref{kernel_sec} is dedicated to the estimators and their asymptotic properties. The time-varying OIRF estimator is introduced and its nonparametric rate of convergence is derived. In Section \ref{ols} and Section \ref{adapti}, the estimators of the approximated and averaged OIRF are defined. Their asymptotic behavior is also studied. In Section \ref{vs_index}, we introduce the estimator of our variance variability index, and derive asymptotic properties. In Section \ref{numerical}, Monte Carlo experiments are conducted to compare the finite sample properties of the different estimators of the OIRF. Oil price and U.S. inflation variables are considered to underline the usefulness of the proposed tools. The proofs are relegated to the Appendix.

\section{Time-varying orthogonal impulse response functions}
\label{impulse}

Following the usual approach for impulse response analysis between variables, consider a vector autoregressive (VAR) model for the series $X_t\in\mathbb{R}^d$:
\begin{eqnarray}
&&{X}_t={A}_{01}{X}_{t-1}+\dots+{A}_{0p}{X}_{t-p}+u_t\label{hetero1}
\end{eqnarray}
where $u_t$ is the error term and the $A_{0i}$'s are the autoregressive parameters matrices, supposed to be  such that $\det {A}(z)\neq 0$ for
all $|z|\! \leq \! 1$, with ${A}(z) \!=\!
I_d-\sum_{i=1}^{p}\!{A}_{0i} z^i$.
Here, the covariance of the system is allowed to vary
in time.  More precisely, the covariance of the process $(u_t)$ is denoted by $\Sigma_t:=G(t/T)G(t/T)'$,  where $r\mapsto G(r),$ $r\in (0,1]$, is a $d\times d-$matrix valued function.
With the rescaling device used by Dahlhaus (1997), the
process $(X_t)$ should be formally written in a triangular form. Herein,  the double subscript is suppressed for notational simplicity.

The specification we consider allows for commonly
observed features as cycles, smooth or abrupt changes for the covariance, and is widely used in the literature (see e.g. Xu and Phillips (2008) and references therein). In particular, the rescaling device is commonly used to describe long-range phenomena (see Cavaliere  and Taylor (2007, 2008) among others).
In practice the lag length $p$ in (\ref{hetero1}) is unknown but can be fixed using the tools proposed in Patilea and Ra\"{\i}ssi (2013) and Ra\"{\i}ssi (2015) under our assumptions.

In the sequel, the model (\ref{hetero1}) is considered re-written as follows:
\begin{eqnarray*}
&& X_t=(\widetilde{X}_{t-1}'\otimes I_d)\vartheta_0+u_t\\&&
u_t=H_t\epsilon_t,
\end{eqnarray*}
where $(\epsilon_t)$ is an iid centered process with $E(\epsilon_t\epsilon_t^\prime)= I_d$ and $$\vartheta_0=\mbox{vec}(A_{01},\dots,A_{0p})$$ is the vector of parameters. Herein the vec($\cdot$) operator consists in stacking the columns of a matrix into a vector. The matrix $H_t$ is the lower triangular matrix of the
Cholesky decomposition of the errors' covariance, that is $\Sigma_t=H_tH_t'$. The matrix $I_d$ is the $d\times d-$identity matrix.
The usual Kronecker product is denoted by $\otimes$ and $\widetilde{X}_{t}=(X_{t}',\dots,X_{t-p+1}')'$. We also define 
\begin{equation}\label{impul}
\Phi_0=I_d,\quad\Phi_i=\sum_{j=1}^i\Phi_{i-j}A_{0j},
 \end{equation}
$i=1,\dots$, with $A_{0j}=0$ for $j>p$. The
$\Phi_i$'s correspond to the coefficients matrices of the infinite moving average representation of $(X_t)$. Under our assumptions the components of the $\Phi_i$'s decrease exponentially fast to zero.



If the errors' covariance $\Sigma$ is assumed constant, then we can define $d\times d$ standard OIRF
\begin{equation}\label{exameq}
\theta(i):=\Phi_iH,\qquad i=1,2,\dots
\end{equation}
where here $H$ is the lower triangular matrix of the
Cholesky decomposition of $\Sigma$. See L\"{u}tkepohl (2005, p59). Let us denote by $\widehat{\vartheta}_{OLS}$ the ordinary least squares (OLS) estimator of the autoregressive parameters
and define $\widehat{\Sigma}$ the OLS estimator of the constant errors covariance matrix.
Using $\widehat{\vartheta}_{OLS}$ and $\widehat{\Sigma}$, it is easy to see that an estimator of $\theta(i)$ can be built. Under standard assumptions, it can be shown that such estimators are consistent, $\sqrt{T}$-asymptotically Gaussian. See L\"{u}tkepohl (2005, p110). However, it clearly appears that the classical OIRF cannot take into account for the time-varying instantaneous effects properly, and may be misleading in our non standard but quite realistic framework.


\subsection{tv-OIRF}

In the framework of the model (\ref{hetero1}), a common alternative to the classical OIRF is
 the time-varying OIRF (tv-OIRF hereafter)
\begin{equation}\label{vartheta}
\theta_{r}(i):=\Phi_iH(r),\qquad i\geq 1,
\end{equation}
\emph{for each $r\in(0,1]$}, and where  $H(r)$ is the lower triangular matrix of the Cholesky decomposition of $\Sigma(r)=G(r)G(r)^\prime $. The parameter $r$ gives the time where the impulse response analysis is conducted. In other words, the counterpart of the usual OIRF in the case of time-varying variance is the two arguments function
$$(r,i)\mapsto \theta_{r}(i),\qquad (r,i]\in (0,1]\times \{1,2,\ldots\}.$$
The form (\ref{vartheta}) implicitly arises when models with constant autoregressive parameters but time varying variance are used to analyse the data (see Bernanke and Mihov (1998a), Stock and Watson (2002) or Xu and Phillips (2008) among others for this kind of models). When the covariance of the errors is constant, for each $i$ the map $r\mapsto \theta_{r}(i)$ is constant, and thus we retrieve the standard case.
Although it is interesting to have a pointwise estimation of the OIRF, in general these maps are not constant and are typically estimated at nonparametric rates, as it will be shown in the following. Primiceri (2005) and Giraitis, Kapetanios and Yates (2018) have provided complete tools for estimating (\ref{vartheta}) in general contexts. As a byproduct of the main results of the paper, we specify the methodological pathway for the pointwise estimation of the OIRF in the important case where the conditional mean is constant and the variance is time-varying.

Some resume of the tv-OIRF through time could be sometimes needed to compare fixed periods. In many cases, this consists in evaluating the differences between pre and post crises situations. In the sequel, we consider two approaches for resuming the tv-OIRF  over a given sub-period. First, we replace the matrix $H(r)$ in equation (\ref{vartheta}) by the lower triangular matrix of the Cholesky decomposition of the \emph{realized variance}, that is the average of the variance, over a given period around $r$. This will yield to what we shall call \emph{approximated OIRF}. Typically, this corresponds to the usual practice which consists in applying the standard method to periods (see e.g. Stock and Watson (2005) or Beetsma and Giuliodori (2012)). Second, keeping in mind that we are looking for a resume of the tv-OIRF, which is tantamount to looking for a resume of integrated $H(r)$ appearing in  equation (\ref{vartheta}), we introduce the \emph{averaged OIRF} that is obtained by replacing the matrix $H(r)$ with the average of the lower triangular matrix of the Cholesky decomposition of $\Sigma(\cdot)$ over a given period around $r$. Both resumes we consider could be estimated  at parametric rates and, considering static or rolling periods, could be used for an analysis of the series. However, as argued in the Introduction, the averaged OIRF should be preferred. Before presenting the approximate and averaged approaches, let us point out that, as usual, resuming the OIRF does not makes the shocks orthogonal pointwise. Note however that such a property is not really needed if we are interested in comparing periods by considering means. 

\subsection{Approximated OIRF}
\label{ols_def}

The usual way to resume the OIRF in presence of a non constant covariance in our framework is to consider the following quantities
\begin{equation}\label{approxorth}
\widetilde{\theta}_{r}^{q}(i)=\Phi_i\widetilde{H}(r),\qquad i\geq 1,
\end{equation}
where $\widetilde{H}(r)$ is the lower triangular matrix of the  Cholesky decomposition of the positive definite matrix $q^{-1}\int_{r-q/2}^{r+q/2}\Sigma(v)dv$ with $0<r-q/2<r+q/2<1$. Again the standard case is retrieved if the covariance structure is assumed constant. If $r$ does not corresponds to a covariance break, we have $\widetilde{\theta}_{r}^{q}(i)\approx\theta_{r}(i)$ for small enough $q$. However, as the periods under study are usually somewhat large, so we are not aiming in reflecting the evolutions of $H(\cdot)$, we will refer to (\ref{approxorth}) as approximate OIRF in the sequel. In short, the approximated OIRF are usually computed to contrast between static periods. For fixed $r$ and $q$, the quantities  $\widetilde{\theta}_{r}^{q}(i)$ could be estimated at parametric rates.

\subsection{Averaged OIRF}
\label{adapti_def}

As argued above, by construction, the approximated OIRF could be misleading in resuming the time-varying $\theta_{r}(i)$ over a period. Given the definition of $\theta_{r}(i)$, a more natural way to approach it would be to average the lower triangular matrix of  the Cholesky decomposition over a window around $r$. We propose a new alternative way to resume the tv-OIRF (\ref{vartheta}) based on the quantities
\begin{equation}\label{imprespint}
\bar{\theta}_{r}^{q}(i):=\Phi_i\bar{H}(r)\:\mbox{where}\:\bar{H}(r):=\frac{1}{q}\int_{r-q/2}^{r+q/2}H(v)dv,\qquad i\geq 1,
\end{equation}
$0<q<1$ is fixed by the practitioner, and $r$ is such that $0<r-q/2<r+q/2<1$.
The standard case is retrieved if the errors covariance is assumed constant. On the other hand if $r$ does not correspond to an abrupt break of the covariance structure, we clearly have $\bar{\theta}_{r}^{q}(i)\approx\theta_{r}(i)$ when $q$ is small. However, as noted above, the averaged OIRF is intended to be applied for a relatively large $q$.

\subsection{Variance variability indices}
\label{index}

In this section we propose an index, that is a scalar, to measure the departure from a constant covariance matrix situation within a given period.
%
%
%
We could write
$$
 \widetilde{\theta}_{r}^{q}(i) = \bar{\theta}_{r}^{q}(i)  I_{r,q}
$$
with
\begin{equation}
\label{index_var_matrix}
I_{r,q} =  \bar H(r)^{-1}\widetilde H(r).
\end{equation}
 Let us define
\begin{equation}
\label{index_var_matrix_b}
i_{r,q} = \color{black}  \left\| I_{r,q} \right\|_2^2 , \color{black}
\end{equation}
where here $\|\cdot\|_2$ denotes the spectral norm of a matrix. In this case, $i_{r,q}$ is equal \color{black} to the square of \color{black} the largest eigenvalue of $I_{r,q}$, which has only real, positive eigenvalues.  By elementary matrix algebra properties, we also have
$$
i_{r,q} =
\max_{a\in\mathbb{R}^d,a\neq 0 } \frac{Var (a^\prime X_{app})}{Var (a^\prime X_{avg})},
$$
where $X_{avg}$ and $X_{app}$ are $d$-dimensional random vectors with variances $\bar H(r) \bar H(r) ^\prime$ and $\widetilde H(r)\widetilde H(r)^\prime $, respectively. In the statistical literature, a quantity like $i_{r,q}$ is usually called the first relative eigenvalue of one matrix (here $\widetilde H(r) \widetilde H(r)^\prime$) with respect to the other matrix (here $\bar H(r)  \bar H(r) ^\prime$).  See, for instance, Flury (1985). By construction, in our context, the eigenvalues of the matrix $\bar H(r)^{-1}\widetilde H(r)$ are   real numbers   larger than or equal to 1, as shown in the following.

The index $i_{r,q}$ is inspired by the OIRF analysis. It is designed to provide a measure of variability through the contrast between two possible definitions of OIRF that coincide in the case of a covariance $\Sigma$ constant over time. Another simple index could be defined as
\begin{equation}\label{index_var_matrix_c}
j_{r,q} =  \left\| \frac{1}{q}\int_{r-q/2}^{r+q/2}\Sigma(v)dv - \bar H(r)\bar H(r)^\prime   \right\|_2 ^{  2 }.
\end{equation}
 By elementary properties of the spectral norm,
$$
j_{r,q}  = \max_{a\in\mathbb{R}^d,\| a \| = 1 } \{Var (a^\prime X_{app}) - Var (a^\prime X_{avg})\}.
$$
The technical assumptions of the paper may be found in Section \ref{sec.assump}.
\begin{lemma}\label{CS_ineq_m}
Under the Assumption \textbf{A1},
\begin{enumerate}
\item $i_{r,q}\geq 1$ and  $i_{r,q}=1$ if and only if   $v\mapsto H(v)$ is constant on $(r-q/2, r+q/2)$;
\item    $j_{r,q}\geq 0$ and \color{black} $j_{r,q}=0$ if and only if   $v\mapsto H(v)$ is constant on $(r-q/2, r+q/2)$
\end{enumerate}
\end{lemma}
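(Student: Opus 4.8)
The plan is to prove both statements by exploiting the definition $\bar H(r) = q^{-1}\int_{r-q/2}^{r+q/2} H(v)\,dv$ together with the integral identity $q^{-1}\int_{r-q/2}^{r+q/2}\Sigma(v)\,dv = q^{-1}\int_{r-q/2}^{r+q/2} H(v)H(v)'\,dv$, which holds because $\Sigma(v)=H(v)H(v)'$. Writing $\widetilde H(r)\widetilde H(r)' = q^{-1}\int H(v)H(v)'\,dv$ and $\bar H(r)\bar H(r)' = \left(q^{-1}\int H(v)\,dv\right)\left(q^{-1}\int H(v)\,dv\right)'$, the key observation is that the difference $\widetilde H(r)\widetilde H(r)' - \bar H(r)\bar H(r)'$ is precisely the "covariance" of the matrix-valued function $H(\cdot)$ under the uniform probability measure on $(r-q/2,r+q/2)$; that is, for any $a\in\mathbb{R}^d$,
\[
a'\widetilde H(r)\widetilde H(r)'a - a'\bar H(r)\bar H(r)'a = \frac{1}{q}\int_{r-q/2}^{r+q/2}\bigl\|H(v)'a\bigr\|^2\,dv - \left\|\frac{1}{q}\int_{r-q/2}^{r+q/2}H(v)'a\,dv\right\|^2,
\]
which is nonnegative by the Cauchy--Schwarz (Jensen) inequality applied coordinatewise to the vector-valued function $v\mapsto H(v)'a$, with equality if and only if $v\mapsto H(v)'a$ is constant on $(r-q/2,r+q/2)$ (using Assumption A1 to guarantee $H(\cdot)$ is continuous, hence the almost-everywhere equality from Cauchy--Schwarz becomes everywhere). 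This immediately gives $\widetilde H(r)\widetilde H(r)' \succeq \bar H(r)\bar H(r)'$ in the positive semidefinite order.

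From this Loewner ordering I would derive statement 1. Since $\bar H(r)$ is lower triangular with positive diagonal (being an average of such matrices — one should note that an integral of lower-triangular matrices with positive diagonal is again lower triangular with positive diagonal, using Assumption A1 to ensure each diagonal entry stays bounded away from zero), it is invertible, and $I_{r,q} = \bar H(r)^{-1}\widetilde H(r)$ satisfies $I_{r,q}I_{r,q}' = \bar H(r)^{-1}\widetilde H(r)\widetilde H(r)'\bar H(r)^{-T} \succeq \bar H(r)^{-1}\bar H(r)\bar H(r)'\bar H(r)^{-T} = I_d$. Hence every eigenvalue of the symmetric positive definite matrix $I_{r,q}I_{r,q}'$ is $\geq 1$, so its largest eigenvalue $i_{r,q} = \|I_{r,q}\|_2^2$ is $\geq 1$. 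For the equality case, $i_{r,q}=1$ forces $I_{r,q}I_{r,q}' = I_d$ (all eigenvalues between $1$ and $1$), which gives $\widetilde H(r)\widetilde H(r)' = \bar H(r)\bar H(r)'$, and by the equality characterization above this holds iff $v\mapsto H(v)'a$ is constant for every $a$, i.e. iff $v\mapsto H(v)$ is constant on $(r-q/2,r+q/2)$. (Alternatively, one observes directly that the real eigenvalues of $I_{r,q}$ are $\geq 1$ because the relative eigenvalue formula $\max_a Var(a'X_{app})/Var(a'X_{avg})\geq 1$ follows from the same Loewner bound, and the minimum relative eigenvalue is also $\geq 1$.)

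Statement 2 is essentially a restatement: $j_{r,q} = \|\widetilde H(r)\widetilde H(r)' - \bar H(r)\bar H(r)'\|_2^2 \geq 0$ trivially, and since the matrix inside is positive semidefinite, $j_{r,q}=0$ iff that matrix is the zero matrix, which by the equality characterization above is equivalent to $v\mapsto H(v)$ being constant on $(r-q/2,r+q/2)$. The main (minor) obstacle I anticipate is the careful handling of the equality cases: one must pass from "Cauchy--Schwarz equality holds for a.e. $v$" to "$H(\cdot)$ is genuinely constant on the open interval," which requires invoking the continuity/regularity of $G(\cdot)$ (hence of $H(\cdot)$, via continuity of the Cholesky map on positive definite matrices) furnished by Assumption A1; and one should be slightly careful that "$H(v)'a$ constant for all $a$" indeed upgrades to "$H(v)$ constant," which is immediate since the columns of $H(v)'$ are recovered by taking $a$ to range over the standard basis.
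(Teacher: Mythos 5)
Your proposal is correct and takes essentially the same route as the paper: the variance (Jensen/Cauchy--Schwarz) identity showing that $\widetilde H(r)\widetilde H(r)'-\bar H(r)\bar H(r)'$ is positive semi-definite with equality iff $H(\cdot)$ is constant on $(r-q/2,r+q/2)$, from which both statements follow (you merely make explicit, via $I_{r,q}I_{r,q}'\succeq I_d$, a step the paper leaves as a ``direct consequence''). One small remark: Assumption A1 only gives piecewise Lipschitz regularity, not global continuity of $H(\cdot)$, so the equality case should be upgraded from almost everywhere to everywhere using piecewise continuity, exactly as the paper does; this does not affect the argument.
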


In our context, for any $0<q<1$, a map $r\mapsto i_{r,q}$ (resp. $r\mapsto j_{r,q}$) constant equal to 1 (resp. 0) means the covariance of $X_t$ is constant in time. For simplicity, in the sequel we will focus on index $ i_{r,q}$ which is invariant to multiplication of the errors' covariance matrix by a positive constant.  Large  values of $i_{r,q}$  indicates a large variability in the variance of the vector series in a given period of interest.

\section{OIRF estimates when the variance is varying}\label{kernel_sec}
\setcounter{equation}{0}

Let us first briefly recall the estimation methodology for heteroscedastic VAR models of Patilea and Ra\"{\i}ssi (2012), Section 4. 
{First, we consider the OLS estimator of the autoregressive parameters}

\begin{equation}\label{estOLS}
{\widehat{\vartheta}_{OLS}=\left\{\sum_{t=1}^{T}\widetilde{X}_{t-1}\widetilde{X}_{t-1}'\otimes I_d\right\}^{-1}\mbox{vec}\left(\sum_{t=1}^{T} X_t\widetilde{X}_{t-1}'\right).}
\end{equation}
\color{black} {Patilea and Ra\"{\i}ssi (2012) showed that }
\begin{equation}\label{res2ols}
{\sqrt{T} (\widehat{\vartheta}_{OLS}-\vartheta_0)\Rightarrow
\mathcal{N}(0,\Lambda_3^{-1}\Lambda_2\Lambda_3^{-1}),}
\end{equation}
where
\begin{equation}\label{lamb2_3}
{\Lambda_2=\!\int_0^1
\!\sum_{i=0}^{\infty}\!\left\{\!\tilde{\Phi}_i(\mathbf{1}_{p\times
p}\otimes\Sigma(r))\tilde{\Phi}_i'\right\}\!\otimes\!\Sigma(r)dr,}
\quad
{\Lambda_3=\! \int_0^1 \!
\sum_{i=0}^{\infty}\!\left\{\!\tilde{\Phi}_i(\mathbf{1}_{p\times
p}\otimes\Sigma(r))\tilde{\Phi}_i'\right\}\!\otimes\! I_d\:dr,}
\end{equation}
with $\mathbf{1}_{p\times p}$ the $p\times p$ matrix with
components equal to one, and $\widetilde{\Phi}_{i}$ is a block diagonal matrix $\widetilde{\Phi}_{i}:=diag\left(\Phi_i,\Phi_{i-1},\dots,\Phi_{i-p+1}\right)$.
The matrices $\Phi_i$, $i\geq 0$, are defined in equation (\ref{impul}), and $\Phi_i=0$ for $i<0$.

Next, let us consider kernel estimators of the time-varying covariance matrix. \color{black}
Denote by $A\odot B$ the Hadamard  (entrywise) product of two matrices of same dimension $A$ and $B$. For $t=1,\dots,T$, define the symmetric matrices
\begin{equation}\label{Sig_NP}
\widehat{\Sigma}_t=\sum_{j=1}^{T} w_{tj} \odot \widehat{u}_j\widehat{u}_j^\prime ,
\end{equation}
where the $\widehat{u}_t = X_t - (\widetilde{X}_{t-1}^\prime \otimes I_d) \widehat{\vartheta}_{OLS}$ are the OLS residuals. The $(k,l)-$element, $k\leq l$, of the $d\times d$ matrix of weights $w_{tj}$ is given by
$$w_{tj}(b_{kl})= (Tb_{kl})^{-1} K\left((t-j)/(Tb_{kl})\right),$$
with $b_{kl}$ the bandwidth and  $K(\cdot)$ a nonnegative kernel function. \color{black} For any $r\in(0,1]$, the value $\Sigma(r)$ of the covariance function could be estimated by $\widehat{\Sigma}_{[rT]}$. (Here and in the following, for a number $a$, we denote by $[a]$ the integer part of $a$, that is the largest integer number smaller or equal to $a$.)
For all $1\leq k\leq l\leq d$ the bandwidth $b_{kl}$ belongs to a range $\mathcal{B}_T = [c_{min} b_T, c_{max} b_T]$ with $c_{min}, c_{max}>0$ some constants and $b_T \downarrow 0$ at a suitable rate specified below. In practice the bandwidths $b_{kl}$ can be chosen by minimization of a cross-validation criterion. \color{black} This estimator is a version of the Nadaraya-Watson estimator considered by Patilea and Ra\"{\i}ssi (2012). Here, we replace the denominator by the target density, that is the uniform density on the unit interval which is constant equal to 1.  \color{black} A regularization term may be needed to ensure that the matrices $\widehat{\Sigma}_t$ are positive definite (see Patilea and Ra\"{\i}ssi (2012)). Another simple way to circumvent the problem is to select a unique bandwidth $b=b_{kl}$, for all $1\leq k,l\leq d$.

With at hand an estimator of  $\Sigma(r)$, we could define $\widehat{H}_{[rT]}$, the lower triangular matrix of the Cholesky decomposition of $\widehat{\Sigma}_{[rT]}$, as the estimator of $H(r)$. Below, we establish the convergence rates of these nonparametric estimates. For $r\in(0,1)$, let $\Sigma(r-) = \lim_{\tilde r \uparrow r} \Sigma (\tilde r)$ and  $\Sigma(r+) = \lim_{\tilde r \downarrow r} \Sigma (\tilde r)$. Moreover, by definition let $\Sigma(1+)=0$. Let $H(r-)$ and  $H(r+)$ be defined similarly.
In the following, $\|\cdot\|_F$ is the Frobenius norm, while $\sup_{\mathcal{B}_T}$ denotes the supremum with respect the bandwidths $b_{kl}$ in $\mathcal{B}_T$.

\vspace{0.3 cm}

\begin{proposition}\label{prop_tv_oirf}
Assume that  Assumptions {\bf A0}-{\bf A2} in the Appendix hold true. Then, for any $r\in(0,1]$,
$$
\sup_{\mathcal{B}_T} \left\| \widehat{\Sigma}_{[Tr]} - \frac{1}{2} \left\{\Sigma(r-) + \Sigma(r+)\right\}\right\|_F  = O_{\mathbb{P}}\left( b_T+\sqrt{\log(T)/Tb_T} \right)
$$
and
$$
\sup_{\mathcal{B}_T} \left\| \widehat{H}_{[Tr]} -  \frac{1}{2}  H_{\pm}(r)  \right\|_F  = O_{\mathbb{P}}\left( b_T+\sqrt{\log(T)/Tb_T} \right),
$$
where $H_{\pm}(r)$ is the lower triangular matrix of the Cholesky decomposition of $\Sigma(r-) + \Sigma(r+)$.
\end{proposition}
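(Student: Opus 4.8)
The plan is to establish the first display by the usual bias--variance decomposition of a kernel estimator, keeping careful track of (i) the effect of replacing the true innovations by the OLS residuals and (ii) the supremum over the bandwidths $b_{kl}\in\mathcal{B}_T$; the second display then follows from smoothness of the Cholesky map. Because the weighting in \eqref{Sig_NP} is entrywise, the $(k,l)$ entry of $\widehat{\Sigma}_{[Tr]}$ depends only on $b_{kl}$, so it suffices to work one entry at a time with the scalar statistic $\widehat{\sigma}_{[Tr],kl}(b)=\sum_{j=1}^{T}w_{[Tr]j}(b)\,\widehat{u}_{j,k}\widehat{u}_{j,l}$, $b\in\mathcal{B}_T$, and to recombine the $d(d+1)/2$ bounds at the end through $\|\cdot\|_F$. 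Throughout write $\delta_T:=b_T+\sqrt{\log(T)/(Tb_T)}$.

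First I would dispose of the estimation error in $\widehat{\vartheta}_{OLS}$. Writing $\widehat{u}_j=u_j-(\widetilde{X}_{j-1}'\otimes I_d)(\widehat{\vartheta}_{OLS}-\vartheta_0)$ and expanding $\widehat{u}_{j,k}\widehat{u}_{j,l}$ produces cross terms involving $\sum_j w_{[Tr]j}(b)\,u_{j,\cdot}\,\widetilde{X}_{j-1,\cdot}$ and a quadratic term involving $\sum_j w_{[Tr]j}(b)\,\widetilde{X}_{j-1}\widetilde{X}_{j-1}'$. The first is a kernel-weighted sum of martingale differences (since $E(u_j\otimes\widetilde{X}_{j-1})=0$), hence $O_\mathbb{P}\big((Tb_T)^{-1/2}\big)$ up to logarithmic factors, uniformly over $b\in\mathcal{B}_T$; multiplied by $\widehat{\vartheta}_{OLS}-\vartheta_0=O_\mathbb{P}(T^{-1/2})$ from (\ref{res2ols}) this is $o_\mathbb{P}(\delta_T)$. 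The quadratic term is $O_\mathbb{P}(1)$ uniformly over $b\in\mathcal{B}_T$ (its mean is bounded by stability and boundedness of $\Sigma(\cdot)$), and multiplied by $\|\widehat{\vartheta}_{OLS}-\vartheta_0\|^2=O_\mathbb{P}(T^{-1})$ is again negligible. Hence, up to a uniform $o_\mathbb{P}(\delta_T)$, it is enough to analyse the oracle statistic $\sum_j w_{[Tr]j}(b)\,u_{j,k}u_{j,l}$.

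Next, using $u_j=H(j/T)\epsilon_j$ I would split $u_ju_j'=\Sigma(j/T)+H(j/T)\big(\epsilon_j\epsilon_j'-I_d\big)H(j/T)'$. The deterministic part $\sum_j w_{[Tr]j}(b)\,\Sigma(j/T)_{kl}$ is a Riemann-type approximation: by Assumption \textbf{A1} the map $v\mapsto\Sigma(v)$ is piecewise Lipschitz with finitely many jumps, and by the symmetry and compact support of $K$ in Assumption \textbf{A2} the window around $r$ splits its mass evenly to the left and to the right of $r$, which yields $\tfrac12\{\Sigma(r-)+\Sigma(r+)\}_{kl}+O(b_T)$ uniformly over $b\in\mathcal{B}_T$ (at a continuity point this is just $\Sigma(r)_{kl}+O(b_T)$). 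The stochastic part $\sum_j w_{[Tr]j}(b)\big[H(j/T)(\epsilon_j\epsilon_j'-I_d)H(j/T)'\big]_{kl}$ is, for fixed $b$, a sum of independent mean-zero variables with $\sum_j w_{[Tr]j}(b)^2\asymp(Tb_T)^{-1}$ and finite higher moments (the moment condition in \textbf{A0}), so Bernstein's inequality gives a deviation of order $\sqrt{\log(T)/(Tb_T)}$ with probability at least $1-T^{-c}$. To pass to $\sup_{b\in\mathcal{B}_T}$ I would cover $\mathcal{B}_T=[c_{\min}b_T,c_{\max}b_T]$ by a grid of $O(T^{2})$ points, bound the oscillation of $b\mapsto w_{tj}(b)$ between grid points via its Lipschitz dependence on $b$ (from \textbf{A2}), and take a union bound. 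Combining the deterministic and stochastic parts gives the first display; I expect this uniform maximal inequality, together with getting the constant $\tfrac12$ right at a jump point, to be the main technical obstacle.

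For the second display, on the cone of $d\times d$ symmetric positive definite matrices the Cholesky factorization $M\mapsto\mathrm{chol}(M)$ is real-analytic, hence locally Lipschitz. By \textbf{A1} the limit $\tfrac12\{\Sigma(r-)+\Sigma(r+)\}$ is positive definite, and by the first display $\widehat{\Sigma}_{[Tr]}$ lies in a fixed neighbourhood of it (inside the p.d. cone) with probability tending to one, uniformly over $\mathcal{B}_T$; on that event $\|\widehat{H}_{[Tr]}-\mathrm{chol}(\tfrac12\{\Sigma(r-)+\Sigma(r+)\})\|_F\le C\,\|\widehat{\Sigma}_{[Tr]}-\tfrac12\{\Sigma(r-)+\Sigma(r+)\}\|_F$. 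Since $\mathrm{chol}(cM)=\sqrt{c}\,\mathrm{chol}(M)$, one has $\mathrm{chol}(\tfrac12\{\Sigma(r-)+\Sigma(r+)\})=\tfrac{1}{\sqrt2}H_{\pm}(r)$, which is the intended limit, and the rate of the first display carries over verbatim. \qed
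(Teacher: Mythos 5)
Your overall architecture coincides with the paper's: you remove the OLS-residual effect at the parametric rate, split the oracle statistic into a deterministic part and a centered stochastic part, obtain the one-sided average $\tfrac12\{\Sigma(r-)+\Sigma(r+)\}$ with an $O(b_T)$ bias via the symmetry of $K$ and the piecewise Lipschitz property from \textbf{A1}, and transfer the rate to $\widehat H_{[Tr]}$ through a local Lipschitz/perturbation bound for the Cholesky map (the paper cites Theorem 3.1 of Chang and Stehl\'e (2010) for this last step; your appeal to real-analyticity on the positive definite cone serves the same purpose).

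The genuine gap is in your uniform deviation bound for the stochastic term. You invoke Bernstein's inequality to get a deviation of order $\sqrt{\log(T)/(Tb_T)}$ with probability at least $1-T^{-c}$, and then a union bound over an $O(T^2)$ grid in $\mathcal{B}_T$. Under Assumption \textbf{A0} the innovations have only finite moments of order $\mu>8$, so the summands $\bigl[H(j/T)(\epsilon_j\epsilon_j'-I_d)H(j/T)'\bigr]_{kl}$ are unbounded with merely polynomial tails (the products $u_{j,k}u_{j,l}$ have moments of order $\mu/2>4$ only); Bernstein's exponential tail bound is therefore not directly available, and without it the union bound over polynomially many grid points does not go through. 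One can repair this by truncating at a polynomially growing level, applying Bernstein to the truncated part and controlling the tail part by moment bounds, but that is exactly the bookkeeping the paper avoids: it applies the local maximal inequality of van der Vaart and Wellner (2011, Theorem 3.1) to the class $\{(r,u^{(k)},u^{(l)})\mapsto K(a-r/cb_T)\,u^{(k)}u^{(l)}\}$, whose polynomial uniform-entropy complexity follows from the monotonicity of $K$ on each half-line, with the unbounded envelope $Cu^{(k)}u^{(l)}$ handled in $L_p$ with $p=(\mu-4)/(\mu-8)>1$ and $\delta^2\asymp b_T$, yielding $\sup_{\mathcal{B}_T}\|\Sigma_{1,[Tr]}\|_F=O_{\mathbb P}\bigl(\sqrt{\log(T)/Tb_T}\bigr)$ in one stroke. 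You correctly anticipated that the uniformity in the bandwidth is the crux, but the specific tool you chose would fail as stated under the paper's moment assumptions.

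A minor but worthwhile observation: your scaling identity $\mathrm{chol}\bigl(\tfrac12 M\bigr)=2^{-1/2}\,\mathrm{chol}(M)$ is correct, so the natural centering for $\widehat H_{[Tr]}$ is $2^{-1/2}H_{\pm}(r)$ rather than the factor $\tfrac12 H_{\pm}(r)$ appearing in the statement (and in the paper's own perturbation step); your version is the one consistent with the first display, even though it does not literally match the statement as written.
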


\vspace{0.3 cm}

The convergence rate of $\widehat{\Sigma}_{[Tr]}$ and $\widehat{H}_{[Tr]}$ is given by a bias term, with the standard rate one could obtain when estimating Lipschitz continuous functions nonparametrically, and a variance term which is multiplied by a logarithm factor, the price to pay for the uniformity with respect to the bandwidth.

\color{black}

The above estimation of the non constant covariance structure \color{black} could be \color{black}  used to define the adaptive least squares (ALS) estimator

\begin{equation}\label{GLS}
\widehat{\vartheta}_{ALS}=\widetilde{\Sigma}_{\widetilde{\underline{X}}}
^{-1}\mbox{vec}\:\left(\widetilde{\Sigma}_{\underline{X}}\right),
\end{equation}
where
$$\widetilde{\Sigma}_{\widetilde{\underline{X}}}=T^{-1}\sum_{t=1}^{T}\widetilde{X}_{t-1}\widetilde{X}_{t-1}'
\otimes\widehat{\Sigma}_t^{-1}
\quad\mbox{and}\quad\widetilde{\Sigma}_{\underline{X}}=T^{-1}
\sum_{t=1}^{T}\widehat{\Sigma}_t^{-1}X_t\widetilde{X}_{t-1}'.$$
\color{black} By minor adaptation of the proofs in Patilea and Ra\"{\i}ssi (2012), in order to take into account the simplified change in the definition of the weights $w_{tj} $, it could be shown that, uniformly with respect to $b\in\mathcal{B}_T$,  $\widehat{\vartheta}_{ALS}$ is consistent in probability and \color{black}
\begin{equation*}
{\sqrt{T}(\widehat{\vartheta}_{ALS}-\vartheta_0)\Rightarrow
\mathcal{N}(0,\Lambda_1^{-1}),}
\end{equation*}
where
\begin{equation}\label{lambda_1}
{\Lambda_1=\int_0^1
\sum_{i=0}^{\infty}\left\{\widetilde{\Phi}_i(\mathbf{1}_{p\times
p}\otimes\Sigma(r))\widetilde{\Phi}_i'\right\}\otimes\Sigma(r)^{-1}dr}.
\end{equation}
  Patilea and Ra\"{\i}ssi (2012) showed that $\Lambda_3^{-1}\Lambda_2\Lambda_3^{-1} - \Lambda_1 $ is a positive semi-definite matrix.
 \color{black}

\subsection{The tv-OIRF nonparametric estimator}

In the context of model (\ref{hetero1}), the natural way to build estimates of the time-varying OIRF defined in equation (\ref{vartheta}) is to plugin estimates of the $\Phi_i$ and $H(r)$. For estimating $\Phi_i$  we use $\widehat{\Phi}_i^{als}$ which are obtained as in (\ref{impul}), but considering the ALS estimator of the $A_{0i}$'s. By the arguments used in the proof of Proposition \ref{resapprox} below, this estimator has the $O_{\mathbb{P}}(1/\sqrt{T})$ rate of convergence. Using the nonparametric estimator of $H(r)$ we introduced above, we obtain what we will call the \emph{ALS estimator of} $\theta_r(i)$, that is
\begin{equation}\label{av1als1}
\widehat{\theta}_{r}(i):=\widehat{\Phi}_i^{als}\widehat{H}_{[rT]},\qquad r\in(0,1].
\end{equation}
Even if $\widehat{\Phi}_i^{als}$ has an improved variance compared to the estimator one would obtain using the OLS
estimator of the $A_{0i}$'s, the estimator $\widehat{\theta}_{r}(i)$ still inherits the nonparametric rate of convergence of   $\widehat{H}_{[rT]}$ described in Proposition \ref{prop_tv_oirf}. Hence, analyzing the variations of the estimated curves $r\mapsto\widehat \theta_{r}(i)$, for various $i$,  suffers from lower, nonparametric convergence rates.  In section \ref{adapti} we propose to use instead of $\widehat \theta_{r}(i)$ averages over the values in a neighborhood of $r$, that is a window containing $r$. In particular, this allows to recover parametric rates of convergence of the estimators. In practice, this interval correspond to some period of interest.

\subsection{Approximated orthogonal impulse response function estimates}
\label{ols}

The results of this part are only stated as they are direct consequences of arguments in
\color{black}
Patilea and Ra\"{\i}ssi (2012)
\color{black}
and standard techniques (see L\"{u}tkepohl (2005)). Let the usual estimator of (\ref{approxorth}),
\begin{equation}\label{olsapproxhist}
\hat{\widetilde{\theta}}_{r}^{q}(i):=\widehat{\Phi}_i^{ols}  
 \widehat{\widetilde H}(r), \color{black}
\end{equation}
where $\widehat{\widetilde H}(r)$ is the lower triangular matrices of the Cholesky decomposition of
\begin{equation}\label{H_trad2}
\widehat S_T (r) = \frac{1}{[qT]+1}\sum_{k=- [qT/2]}^{[qT/2]}\widehat{u}_{[rT]-k}\widehat{u}_{[rT]-k}',
\end{equation}
with $\widehat{u}_{[rT]-k}$ the OLS residuals  {and $\widehat{\Phi}_i^{ols}$ are the estimators of the MA coefficients obtained from the OLS estimators of the autoregressive parameters.}
Recall that (\ref{olsapproxhist}) is used to evaluate the OIRF in the standard homoscedastic case (see L\"{u}tkepohl (2005) Section 3.7), but is also commonly considered to evaluate tv-OIRF in static periods. The expression (\ref{olsapproxhist}) is suitable at least asymptotically, \color{black} since by the proof Lemma \ref{lemma_aoirf} below
\begin{multline}\label{H_trad}
\frac{1}{[qT]+1}\sum_{k=- [qT/2]}^{[qT/2]}\widehat{u}_{[rT]-k}\widehat{u}_{[rT]-k}'
=  
 \frac{1}{[qT]+1}\sum_{k=- [qT/2]}^{[qT/2]}{u}_{[rT]-k}{u}_{[rT]-k}' +o_{\mathbb{P}}(1/\sqrt{T})
\\ =\frac{1}{q}\int_{r-q/2}^{r+q/2}\Sigma(v)dv
  +O_{\mathbb{P}}(1/\sqrt{T}).
\end{multline}


In order to specify the asymptotic behavior of $\hat{\widetilde{\theta}}_{r}^{q}(i)$, we first state a result which can be proved using similar arguments to those of Lemma 7.4 of Patilea and Ra\"{\i}ssi (2010). Let $\widehat{\zeta}_t:=\mbox{vech}\left(\widehat{u}_t\widehat{u}_t'\right)$, $\zeta_t:=\mbox{vech}\left(u_tu_t'\right)$
 and $\Gamma_t:=\mbox{vech}(\Sigma(t/T))=\mbox{vech}(\Sigma_t)$, where the vech operator consists in stacking the elements on and below the main diagonal of a square matrix. Define
$\overline{\Gamma}(r):=\mbox{vech}\left(\color{black} q^{-1} \color{black} \int_{r-q/2}^{r+q/2}\Sigma(v)dv\right)$
  and $\widehat{\overline{\Gamma}}(r):=([qT]+1)^{-1}\sum_{k=-[qT/2]}^{[qT/2]}\widehat{\zeta}_{[rT]-k}$ for $r<1$. \color{black} Introduce also
the functions $\Gamma(\cdot)$ and $\Delta(\cdot)$ given by $\Gamma(\cdot)=\mbox{vech}(\Sigma(\cdot))$ and $\Delta(t/T)=E(\zeta_t\zeta_t')$.

\vspace{0.3 cm}

\begin{lemma}\label{lemma_aoirf}
Under the assumptions {\bf A0}-{\bf A3} in the Appendix,
we have
\begin{equation}\label{res0}
\color{black} \sqrt{T} \color{black} \left(
                 \begin{array}{c}
                   \color{black} \widehat{\vartheta}_{OLS} \color{black} -\vartheta_0 \\
                   \widehat{\overline{\Gamma}}(r)-\overline{\Gamma}(r) \\
                 \end{array}
               \right)\Rightarrow
\mathcal{N}\left(0,\left(
                \begin{array}{cc}
                  \Lambda_3^{-1}\Lambda_2\Lambda_3^{-1} & 0 \\
                  0 & \Omega(r) \\
                \end{array}
              \right)\right),
\end{equation}
with $\widehat{\vartheta}_{OLS}$  defined in \eqref{estOLS}, $\Lambda_2$, $\Lambda_3$ defined in \eqref{lamb2_3} and
$$\Omega(r)=\frac{1}{q}\int_{r-q/2}^{r+q/2}\{\Delta(v)-\Gamma(v)\Gamma(v)' \} dv.$$
\end{lemma}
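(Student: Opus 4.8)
The plan is to handle the two blocks of~\eqref{res0} separately and then upgrade to joint convergence with a block-diagonal limit. The first block is already available: $\sqrt T(\widehat{\vartheta}_{OLS}-\vartheta_0)\Rightarrow\mathcal N(0,\Lambda_3^{-1}\Lambda_2\Lambda_3^{-1})$ is exactly~\eqref{res2ols} of Patilea and Ra\"{\i}ssi (2012), and it rests on the linearization
\[
\sqrt T(\widehat{\vartheta}_{OLS}-\vartheta_0)=\Big\{\tfrac1T\sum_{t}\widetilde X_{t-1}\widetilde X_{t-1}'\otimes I_d\Big\}^{-1}\tfrac{1}{\sqrt T}\sum_{t}(\widetilde X_{t-1}\otimes I_d)u_t+o_{\mathbb P}(1),
\]
with $T^{-1}\sum_t\widetilde X_{t-1}\widetilde X_{t-1}'\otimes I_d\to\Lambda_3$ and $T^{-1/2}\sum_t(\widetilde X_{t-1}\otimes I_d)u_t\Rightarrow\mathcal N(0,\Lambda_2)$ by a martingale CLT. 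So the genuinely new points are the window average $\widehat{\overline\Gamma}(r)$ and the vanishing of the off-diagonal block.

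The first step is to remove the estimation error carried by the residuals. Writing $\widehat u_t=u_t-(\widetilde X_{t-1}'\otimes I_d)(\widehat{\vartheta}_{OLS}-\vartheta_0)$ and expanding $\widehat u_t\widehat u_t'$, then applying $\mathrm{vech}$ and averaging over $k=-[qT/2],\dots,[qT/2]$, the difference $\widehat{\overline\Gamma}(r)-([qT]+1)^{-1}\sum_k\zeta_{[rT]-k}$ equals a ``cross'' term involving $([qT]+1)^{-1}\sum_k\mathrm{vech}\{u_{[rT]-k}(\widehat{\vartheta}_{OLS}-\vartheta_0)'(\widetilde X_{[rT]-k-1}\otimes I_d)\}$ plus its transpose, and a ``quadratic'' term involving $(\widehat{\vartheta}_{OLS}-\vartheta_0)'\{([qT]+1)^{-1}\sum_k\widetilde X_{[rT]-k-1}\widetilde X_{[rT]-k-1}'\}(\widehat{\vartheta}_{OLS}-\vartheta_0)$. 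Using $\widehat{\vartheta}_{OLS}-\vartheta_0=O_{\mathbb P}(T^{-1/2})$, the fact that $([qT]+1)^{-1}\sum_k\widetilde X_{[rT]-k-1}\widetilde X_{[rT]-k-1}'=O_{\mathbb P}(1)$, and the bound $([qT]+1)^{-1}\sum_k(\widetilde X_{[rT]-k-1}\otimes I_d)u_{[rT]-k}=O_{\mathbb P}((qT)^{-1/2})$ for a martingale partial sum over $\approx qT$ terms (controlled exactly as in Lemma~7.4 of Patilea and Ra\"{\i}ssi (2010)), the cross term is $O_{\mathbb P}(T^{-1}q^{-1/2})$ and the quadratic term is $O_{\mathbb P}(T^{-1})$; both are $o_{\mathbb P}(T^{-1/2})$, so $\sqrt T(\widehat{\overline\Gamma}(r)-\overline\Gamma(r))=\sqrt T\{([qT]+1)^{-1}\sum_k\zeta_{[rT]-k}-\overline\Gamma(r)\}+o_{\mathbb P}(1)$.

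The second step separates the deterministic part. Since $E\zeta_t=\Gamma(t/T)$, write $([qT]+1)^{-1}\sum_k\zeta_{[rT]-k}-\overline\Gamma(r)$ as $([qT]+1)^{-1}\sum_k\{\zeta_{[rT]-k}-\Gamma(([rT]-k)/T)\}+\big\{([qT]+1)^{-1}\sum_k\Gamma(([rT]-k)/T)-\overline\Gamma(r)\big\}$. Under Assumption~\textbf{A1} the map $v\mapsto\Gamma(v)$ is Lipschitz off finitely many jumps, so the second bracket is a Riemann-sum error of order $O(1/T)$, which vanishes after multiplication by $\sqrt T$. For the first bracket, $\zeta_t=\mathrm{vech}(H(t/T)\epsilon_t\epsilon_t'H(t/T)')$ depends on $\epsilon_t$ only, so the summands $\zeta_{[rT]-k}-\Gamma(([rT]-k)/T)$, $k=-[qT/2],\dots,[qT/2]$, form a triangular array of row-wise independent, centred vectors with covariances $\Delta(([rT]-k)/T)-\Gamma(([rT]-k)/T)\Gamma(([rT]-k)/T)'$; the Lindeberg (or Lyapunov) condition follows from the moment Assumption~\textbf{A3} and the boundedness of $H(\cdot)$, and a Riemann-sum identification of the limiting variance gives $\sqrt T\{([qT]+1)^{-1}\sum_k\zeta_{[rT]-k}-\overline\Gamma(r)\}\Rightarrow\mathcal N(0,\Omega(r))$.

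The last step is the joint convergence. I would stack the two asymptotically linear representations into a single array that is a martingale difference with respect to $\mathcal F_t=\sigma(\epsilon_s,\ s\le t)$, the $t$-th term being $\big(\Lambda_3^{-1}(\widetilde X_{t-1}\otimes I_d)u_t,\ c_T\,\mathbf 1\{[rT]-[qT/2]\le t\le[rT]+[qT/2]\}\,(\zeta_t-\Gamma(t/T))\big)$ with $c_T=T/([qT]+1)$: each entry is a martingale difference because $E[u_t\mid\mathcal F_{t-1}]=0$ and $E[\zeta_t-\Gamma(t/T)\mid\mathcal F_{t-1}]=0$. A multivariate martingale CLT (Cram\'er--Wold plus a scalar martingale CLT) then shows that $T^{-1/2}$ times this sum is asymptotically centred Gaussian with covariance equal to the limit of $T^{-1}\sum_t$ of the conditional second moments; the diagonal blocks reproduce $\Lambda_3^{-1}\Lambda_2\Lambda_3^{-1}$ and $\Omega(r)$ as above, and the off-diagonal block is the limit of $c_T T^{-1}\sum_{t=[rT]-[qT/2]}^{[rT]+[qT/2]}\Lambda_3^{-1}(\widetilde X_{t-1}\otimes I_d)\,E[u_t(\zeta_t-\Gamma(t/T))'\mid\mathcal F_{t-1}]$. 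Since $E[u_t(\zeta_t-\Gamma(t/T))'\mid\mathcal F_{t-1}]$ is a deterministic matrix (a function of $H(t/T)$ and of the third moments of $\epsilon_t$) and $E[\widetilde X_{t-1}]=0$, this average has mean zero and variance of order $O(1/(qT))$, hence tends to $0$ in probability, which yields the block-diagonal limit in~\eqref{res0}. I expect the main obstacle to be the bookkeeping around the two different normalizations (the global $\sqrt T$ rate against the effective $\sqrt{[qT]+1}$ rate of the window average, together with the two Riemann-sum approximations) and the uniform-in-$t$ verification of the Lindeberg condition under the unconditional heteroscedasticity; the asymptotic independence of the two blocks is comparatively straightforward, since it reduces to $E[\widetilde X_{t-1}]=0$.
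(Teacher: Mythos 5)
Your proposal is correct and follows the same architecture as the paper's proof: eliminate the residual effect so that $\widehat{u}_t\widehat{u}_t'$ can be replaced by $u_tu_t'$ at the cost of $o_{\mathbb P}(1/\sqrt T)$ (the paper does this via a mean-value expansion in $\vartheta$ rather than your explicit cross/quadratic decomposition, but the two are interchangeable), then stack the OLS score and the centred window sums into a single martingale-difference array and apply a Lindeberg CLT together with Slutsky, identifying the diagonal blocks from Patilea and Ra\"{\i}ssi (2012) and (2010, Lemmas 7.1--7.4). The one place where you genuinely diverge is the off-diagonal block: the paper obtains block-diagonality directly from the assumption $E\big(\epsilon_t^{(i)}\epsilon_t^{(j)}\epsilon_t^{(k)}\big)=0$ (stated in {\bf A0}, not in an ``{\bf A3}'' --- no such assumption is listed, and the moment bound $\mu>8$ you need for Lindeberg/Lyapunov is also in {\bf A0}), which makes the conditional cross-covariance $(\widetilde X_{t-1}\otimes I_d)E[u_t\zeta_t']$ vanish identically; you instead keep $E[u_t\zeta_t']$ possibly nonzero and kill the cross term through $E[\widetilde X_{t-1}]=0$ plus a variance bound of order $O(1/(qT))$ for the windowed average of the weakly dependent, mean-zero terms $(\widetilde X_{t-1}\otimes I_d)E[u_t\zeta_t']$. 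Your route is valid (the exponential decay of the $\Phi_i$'s gives the summable autocovariances you implicitly use) and is in fact slightly more general for this intercept-free VAR, since it shows the zero-third-moment condition is a convenience rather than a necessity for the block-diagonal limit in this lemma; the paper's route is shorter given that the condition is assumed anyway, and is the one the authors rely on when they remark that dropping it would introduce extra covariance terms in settings where the regressors are not centred.
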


\vspace{0.3 cm}

Now, define the commutation matrix $K_d$ such that $K_d\mbox{vec}(G)=\mbox{vec}(G')$, and the elimination matrix $L_d$ such that $\mbox{vech}(G)=L_d\mbox{vec}(G)$ for any square matrix $G$ of dimension $d\times d$. 
{Introduce the $pd\times pd$ matrix}
\begin{equation}\label{calA}
{\mathbb{A}=\left(
               \begin{array}{cccc}
                 A_{01} & \dots & \dots & A_{0p} \\
                 I_d & 0 & \dots & 0 \\
                 0 & \ddots & 0 & \vdots \\
                 0 & 0 & I_d & 0 \\
               \end{array}
             \right)}
\end{equation}
{and the $d\times pd$-dimensional matrix $J=(I_d,0,\dots,0)$}. We are in position to state the asymptotic behavior of the classical approximated OIRF estimator. Note that this result can be obtained using the same arguments of L\"{u}tkepohl (2005), Proposition 3.6, together with (\ref{res0}).

\vspace{0.3 cm}

\begin{proposition}\label{resapprox}
Under the Assumptions {\bf A0}-{\bf A1} in the Appendix,
we have for all $r\in(q/2,1-q/2)$ and as $T\to\infty$
\begin{equation}\label{histapproxasymp}
\sqrt{T}\mbox{vec}\left(\hat{\widetilde{\theta}}_{r}^{q}(i)-\widetilde{\theta}_{r}^{q}(i)\right)
\Rightarrow\mathcal{N}\left(0,C_i(r)\Lambda_3^{-1}\Lambda_2\Lambda_3^{-1}C_i(r)'+D_i(r)\Omega(r)D_i(r)'\right),\:i=0,1,2,...
\end{equation}
where $C_0=0$, $C_i(r)=\left(\widetilde{H}(r)'\otimes I_d\right)\left(\sum_{m=0}^{i-1}J(\mathbb{A}')^{i-1-m}\otimes\Phi_m\right)$, $i=1,2,...$, $\widetilde{H}(r)$ is given in (\ref{approxorth}),
and
$$D_i(r)=\left(I_d\otimes\Phi_i\right)\Xi(r), \:i=0,1,2,...$$
with
$$\Xi(r)=L_d'\left[L_d\left(I_{d^2}+K_{d}\right)\left(\widetilde{H}(r)\otimes I_d\right)L_d'\right]^{-1}.$$
\end{proposition}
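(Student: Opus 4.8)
The plan is to obtain \eqref{histapproxasymp} by a delta-method argument anchored on the joint central limit theorem of Lemma \ref{lemma_aoirf}, following the line of L\"{u}tkepohl (2005), Proposition 3.6, with the homoscedastic covariance estimator used there replaced by the realized-variance estimator whose asymptotics are supplied by \eqref{res0}. The first thing to notice is that $\widehat{\widetilde H}(r)$ is exactly the lower-triangular Cholesky factor of $\widehat S_T(r)$ in \eqref{H_trad2}, and $\mbox{vech}(\widehat S_T(r))=\widehat{\overline\Gamma}(r)$, which by \eqref{H_trad} is consistent for $\overline\Gamma(r)=\mbox{vech}(q^{-1}\int_{r-q/2}^{r+q/2}\Sigma(v)dv)$, a positive definite matrix under Assumption \textbf{A1}; hence with probability tending to one $\widehat S_T(r)$ is positive definite and $\widehat{\widetilde H}(r)$ is well defined, and its Cholesky factor depends smoothly on $\widehat{\overline\Gamma}(r)$. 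Thus $\hat{\widetilde\theta}_r^q(i)=\widehat\Phi_i^{ols}\widehat{\widetilde H}(r)$ is a smooth function of the two asymptotically jointly Gaussian blocks $\widehat\vartheta_{OLS}$ (entering through $\widehat\Phi_i^{ols}$) and $\widehat{\overline\Gamma}(r)$ (entering through $\widehat{\widetilde H}(r)$), and I would linearize each block separately.

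For the autoregressive block I would use the companion-form identity $\Phi_i=J\mathbb{A}^iJ'$ together with the classical differentiation of $i\mapsto\Phi_i$ with respect to $\vartheta$ (L\"{u}tkepohl (2005), Section 3.6), giving $\partial\,\mbox{vec}(\Phi_i)/\partial\vartheta'=\sum_{m=0}^{i-1}J(\mathbb{A}')^{i-1-m}\otimes\Phi_m$ for $i\geq 1$ and $0$ for $i=0$. Combined with the $\sqrt T$-asymptotic normality of $\widehat\vartheta_{OLS}$ from \eqref{res0}, this yields $\sqrt T\,\mbox{vec}(\widehat\Phi_i^{ols}-\Phi_i)=(\sum_{m=0}^{i-1}J(\mathbb{A}')^{i-1-m}\otimes\Phi_m)\,\sqrt T(\widehat\vartheta_{OLS}-\vartheta_0)+o_{\mathbb P}(1)$, and in particular $\widehat\Phi_i^{ols}$ (and, by the same argument applied to $\widehat\vartheta_{ALS}$, $\widehat\Phi_i^{als}$) is $O_{\mathbb P}(1/\sqrt T)$-consistent.

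For the covariance block the key auxiliary step is to compute the Jacobian of the map sending $\mbox{vech}$ of a positive definite matrix to $\mbox{vech}$ of its lower-triangular Cholesky factor. Differentiating the identity $\widetilde H(r)\widetilde H(r)'=q^{-1}\int_{r-q/2}^{r+q/2}\Sigma(v)dv$ and using $\mbox{vec}(dH)=L_d'\mbox{vech}(dH)$ for lower-triangular $dH$, $\mbox{vech}(\cdot)=L_d\,\mbox{vec}(\cdot)$ on symmetric matrices, and $(I_d\otimes\widetilde H(r))K_d=K_d(\widetilde H(r)\otimes I_d)$, one obtains $d\,\overline\Gamma(r)=L_d(I_{d^2}+K_d)(\widetilde H(r)\otimes I_d)L_d'\,\mbox{vech}(d\widetilde H(r))$. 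The bracketed matrix is invertible precisely because $\widetilde H(r)$ is triangular with strictly positive diagonal, which follows from the positive definiteness of $q^{-1}\int_{r-q/2}^{r+q/2}\Sigma(v)dv$ under Assumption \textbf{A1}; inverting it and left-multiplying by $L_d'$ gives $\partial\,\mbox{vec}(\widetilde H(r))/\partial\overline\Gamma(r)'=\Xi(r)$ with $\Xi(r)$ as in the statement, whence $\sqrt T\,\mbox{vec}(\widehat{\widetilde H}(r)-\widetilde H(r))=\Xi(r)\,\sqrt T(\widehat{\overline\Gamma}(r)-\overline\Gamma(r))+o_{\mathbb P}(1)$.

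Finally I would combine the two linearizations through the bilinear map $(\Phi,H)\mapsto\Phi H$: writing $\hat{\widetilde\theta}_r^q(i)-\widetilde\theta_r^q(i)=(\widehat\Phi_i^{ols}-\Phi_i)\widetilde H(r)+\Phi_i(\widehat{\widetilde H}(r)-\widetilde H(r))+(\widehat\Phi_i^{ols}-\Phi_i)(\widehat{\widetilde H}(r)-\widetilde H(r))$, the last summand is $O_{\mathbb P}(1/T)$ and disappears after scaling by $\sqrt T$, and applying $\mbox{vec}$ to the remaining two terms gives $\sqrt T\,\mbox{vec}(\hat{\widetilde\theta}_r^q(i)-\widetilde\theta_r^q(i))=(\widetilde H(r)'\otimes I_d)\sqrt T\,\mbox{vec}(\widehat\Phi_i^{ols}-\Phi_i)+(I_d\otimes\Phi_i)\sqrt T\,\mbox{vec}(\widehat{\widetilde H}(r)-\widetilde H(r))+o_{\mathbb P}(1)=C_i(r)\sqrt T(\widehat\vartheta_{OLS}-\vartheta_0)+D_i(r)\sqrt T(\widehat{\overline\Gamma}(r)-\overline\Gamma(r))+o_{\mathbb P}(1)$. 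Since Lemma \ref{lemma_aoirf} gives the joint weak convergence of $(\sqrt T(\widehat\vartheta_{OLS}-\vartheta_0),\sqrt T(\widehat{\overline\Gamma}(r)-\overline\Gamma(r)))$ to a Gaussian vector with block-diagonal covariance $\mbox{diag}(\Lambda_3^{-1}\Lambda_2\Lambda_3^{-1},\Omega(r))$, Slutsky's lemma delivers the announced limit $\mathcal{N}(0,\,C_i(r)\Lambda_3^{-1}\Lambda_2\Lambda_3^{-1}C_i(r)'+D_i(r)\Omega(r)D_i(r)')$, the cross term vanishing by the asymptotic independence of the two blocks. The only genuinely delicate point is the explicit computation of the Cholesky Jacobian $\Xi(r)$ and the verification that the matrix $L_d(I_{d^2}+K_d)(\widetilde H(r)\otimes I_d)L_d'$ is nonsingular; once that is established, the remainder is routine linearization and Slutsky bookkeeping.
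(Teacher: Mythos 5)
Your proposal is correct and follows essentially the same route as the paper's proof: the same three-term decomposition of $\hat{\widetilde{\theta}}_{r}^{q}(i)-\widetilde{\theta}_{r}^{q}(i)$ with the cross term killed at rate $O_{\mathbb P}(1/T)$, linearization of $\widehat{\Phi}_i^{ols}$ and of the Cholesky factor by the delta method, and the joint block-diagonal CLT of Lemma \ref{lemma_aoirf} to assemble the limit. The only difference is cosmetic: you derive the Cholesky Jacobian $\Xi(r)$ explicitly in the vech parametrization (the form appearing in $D_i(r)$), whereas the paper invokes its vec-level differential $\Delta$ from \eqref{diff_Ch} and otherwise defers to L\"utkepohl (2005), Proposition 3.6 --- these are the same Jacobian written in different coordinates.
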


\vspace{0.3 cm}

\color{black}

We propose an alternative approximated OIRF estimator based on the more efficient estimator $\widehat{\vartheta}_{ALS}$ defined in equation (\ref{GLS}) and the estimators  $\widehat{\Phi}_i^{als}$ of the coefficients $\widehat{\Phi}_i$ of the infinite moving average representation of $(X_t)$. More precisely,
\begin{equation}\label{ALSapproxhist}
  \hat{\widetilde{\theta}}_{r}^{q,als}(i):=\widehat{\Phi}_i^{als} \widehat{\widetilde H}(r),
\end{equation}
a new  approximated OIRF estimator. Below, we state its asymptotic distribution.

\begin{proposition}\label{resapproxALS}
Let the conditions of Proposition \ref{resapprox}, and the Assumption  {\bf A2} in the Appendix hold true. With the notation defined in Proposition \ref{resapprox},
we have for all $r\in(q/2,1-q/2)$ and as $T\to\infty$,
\begin{equation}\label{histapproxasymp_alt}
\sqrt{T}\mbox{vec}\left(\hat{\widetilde{\theta}}_{r}^{q, als}(i)-\widetilde{\theta}_{r}^{q}(i)\right)
\Rightarrow\mathcal{N}\left(0,C_i(r)\Lambda_1^{-1}C_i(r)'+D_i(r)\Omega(r)D_i(r)'\right),\:i=0,1,2,...
\end{equation}
Moreover, the difference between the asymptotic variance of $\mbox{vec}\left(\hat{\widetilde{\theta}}_{r}^{q}(i)-\widetilde{\theta}_{r}^{q}(i)\right)$ given in  equation \ref{histapproxasymp_alt} and the asymptotic variance of $\mbox{vec}\left(\hat{\widetilde{\theta}}_{r}^{q, als}(i)-\widetilde{\theta}_{r}^{q}(i)\right)$ is a positive semi-definite matrix.
\end{proposition}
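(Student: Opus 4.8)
The estimator $\hat{\widetilde{\theta}}_{r}^{q,als}(i)=\widehat{\Phi}_i^{als}\widehat{\widetilde H}(r)$ differs from $\hat{\widetilde{\theta}}_{r}^{q}(i)$ only through the replacement of $\widehat{\Phi}_i^{ols}$ by $\widehat{\Phi}_i^{als}$, the Cholesky factor $\widehat{\widetilde H}(r)$ being common to both. The proof therefore follows the same two steps as that of Proposition \ref{resapprox}, with the OLS estimator of $\vartheta_0$ replaced by the ALS estimator throughout; in particular the matrices $C_i(r)$ and $D_i(r)$ are unchanged, since they involve only $\widetilde H(r)$ and the coefficients $\Phi_m$, which do not depend on which estimator of $\vartheta_0$ is used.

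The first step is to establish the joint central limit theorem
\[
\sqrt{T}\left(\begin{array}{c}\widehat{\vartheta}_{ALS}-\vartheta_0\\ \widehat{\overline{\Gamma}}(r)-\overline{\Gamma}(r)\end{array}\right)\Rightarrow\mathcal{N}\!\left(0,\left(\begin{array}{cc}\Lambda_1^{-1}&0\\0&\Omega(r)\end{array}\right)\right),
\]
the analogue of Lemma \ref{lemma_aoirf} with $\widehat{\vartheta}_{OLS}$ replaced by $\widehat{\vartheta}_{ALS}$. The marginal convergence $\sqrt{T}(\widehat{\vartheta}_{ALS}-\vartheta_0)\Rightarrow\mathcal{N}(0,\Lambda_1^{-1})$, uniformly in $b\in\mathcal{B}_T$, follows from Patilea and Ra\"{\i}ssi (2012) adapted to the modified weights $w_{tj}$ (as recalled in Section \ref{kernel_sec}), together with the asymptotic linear representation $\sqrt{T}(\widehat{\vartheta}_{ALS}-\vartheta_0)=\Lambda_1^{-1}T^{-1/2}\sum_{t=1}^{T}(\widetilde{X}_{t-1}\otimes\Sigma(t/T)^{-1})u_t+o_{\mathbb{P}}(1)$, where the error from replacing $\widehat{\Sigma}_t^{-1}$ by $\Sigma(t/T)^{-1}$ is controlled uniformly over $\mathcal{B}_T$ by Proposition \ref{prop_tv_oirf} and Assumption {\bf A2}. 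The marginal convergence of $\sqrt{T}(\widehat{\overline{\Gamma}}(r)-\overline{\Gamma}(r))$ is exactly Lemma \ref{lemma_aoirf}, and rests on the fact that the contribution of the OLS estimation error inside the residuals $\widehat{u}_t$ to the local average $\widehat{\overline{\Gamma}}(r)$ is $o_{\mathbb{P}}(1/\sqrt{T})$ (the first line of (\ref{H_trad})). Joint convergence is then obtained from the Cram\'er--Wold device and a martingale central limit theorem for the array built from the increments $T^{-1/2}\Lambda_1^{-1}(\widetilde{X}_{t-1}\otimes\Sigma(t/T)^{-1})u_t$ and a rescaling of $\zeta_t-\Gamma_t$, both martingale differences with respect to the natural filtration. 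The only genuinely new point is that the off-diagonal block of the limiting covariance vanishes: for $s\neq t$ the corresponding cross terms are uncorrelated by the martingale difference structure of $(\epsilon_t)$, while for $s=t$ one has $E[(\widetilde{X}_{t-1}\otimes\Sigma(t/T)^{-1})u_t(\zeta_t-\Gamma_t)']=(E(\widetilde{X}_{t-1})\otimes\Sigma(t/T)^{-1})E[u_t(\zeta_t-\Gamma_t)']=0$, since $\widetilde{X}_{t-1}$ is independent of $\epsilon_t$ and $E(\widetilde{X}_{t-1})=0$; note this holds without any symmetry assumption on $\epsilon_t$.

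The second step is a delta-method argument applied to the smooth map $(\vartheta,\overline{\Gamma})\mapsto\mbox{vec}(\Phi_i(\vartheta)\widetilde H(\overline{\Gamma}))$, where $\Phi_i(\vartheta)$ is produced by the recursion (\ref{impul}) and $\widetilde H(\overline{\Gamma})$ is the lower triangular Cholesky factor of the matrix with vech equal to $\overline{\Gamma}$. The Jacobians are exactly those computed in Proposition \ref{resapprox}: L\"{u}tkepohl (2005, Proposition 3.6) for the $\vartheta$-derivative, yielding $C_i(r)=(\widetilde H(r)'\otimes I_d)(\sum_{m=0}^{i-1}J(\mathbb{A}')^{i-1-m}\otimes\Phi_m)$; and the Jacobian $\Xi(r)$ of the Cholesky factorization for the $\overline{\Gamma}$-derivative, yielding $D_i(r)=(I_d\otimes\Phi_i)\Xi(r)$; this is legitimate because $\widehat{\Phi}_i^{als}$ is the same smooth function of $\widehat{\vartheta}_{ALS}$ as $\widehat{\Phi}_i^{ols}$ is of $\widehat{\vartheta}_{OLS}$. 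One obtains $\sqrt{T}\mbox{vec}(\hat{\widetilde{\theta}}_{r}^{q,als}(i)-\widetilde{\theta}_{r}^{q}(i))=C_i(r)\sqrt{T}(\widehat{\vartheta}_{ALS}-\vartheta_0)+D_i(r)\sqrt{T}(\widehat{\overline{\Gamma}}(r)-\overline{\Gamma}(r))+o_{\mathbb{P}}(1)$, and combining with the block-diagonal limit of the first step yields (\ref{histapproxasymp_alt}). Finally the efficiency comparison is immediate: the difference between the asymptotic variance in (\ref{histapproxasymp}) and that in (\ref{histapproxasymp_alt}) equals $C_i(r)(\Lambda_3^{-1}\Lambda_2\Lambda_3^{-1}-\Lambda_1^{-1})C_i(r)'$, the terms $D_i(r)\Omega(r)D_i(r)'$ cancelling; since $\Lambda_3^{-1}\Lambda_2\Lambda_3^{-1}-\Lambda_1^{-1}$ is positive semi-definite (Patilea and Ra\"{\i}ssi (2012)) and a congruence transformation $M\mapsto C_i(r)MC_i(r)'$ preserves positive semi-definiteness, the difference is positive semi-definite.

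The main obstacle is the first step, specifically carrying over to the present setting (uniform over $\mathcal{B}_T$) the arguments of Patilea and Ra\"{\i}ssi (2012) that justify the asymptotic linear representation of $\widehat{\vartheta}_{ALS}$ when the nonparametric weights $\widehat{\Sigma}_t^{-1}$ replace the true $\Sigma(t/T)^{-1}$, and verifying that this plug-in, together with the windowed sum of squared residuals defining $\widehat{\overline{\Gamma}}(r)$, introduces no cross-covariance term surviving at the $1/\sqrt{T}$ scale. Everything else is either a verbatim repetition of computations already carried out for Proposition \ref{resapprox} or an immediate consequence of results quoted in Section \ref{kernel_sec}.
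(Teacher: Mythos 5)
Your proposal is correct and follows essentially the same route as the paper, whose proof of Proposition \ref{resapproxALS} is deliberately omitted because it simply repeats the steps of the proof of Proposition \ref{resapprox} with $\widehat{\vartheta}_{ALS}$ in place of $\widehat{\vartheta}_{OLS}$ (using the Patilea--Ra\"{\i}ssi (2012) limit $\mathcal{N}(0,\Lambda_1^{-1})$, uniformly over $\mathcal{B}_T$) and obtains the efficiency comparison from the positive semi-definiteness of $\Lambda_3^{-1}\Lambda_2\Lambda_3^{-1}-\Lambda_1^{-1}$, exactly as you do via the congruence $M\mapsto C_i(r)MC_i(r)'$. Your only deviation is the aside that the off-diagonal block vanishes without any symmetry assumption, justified through $E(\widetilde{X}_{t-1})=0$; the paper instead relies on the zero-third-moment condition of Assumption {\bf A0} (which makes the conditional cross-covariance in the martingale CLT vanish identically rather than only on average), but since that condition is assumed anyway, this harmless variation does not affect the validity of your argument.
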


The proof of Proposition \ref{resapproxALS} is omitted since it follows the steps of the proof of Proposition \ref{resapprox}, and use the results of Patilea and Ra\"{\i}ssi (2012) on the convergence in law of $\widehat{\vartheta}_{ALS}$. In particular, they proved that $\Lambda_3^{-1}\Lambda_2\Lambda_3^{-1} - \Lambda_1^{-1}$ is a positive semi-definite matrix and this implies that $\hat{\widetilde{\theta}}_{r}^{q, als}(i)$ is a lower variance estimator of $\widetilde{\theta}_{r}^{q}(i)$.

\color{black}

\vspace{0.3 cm}

Although the standard $\hat{\widetilde{\theta}}_{r}^{q}(i)$, or the more efficient estimator
$\hat{\widetilde{\theta}}_{r}^{q, als}(i)$ are easy to compute, for the reasons we detailed above, we believe that they are not appropriate tools to resume \color{black} the evolution of the tv-OIRF (\ref{vartheta}). Instead, we propose to use an estimator of the averaged OIRF. To build such an estimate of the averaged OIRF with negligible bias, we need a slightly modified kernel estimator of $\Sigma(\cdot)$ that we introduce in the next section. \color{black}

\subsection{New OIRF estimators with time-varying variance}
\label{adapti}

In this section, we propose an alternative estimator for the approximated OIRF and an estimator for the averaged OIRF we introduced in section \ref{adapti_def}.
To guarantee $\sqrt{T}-$asymptotic normality for these estimators,  we implicitly need suitable estimators of integral functionals under the form
$$
\frac{1}{q}\int_{r-q/2}^{r+q/2} A(v)   \Sigma (v)  dv
$$
with $A(\cdot)$ some given matrix-valued function. The estimator of such integral, obtained by plugging in the nonparametric estimator of the covariance structure introduced in equation (\ref{Sig_NP}), would not be appropriate as it suffers from boundary effects. More details on this problem are provided in
 section \ref{NP_integrals} in the Appendix. Therefore, in the sequel, we construct alternative bias corrected estimators for such integral functionals.

For  $-[(q+h)T/2]\leq k \leq [( q+h)T/2]$, we define \color{black}
\begin{equation}\label{nnapprx2}
{\widehat V}_{[rT]-k} =   
\frac{1}{T}
\sum_{j=[(r-(q-h)
/2)T]+1}^{[(r+ (q-h)
/2)T]} \color{black} \frac{1}{h} L\left( \frac{[rT]-k -j}{hT} \right)   \widehat{u}_j\widehat{u}_j'.
\end{equation}
Hereafter, for simplicity, we use the same bandwidth $h$ for all the $d^2$ components of the estimated matrix-valued integrals. Note that   ${\widehat V}_{[rT]-k}$ \color{black} is an estimator of $\Sigma_{[rT]-k}$.
Next, let  $\widehat{ H}_{[rT]-k}$ denote the lower triangular matrix of the
Cholesky decomposition of ${\widehat V}_{[rT]-k} $, that is
\begin{equation}\label{H_V}
{\widehat V}_{[rT]-k} = \widehat{ H}_{[rT]-k }\widehat{ H}_{[rT]-k }^\prime.
\end{equation}

We propose the following adaptive least squares estimators of the time-varying averaged OIRF:
\begin{equation}\label{accumulatedgls}
 \hat{\bar{\theta}}_{r}^{q}(i)= \widehat{\Phi}_i^{als}\; \bar{\widehat{H}}(r),
\end{equation}
where
\begin{equation}\label{accumulatedgls_H}
  \bar{\widehat{H}}(r)=   \frac{1}{[ q T]+1}  \sum_{k=-[(q + h)T/2]}^{[(q+ h)T/2]}\widehat{H}_{[rT]-k}.
\end{equation}

\begin{proposition}\label{propostu}
If assumptions {\bf A0}-{\bf A2} hold true,  then for all $r\in(q/2,1-q/2)$ and as $T\to\infty$,
\begin{equation*}
\sqrt{T}\mbox{vec}\left(\hat{\bar{\theta}}_{r}^{q}(i)-   \bar{\theta}_r^{q}(i)   \right)
\Rightarrow\mathcal{N}(0,\overline{C}_i(r)\Lambda_1^{-1}\overline{C}_i(r)' + D_i(r)\Omega(r)D_i(r)'), \quad \:i=0,1,2,...
\end{equation*}
with $D_i(r)$ and $\Omega(r)$  defined in Proposition \ref{resapprox}
and
$$\overline{C}_i(r)=\left(\frac{1}{q}\int_{r-q/2}^{r+q/2}H(v)'dv\otimes I_d\right)\left(\sum_{m=0}^{i-1}J(\mathbb{A}')^{i-1-m}\otimes\Phi_m\right).$$
\end{proposition}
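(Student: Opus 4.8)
The plan is to reduce the statement to a joint central limit theorem for $\sqrt{T}(\widehat{\vartheta}_{ALS} - \vartheta_0)$ together with $\sqrt{T}(\bar{\widehat{H}}(r) - \bar H(r))$, and then apply the delta method to the map $(\vartheta, H) \mapsto \Phi_i(\vartheta) H$, exactly as in the proof of Proposition \ref{resapprox} but with $\widetilde H(r)$ replaced by $\bar H(r)$ and the OLS pieces replaced by their ALS analogues. First I would write $\hat{\bar{\theta}}_r^q(i) - \bar\theta_r^q(i) = (\widehat{\Phi}_i^{als} - \Phi_i)\bar H(r) + \Phi_i(\bar{\widehat{H}}(r) - \bar H(r)) + (\widehat{\Phi}_i^{als} - \Phi_i)(\bar{\widehat{H}}(r) - \bar H(r))$, note that the last cross term is $o_{\mathbb{P}}(1/\sqrt T)$, and vectorize: the first term contributes $(\bar H(r)' \otimes I_d)\,\mathrm{vec}(\widehat{\Phi}_i^{als} - \Phi_i)$ and the linearization $\mathrm{vec}(\widehat{\Phi}_i^{als} - \Phi_i) = \left(\sum_{m=0}^{i-1} J(\mathbb{A}')^{i-1-m} \otimes \Phi_m\right)\sqrt{T}$-expansion in $\widehat{\vartheta}_{ALS} - \vartheta_0$ is the standard companion-matrix computation (Lütkepohl 2005, Prop. 3.6), which produces the factor $\overline C_i(r)$ acting on $\sqrt{T}(\widehat{\vartheta}_{ALS} - \vartheta_0) \Rightarrow \mathcal N(0,\Lambda_1^{-1})$. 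The second term contributes $(I_d \otimes \Phi_i)\,\mathrm{vec}(\bar{\widehat{H}}(r) - \bar H(r))$, and the Cholesky map's derivative supplies the matrix $\Xi(r)$, giving the factor $D_i(r)$ applied to $\sqrt{T}(\mathrm{vech}(\widehat S_T(r)) - \overline\Gamma(r))$-type quantity with limiting variance $\Omega(r)$.

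The substantive new work, and the main obstacle, is establishing that $\sqrt{T}(\bar{\widehat{H}}(r) - \bar H(r))$ is asymptotically equivalent to $\Xi(r)\sqrt{T}(\widehat{\overline\Gamma}(r) - \overline\Gamma(r))$ up to $o_{\mathbb{P}}(1)$, \emph{despite} $\bar{\widehat{H}}(r)$ being an average of Cholesky factors of the locally-smoothed estimators $\widehat V_{[rT]-k}$ rather than the Cholesky factor of an average. I would handle this in three steps. (i) Show $\bar{\widehat{H}}(r) = \frac{1}{[qT]+1}\sum_k \widehat H_{[rT]-k}$ is, up to $o_{\mathbb{P}}(1/\sqrt T)$, a smooth (Cholesky-derivative) functional of the averaged matrix $\frac{1}{[qT]+1}\sum_k \widehat V_{[rT]-k}$: this follows from a second-order Taylor expansion of the Cholesky map around $\Sigma(v)$ for each $v$ in the window, using that $\widehat V_{[rT]-k} - \Sigma_{[rT]-k} = O_{\mathbb{P}}(h + \sqrt{\log T/Th})$ uniformly (a consequence of the same techniques behind Proposition \ref{prop_tv_oirf}), so the aggregated quadratic remainder is $O_{\mathbb{P}}((h+\sqrt{\log T/Th})^2) = o_{\mathbb{P}}(1/\sqrt T)$ under Assumption A2's rate conditions; the first-order term averages $\mathrm{DCh}(\Sigma(v))[\widehat V_{[rT]-k} - \Sigma(v)]$, and because $\int H(v)\,dv$ already matches the average of $\mathrm{Ch}(\Sigma(v))$, the bias is linear and integrates correctly. (ii) Show that the bias-corrected kernel construction in (\ref{nnapprx2}) — with the integration range shrunk by $h$ on each side and the effective window widened by $h$ — exactly compensates the boundary leakage of the $L$-kernel, so that $\frac{1}{[qT]+1}\sum_k \widehat V_{[rT]-k} = \frac{1}{[qT]+1}\sum_{k=-[qT/2]}^{[qT/2]} \widehat u_{[rT]-k}\widehat u_{[rT]-k}' + o_{\mathbb{P}}(1/\sqrt T)$; this is the Fubini/telescoping argument alluded to in Section \ref{NP_integrals} and is where the specific choice of limits in (\ref{nnapprx2}) and (\ref{accumulatedgls_H}) is used. (iii) Replace $\widehat u$ by $u$ using (\ref{H_trad}) and the $\sqrt T$-consistency of $\widehat{\vartheta}_{OLS}$, reducing to $\widehat{\overline\Gamma}(r)$.

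Once (i)–(iii) are in place, I would invoke Lemma \ref{lemma_aoirf} for the joint limit of $\sqrt T(\widehat{\vartheta}_{OLS} - \vartheta_0, \widehat{\overline\Gamma}(r) - \overline\Gamma(r))$, upgrade the first block to $\widehat{\vartheta}_{ALS}$ (the ALS estimator is asymptotically uncorrelated with the local covariance-average block by the same martingale-difference orthogonality argument used in Lemma \ref{lemma_aoirf}, since $\widehat{\vartheta}_{ALS}$ depends on global averages and $\widehat{\overline\Gamma}(r)$ on a local window of asymptotically negligible relative length for the cross-covariance — more precisely, the ALS influence function is a martingale transform of $\epsilon_t$ while the covariance-average influence function is a martingale transform of $\mathrm{vech}(\epsilon_t\epsilon_t' - I_d)$, and these are orthogonal under Assumption A3's moment conditions), so the limiting covariance is block-diagonal $\mathrm{diag}(\Lambda_1^{-1}, \Omega(r))$. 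Applying the continuous linear map $(x,y) \mapsto \overline C_i(r)x + D_i(r)y$ then yields the stated variance $\overline C_i(r)\Lambda_1^{-1}\overline C_i(r)' + D_i(r)\Omega(r)D_i(r)'$, with $\overline C_0 = 0$ inherited from $C_0 = 0$ since $\Phi_0 = I_d$ is non-random. The only real danger points are the uniform-in-bandwidth control of the aggregated Cholesky remainder in step (i) and the exact boundary bookkeeping in step (ii); both are routine given Proposition \ref{prop_tv_oirf} and the construction (\ref{nnapprx2}), but they must be written carefully.
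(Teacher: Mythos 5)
Your overall skeleton coincides with the paper's: the decomposition into the $\widehat{\Phi}_i^{als}$ part and the Cholesky-average part with a negligible cross term, the companion-matrix/delta-method step giving $\overline{C}_i(r)\Lambda_1^{-1}\overline{C}_i(r)'$, the reduction to Lemma \ref{lemma_aoirf}, and the block-diagonality via the vanishing third moments (your ``negligible relative window length'' justification is wrong, since the window has length proportional to $T$, but the martingale-orthogonality argument you also give is the one the paper uses; note also there is no Assumption A3 in the Appendix). The genuine gap is in your steps (i)--(ii), which is precisely where the new work of this proposition lies. Step (i) rests on the claim that $\widehat V_{[rT]-k}-\Sigma_{[rT]-k}=O_{\mathbb P}\bigl(h+\sqrt{\log T/Th}\bigr)$ uniformly over $|k|\leq[(q+h)T/2]$. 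This is false: the sum \eqref{accumulatedgls_H} deliberately includes roughly $2hT$ indices per side whose target times lie within $2h$ of, or beyond, the edges of the data range used in \eqref{nnapprx2}; there the kernel mass over the available data is strictly below one, so $\widehat V_{[rT]-k}$ is close to $\lambda\Sigma$ with a deficiency factor $\lambda\in[0,1)$, an $O(1)$ deviation, and your ``aggregated quadratic remainder $=o_{\mathbb P}(1/\sqrt T)$'' bound does not apply on those strips. Step (ii) is also false as stated: by the paper's identity \eqref{equiv_crucial} with $A=I_d$, the normalized sum of the $\widehat V$'s over the widened window agrees (up to $O_{\mathbb P}(h^2+1/Th)$) with the average of the $\widehat u_j\widehat u_j'$ over the window of width $q-h$, not $q$; the resulting deterministic mis-centering is of order $h$, and under Assumption A2(iv) one has $\sqrt{T}h\rightarrow\infty$, so an $O(h)$ centering error is never $o_{\mathbb P}(1/\sqrt T)$.

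What actually makes the construction work, and what is absent from your sketch, is the combination of the \emph{weighted} exchange identity \eqref{equiv_crucial} applied with $A$ equal to the time-varying Cholesky differential $\Delta$ of \eqref{diff_Ch}, together with the algebraic identity $\Delta\,\mbox{vec}(\Sigma)=\frac{1}{2}\mbox{vec}(C(\Sigma))$. It is this factor $1/2$ that allows the boundary strips produced by the linearized Cholesky average to cancel, at first order, against the strips created when passing from $Int_{r,q+h}(\mbox{vec}(C(\Sigma)))$ to $\frac{q}{q+h}Int_{r,q}(\mbox{vec}(C(\Sigma)))$, which is how the paper arrives at the representation \eqref{id78} and hence at a centering at $\bar H(r)$ with remainder $O_{\mathbb P}(h^2+1/Th)$. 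The discussion in Section \ref{NP_integrals} that you invoke only justifies the interior bias correction for integrals of the form $\int a\Sigma$; it does not deliver the Cholesky-level boundary compensation, so this cancellation must be established explicitly rather than dismissed as routine bookkeeping. Once that representation is in place, the remainder of your argument (replacing $\widehat u$ by $u$, invoking Lemma \ref{lemma_aoirf}, upgrading OLS to ALS, and applying the linear map) matches the paper's conclusion of the proof.
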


\subsection{Estimation of the variance variability index}\label{vs_index}

Finally, we  build estimators for the variance variability index introduced in section \ref{index}. In the proof of Proposition \ref{propostu}, it is shown that the estimator $ \bar{\widehat{H}}(r)$
defined in equation (\ref{accumulatedgls_H}) behaves $\sqrt{T}-$asymptotically normal centered at
$\bar{H}(r):=\frac{1}{q}\int_{r-q/2}^{r+q/2}H(v)dv$.
Then, the estimator of the index $i_{r,q}$ is
\begin{equation}
\label{index_var_matrix_b_est}
\widehat i_{r,q} =
\left\| \bar{\widehat {H}}(r)^{-1}\; \widehat{\widetilde H}(r) \right\|_2^2 ,
\qquad 0< r-q/2 < r+q/2< 1,
\end{equation}
where $ \bar{\widehat{H}}(r)$ is defined in (\ref{accumulatedgls_H}), and  $\widehat{\widetilde H}(r)$ the lower triangular matrix of the Cholesky decomposition of $\widehat S_T (r)$ defined as in equation (\ref{H_trad2}).

\begin{proposition}\label{asy_index}
 Let assumptions {\bf A0}-{\bf A2} hold true. Let $0<q<1/2$ and $r\in(q/2,1-q/2)$. If  $i_{r,q}>1$, and all other eigenvalues of
the matrix ${\widetilde H} (r) ^\prime \bar { H}(r) ^{-1\prime }  \bar { H}(r) ^{-1} {\widetilde H} (r) $ are strictly smaller than $i_{r,q}$, then
$
\sqrt{T}  \left(\widehat i_{r,q} - i_{r,q} \right)
$
converges in distribution to a centered normal variable. If  $i_{r,q}=1$,  then
$
\widehat i_{r,q} - 1 = o_{\mathbb P}(1/\sqrt{T}).
$
\end{proposition}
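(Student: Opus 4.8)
The plan is to use the delta method, combining the joint asymptotic normality of the two building blocks $\bar{\widehat H}(r)$ and $\widehat{\widetilde H}(r)$ with a perturbation analysis of the largest eigenvalue of $M(r) := {\widetilde H}(r)^\prime \bar H(r)^{-1\prime}\bar H(r)^{-1}{\widetilde H}(r)$, whose top eigenvalue is exactly $i_{r,q}$ by definition (\ref{index_var_matrix_b}). First I would establish the joint $\sqrt T$-asymptotic normality of $\mathrm{vec}(\bar{\widehat H}(r) - \bar H(r))$ and $\mathrm{vec}(\widehat{\widetilde H}(r) - \widetilde H(r))$. The asymptotic normality of the first is shown in the proof of Proposition \ref{propostu} (where $\bar{\widehat H}(r)$ is proved to be $\sqrt T$-asymptotically normal centered at $\bar H(r)$); the asymptotic normality of the second follows from Lemma \ref{lemma_aoirf} together with the smoothness of the Cholesky map, exactly as in the proof of Proposition \ref{resapprox} where $\widehat{\widetilde H}(r)$ already appears. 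Since both are smooth (in fact affine up to $o_{\mathbb P}(1/\sqrt T)$) functionals of the OLS residuals $\widehat u_t$ and the OLS parameter estimator, their joint limit is jointly Gaussian; I would record the $2\cdot d(d+1)/2$-dimensional limiting covariance without needing its explicit form.

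Next, I would treat the map $(\bar H,\widetilde H)\mapsto \lambda_{\max}\big(\widetilde H^\prime \bar H^{-1\prime}\bar H^{-1}\widetilde H\big)$. The key structural fact is the assumed eigenvalue separation: $i_{r,q} > 1$ is a simple eigenvalue of $M(r)$, strictly larger than all the others. Under this simplicity, the top eigenvalue of a symmetric matrix is a real-analytic function of the matrix entries in a neighbourhood of $M(r)$ (standard perturbation theory; one may cite Flury (1985) for the relative-eigenvalue version, as the excerpt already does). Composing with the analytic maps $\bar H\mapsto \bar H^{-1}$ (invertibility holds under \textbf{A1}, which guarantees $\bar H(r)$ is nonsingular) and the bilinear map forming $M$, the functional $i_{r,q} = f(\bar H(r),\widetilde H(r))$ is differentiable at the true value. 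The delta method then gives $\sqrt T(\widehat i_{r,q} - i_{r,q}) \Rightarrow \mathcal N(0,\,v_{r,q})$ for a scalar variance $v_{r,q}$ obtained by sandwiching the joint covariance with the gradient of $f$; I would note the gradient is expressed through the top eigenvector of $M(r)$ but would not compute $v_{r,q}$ explicitly, as the statement only claims convergence to \emph{a} centered normal.

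For the degenerate case $i_{r,q}=1$: by Lemma \ref{CS_ineq_m}, $i_{r,q}=1$ forces $v\mapsto H(v)$ constant on $(r-q/2,r+q/2)$, hence $\bar H(r) = \widetilde H(r) = H(r)$ (the constant value) and $M(r) = I_d$, so \emph{all} eigenvalues equal $1$ and the simple-eigenvalue argument breaks down — the top eigenvalue is no longer differentiable there. Here I would argue directly: writing $\widehat i_{r,q} - 1 = \lambda_{\max}(\widehat M(r)) - 1$ with $\widehat M(r) := \widehat{\widetilde H}(r)^\prime \bar{\widehat H}(r)^{-1\prime}\bar{\widehat H}(r)^{-1}\widehat{\widetilde H}(r)$, I would show $\widehat M(r) - I_d = o_{\mathbb P}(1/\sqrt T)$. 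Indeed, when $H(\cdot)$ is constant on the window, $\bar{\widehat H}(r)$ and $\widehat{\widetilde H}(r)$ both estimate the same $H(r)$, and moreover each is, up to $o_{\mathbb P}(1/\sqrt T)$, the same smooth function of essentially the same average of residual outer-products over the window: $\bar{\widehat H}(r)$ is built from averaging Cholesky factors of the kernel estimates $\widehat V_{[rT]-k}$, while $\widehat{\widetilde H}(r)$ is the Cholesky factor of the plain window average $\widehat S_T(r)$; under the constancy hypothesis the leading ($1/\sqrt T$-order) stochastic terms of both coincide (both are driven by $\tfrac1{[qT]+1}\sum_k(\widehat u_{[rT]-k}\widehat u_{[rT]-k}^\prime - \Sigma(r))$ through the common derivative of the Cholesky map at $H(r)$), so their difference is $o_{\mathbb P}(1/\sqrt T)$; combined with $\bar{\widehat H}(r)^{-1} = H(r)^{-1} + O_{\mathbb P}(1/\sqrt T)$ this yields $\widehat M(r) = I_d + o_{\mathbb P}(1/\sqrt T)$, and since $x\mapsto\lambda_{\max}(x)$ is Lipschitz in the spectral norm, $\widehat i_{r,q} - 1 = o_{\mathbb P}(1/\sqrt T)$.

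The main obstacle is the second-to-last step in the degenerate case: showing that the $1/\sqrt T$-order fluctuations of the two differently-constructed estimators $\bar{\widehat H}(r)$ and $\widehat{\widetilde H}(r)$ cancel in their difference. This requires matching the leading stochastic expansions from the proofs of Proposition \ref{propostu} and Lemma \ref{lemma_aoirf} on the common event that $H(\cdot)$ is constant on the window — in particular checking that the bias-correction device in (\ref{nnapprx2}) and the averaging of Cholesky factors in (\ref{accumulatedgls_H}) do not introduce a spurious $1/\sqrt T$-order term relative to the plain average (\ref{H_trad2}). In the non-degenerate case the argument is routine delta method once the joint CLT and eigenvalue simplicity are in hand.
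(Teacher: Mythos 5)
Your proposal is correct and follows essentially the same route as the paper: joint $\sqrt{T}$-expansions of $\bar{\widehat H}(r)$ and $\widehat{\widetilde H}(r)$ through the Cholesky differential, the delta method with the differential of a simple largest eigenvalue in the case $i_{r,q}>1$, and, in the degenerate case, cancellation of the leading $1/\sqrt{T}$-order stochastic terms of the two estimators followed by Lipschitz continuity of the spectral norm. The matching step you flag as the main obstacle is exactly what the paper verifies, via the identity $\frac{q-h}{q}G_{T,r,q-h}-G_{T,r,q}=O_{\mathbb P}(h/\sqrt{T})$ together with the $o_{\mathbb P}(1/\sqrt{T})$ remainders guaranteed by Assumption \textbf{A2}(iv).
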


\medskip

The estimator $\widehat i_{r,q} $ has a non standard rate of convergence in the case of constant variance $\Sigma(\cdot)$. Determining this rate and its limit in distribution remains an open problem to be studied in the future.

\section{Numerical illustrations}
\label{numerical}

Several contributions in the literature have documented potential problems for the statistical analysis or the interpretation of the OIRF.
For instance Benkwitz \textit{et al.} (2000) pointed out several issues related to the building of bootstrap confidence intervals (see also L\"{u}tkepohl \textit{et al.} (2015) and references therein for recent developments in this field). Furthermore, we refer to L\"{u}tkepohl (2005), Section 2.3, for a discussion on the problems of the variables ordering or the missing of relevant variables. In order to address these issues, numerous settings
were proposed in the literature.
Such interesting topics deserve a complete work in our framework, and are beyond the scope of this article.
Hence, our numerical outputs will focus on the OIRF estimation, and the finite sample behavior of the heteroscedasticity index $i_{r,q}$ introduced above. In particular, for the approximated OIRF approach, we will consider the estimator (\ref{ALSapproxhist}) which benefits from the more accurate ALS estimation in comparison to the classical estimator given in (\ref{olsapproxhist}). The approximated OIRF estimator will be compared to the averaged OIRF estimator (\ref{accumulatedgls}).

\subsection{Monte Carlo experiments}

In this part, the $\bar{\widehat {H}}(r)$ will be computed using two bandwidths, $h_1=\frac{q}{2\sqrt{3}}T^{-1/3}$ and $h_2=\frac{q}{2\sqrt{3}}T^{-2/7}$, to illustrate the effect of the bandwidth choice on the OIRF analysis. The constant $q/2\sqrt{3}$ corresponds to the standard deviation of a uniform distribution on an interval of length $q$, while the rates $T^{-1/3}$ and $T^{-2/7}$ are two possible theoretical choices.
In each experiment, $1000$ independent trajectories of the following bivariate VAR(1) system are simulated

\begin{equation}\label{simulDGP}
X_t=AX_{t-1}+u_t,\quad u_t=H_t\epsilon_t,
\end{equation}
where
$$A=\left(
      \begin{array}{cc}
        0.5 & -0.3 \\
        0.1 & 0.3 \\
      \end{array}
    \right),
$$
and the $\epsilon_t$'s are standard Gaussian iid. The covariance of the errors terms $\Sigma_t:=H_tH_t'$ is driven by a matrix of functions $\Sigma_t=\Sigma(t/T)$ with

$$\Sigma(r)=\left(
              \begin{array}{cc}
                \sigma^2_{11}(r) & \sigma_{12}(r) \\
                \sigma_{21}(r) & \sigma^2_{22}(r) \\
              \end{array}
            \right),
$$
where $\sigma^2_{11}(r)=1.4+\delta f(r)$ for a fixed non constant function $f(\cdot)$ and $\delta\geq0$. $\sigma^2_{11}(r)$ is plotted in Figure \ref{first-var-comp}. The others components of the covariance matrix are set as follows: $\sigma^2_{22}(r)=0.5\sigma^2_{11}(r)$ and $\sigma_{12}(r)=\sigma_{21}(r)=\sqrt{\sigma^2_{11}(r)\sigma^2_{22}(r)}\times0.7$. The patterns displayed by the covariance structure are intended to mimic business cycle behavior commonly observed for economic variables. Note that when $\delta=0$, we retrieve the homoscedastic case. Samples $T=100,200,400$ and $800$ are considered in the sequel.

In the Monte Carlo investigation, the changes through time are studied by considering the subsample $(0.5;0.5)$, that is taking $q=0.5$ and $r=0.5$ (i.e. $i_{0.5,0.5}$). In order to avoid lengthy outputs, we only display the results for the orthogonalized response of the first variable for an impulse from its own past taking $i=1$. The corresponding averaged (resp. approximated) OIRF will be denoted by $\bar{\theta}_{0.5}^{0.5,11}(1)$ (resp. $\widetilde{\theta}_{0.5}^{0.5,11}(1)$).\\

We begin with a comparison between the averaged and approximated approaches for resuming the OIRFs. All the outputs concerning the OIRF are obtained setting $\delta=1$. In Figure \ref{fig2}, the relative differences between the averaged and approximated estimators are displayed. It appears that the approximated OIRF are in the order of 10\% greater than the averaged OIRF. The ratio is even always positive for $T=400$ and $T=800$. Recall that the approximated approach does not rely on the adequate way to resume the Cholesky decompositions of the covariance. Hence, we can conclude that the approximated approach delivers an upwards distorted picture of the OIRF when compared to the averaged approach. Now, let us turn to the illustration of the asymptotic results in Proposition \ref{resapproxALS} and \ref{propostu}. From the Q-Q plots displayed in Figure \ref{fig3} and \ref{fig4}, we can remark that the different OIRF estimates seems to behave as normal, even for small samples. In particular, we did not notice major differences between the estimators of the averaged OIRF obtained using the bandwidths $h_1$ and $h_2$.

In this part, we analyze the finite sample behavior of the index estimator defined in (\ref{index_var_matrix_b_est}). Recall that the index is intended to capture the discrepancy, between the homoscedastic and the heteroscedatic cases. Figure \ref{fig5} and \ref{fig6} correspond to a heteroscedasticity parameter $\delta=1$. In Figure \ref{fig7}, various values are considered for $\delta$, meanwhile the outputs for the homoscedastic case, $\delta=0$, are displayed in Figure \ref{fig8}. From Figure \ref{fig5}, it can be seen that the normal approximation is not met for small samples. As the sample is increased, the results become better. Figure \ref{fig6} and \ref{fig8} show that the estimator $\widehat i_{r,q}$ seems to converge to the true value, whether $i_{r,q}>0$ (the heteroscedastic case) or when $i_{r,q}=1$ is in the border of the possible values (the homoscedastic case). All these observations illustrate the statements of Proposition \ref{asy_index}. Finally, the ability of the index to detect heteroscedastic situations, is studied by allowing values from zero to one for $\delta$. From Figure \ref{fig7}, it emerges that the $\widehat i_{r,q}$ clearly take increasing values as $\delta$ is far from zero. This suggests that the proposed index is relevant to decide whether the approximated or averaged OIRF should be applied.

\subsection{Real data analysis}
\label{realdata}

We assess the discrepancy between the approximated and the averaged OIRF, for the $\log$ first differences of the brent crude in USD per barrel multiplied by 100, and the growth rate previous period for the consumer price index for the United States. The series taken from October, 2001 to June, 2020 ($T=225$) are plotted in Figure \ref{data}.\footnote{The data can be downloaded from the website of the research division of the Federal Reserve Bank of Saint Louis https: $//$fred.stlouisfed.org$/$} The effects of energy prices shocks on other macroeconomic variables are commonly investigated in the applied econometric literature. This can be explained by the importance of the energy sector in world economies or finance markets. The reader is referred to papers published in specialized journals like \textit{Energy Economics}, \textit{Energy Policy} or papers with JEL codes \textit{Q43: energy and macroeconomics} and \textit{C32: time series models}.
In general, such kind of data may exhibit fast variance changes. At first glance, this suggests that our methodology can deliver a quite different picture of the OIRF when compared to the standard approach.

First a VAR(1) model is adjusted to the series to capture the conditional mean. Following the ordering argument of L\"{u}tkepohl (2005,p61), the first component corresponds to the $\log$ differences of the oil prices and the second one to the inflation data. Indeed, it is reasonable to think that there is no instantaneous effects from the inflation to the oil prices. The model adequacy is checked using the portmanteau tests proposed in Patilea and Ra\"{\i}ssi (2013). The existence of second order dynamics in the residuals is tested using the tools proposed in Patilea and Ra\"{\i}ssi (2014). Our outputs, not displayed here, show that a deterministic specification for the variance structure seems adequate.

Now we turn to the analysis of the time varying OIRF. More precisely, we aim to compare the pre and post crisis response of the US inflation to a shock of the oil price. The pre-crisis period goes from  October, 2001 to July 2008, and the post crisis from June 2009 to January 2020. In Figure \ref{fig9}, the time-varying OIRF in the case of heteroscedastic VAR with constant conditional mean parameters are displayed. It emerges that the OIRF are subject to constant changes, including for the pre and post crisis periods. In addition, it is found that the index for the pre and post crisis periods, given in Table \ref{tab5}, is somewhat far from one. All these observations suggest to consider the averaged OIRF, in addition to the usual approximated OIRF. From Figure \ref{fig10}, it can be seen that the approximated OIRF leads to an over-estimation of the impact of oil price changes on the inflation in the United States. In particular, we found that the approximated OIRF is up to 10\% larger than the averaged OIRF. As a conclusion, our real data analysis shows that the standard approach, which consists in computing the approximate OIRF, can be quite misleading in presence of heteroscedasticity. Indeed, considering the approximated OIRF leads to an oversized estimation of the OIRF in general. This would occur especially when economic crises, or specific political events, generate smooth fast or abrupt changes in the variance of the variables. Noting that by periods analyses are actually performed to compare pre and post situations related to such events, it clearly appears that the averaged OIRF provide a reliable estimation.

\newpage

\section{Appendix}

\color{black}

\subsection{Kernel estimates of the covariance function integrals}\label{NP_integrals}


As mentioned in section \ref{adapti}, we need suitable estimators of integral functionals 
$$
\frac{1}{q}\int_{r-q/2}^{r+q/2} A(v)   \Sigma (v)  dv
$$
with $A(\cdot)$ some given matrix-valued function.
The estimator of such integrals obtained by plugging in the nonparametric estimator of the covariance structure introduced in equation (\ref{Sig_NP}) would be asymptotically biased due to boundary effects.

To explain the rationale of the alternative nonparametric estimator we propose, we will assume for the moment that the $d\times d-$matrices $u_j u_j^\prime$ are available for all $1\leq j \leq T$.
Let us consider the generic real-valued random quantity
$$
  S_T(r) \!= 
\! \int_{r-q/2}^{r+q/2} \!\!a(v) \!\left[\! \frac{1}{|\mathcal{J}| }
  \sum_{j\in \mathcal{J}}
  \omega_{ v,j}(h)   u^{(k)}_j u^{(l)}_j\! \right]  \!dv\!
=\frac{1}{|\mathcal{J}| }
\sum_{j\in \mathcal{J}}
\!\left[ 
\int_{r-q/2}^{r+q/2} \!\!a(v) \omega_{ v,j}(h)      dv \right] \!\!u^{(k)}_j u^{(l)}_j\!,
$$
where
$\mathcal{J}=\{j_{min},\ldots,j_{max}\}\subset \{1,\ldots,T\}$ is a set of consecutive indices that will be specified below and $|\mathcal{J}|= j_{max}-j_{min}+1$ is the cardinal of $\mathcal{J}$;
$\omega_{ v,j}(h) = h^{-1}L(h^{-1} (v - j/T))$ with $h$  a deterministic bandwidth with a rate that will be specified below, and $L(\cdot)$ is a bounded  symmetric density function with support $[-1,1]$;
$a ( \cdot )$ is a  given differentiable  function with Lipschitz continuous derivative;
$u^{(k)}_j$ and  $u^{(l)}_j$ are components of $u_j$ and $E(u^{(k)}_j u^{(l)}_j)=\Sigma^{(k,l)}(j/T)$, that is the $(k,l)$ cell of the matrix $\Sigma(j/T)$.


By a change of variables and Taylor expansion,
\begin{multline*}
  \int_{r-q/2}^{r+q/2} a(v) \frac{1}{h} L\left( \frac{v-j/T}{h} \right) dv
=\int_{(r-q/2-j/T)/h}^{(r+q/2-j/T)/h} a(j/T + uh) L\left( u \right) du
\\ = a(j/T) 
\int_{(r-q/2-j/T)/h}^{(r+q/2-j/T)/h}  L\left( u \right) du+ ha^\prime (j/T) 
\int_{(r-q/2-j/T)/h}^{(r+q/2-j/T)/h}  uL\left( u \right) du + O(h^2). \color{black}
\end{multline*}
To avoid large bias, we aim at using the properties $\int_{-1}^1 L(u)du = 1$ and $\int_{-1}^1 uL(u)du = 0$. For this purpose, any $j\in\mathcal{J}$ should satisfy the conditions
$
(r+q/2-j/T)/h \geq 1$  and $(r-q/2-j/T)/h\leq -1.
$
That is, the indices set $ \mathcal{J}$ should be defined such that
$$
\forall j\in\mathcal{J}, \quad (r-q/2+h)T \leq j\leq (r+q/2-h)T.
$$
Let us define
$$
j_{min} = [(r-q/2+h)T]+1\qquad \text{and} \qquad j_{max} = [(r+q/2-h)T].
$$
  Then, uniformly with respect to $j\in\mathcal{J}$,
$$
\int_{r-q/2}^{r+q/2} a(v) \frac{1}{h} L\left( \frac{v-j/T}{h} \right) dv
- a(j/T)  = O(h^2).
$$
Note that  $|\mathcal{J}| = [(q - 2h)T]$ and $|\mathcal{J}|/(q-2h)T= 1 + O(1/T)$.
\color{black}

Now, we could deduce
\begin{multline*}
  S_T =\frac{1}{|\mathcal{J}| } \sum_{j\in \mathcal{J}} a(j/T) \{u^{(k)}_j u^{(l)}_j - \Sigma^{(k,l)}(j/T)\}
\\+ \frac{1}{ |\mathcal{J}|} \sum_{j\in \mathcal{J}} a(j/T) \Sigma^{(k,l)}(j/T) + O(h^2)\\
=: \Delta_T +  \frac{1}{q-2h} \color{black} \int_{r-q/2+h}^{r+q/2-h} a(v)\Sigma^{(k,l)}(v) dv + O(T^{-1})+ O(h^2).
\end{multline*}
Let us comment on these findings. To make the reminder $O(h^2)$ negligible, we will need to  impose
$
Th^4 \rightarrow 0.
$
For instance, we could consider a bandwidth $h$ under the form
$$
h = c \frac{q}{2\sqrt{3}}T^{-2/7},\qquad \text{for some constant } c>0.
$$
The factor $q/2\sqrt{3}$ takes into account the standard deviation of a uniform design  on the interval $[r-q/2,r+q/2]$.
The term $\Delta_T$ is a sum of independent centered variables and will have a Gaussian limit. Finally, let us focus on the last integral and notice that
$$
  \frac{1}{q-2h}   \int_{r-q/2+h}^{r+q/2-h} a(v)\Sigma^{(k,l)}(v) dv = \frac{1}{q} \int_{r-q/2}^{r+q/2} a(v)\Sigma^{(k,l)}(v) dv + O(h).
$$
Thus $S_T$ preserves a non negligible bias as an estimator of $q^{-1} \int_{r-q/2}^{r+q/2} a(v)\Sigma^{(k,l)}(v) dv$. The solution we will propose to remove this bias is to define estimates like $S_T$ with modified   $q$ and thus with modified  bounds $j_{min} $ and $j_{max} $ of the set $\mathcal J$. \color{black}

\subsection{Assumptions}
\label{sec.assump}


\textbf{Assumption A0:}\: (a) The process $(\epsilon_t)$ is iid such that $E(\epsilon_t\epsilon_t')=I_d$, with $I_d$ the $d\times d$ identity matrix, \color{black} and $\sup_t\parallel\epsilon_{i,t}\parallel_{\mu}<\infty$
for some $\mu>8$ and
for all $i\in\{1,\dots,d\}$ with $\parallel .\parallel_\mu:=(E\parallel .\parallel^\mu)^{1/\mu}$ \color{black}
and $\parallel .\parallel$ being the Euclidean norm. Moreover $E\left(\epsilon_{t}^{(i)}\epsilon_{t}^{(j)}\epsilon_{t}^{(k)}\right)=0$, $i,j,k\in\{1,\dots,d\}$.

(b) 
{The matrix $\mathbb{A}$ given in (\ref{calA}) is of full rank.}

\vspace{0.3 cm}

The covariance of the system (\ref{hetero1}) is allowed to vary
in time according to assumption {\bf A1} below.

\vspace{0.3 cm}

\textbf{Assumption A1:}\: We assume that $H_{t}=G(t/T)$, where the matrices $G(\cdot)$ are lower triangular matrices with positive diagonal components. The components $\{g_{k,l}(r): 1\leq k,l \leq d\}$ of the matrices $G(r)$
are measurable deterministic functions on the interval $(0,1]$, with
$\forall \,1\leq k,l\leq d$, $ \sup_{r\in(0,1]}|g_{k,l}(r)|<\infty$. The functions $g_{k,l}(\cdot)$ satisfy a Lipschitz condition piecewise on a finite partition of $(0,1]$ in sub-intervals (the partition may depend on $k,l$).
The matrix $\Sigma(r)=G(r)G(r)'$ is assumed positive definite for all $r$ \color{black} and $\inf_{r\in(0,1]} \lambda_{min}(\Sigma(r)) >0$ where $ \lambda_{min}(\Gamma)$ denotes the smallest eigenvalue of the symmetric matrix $\Gamma$.

\vspace{0.3 cm}

{The OIRF estimates we investigate in the following are obtained as products between a functional of the innovation vectors $u_t$ (the estimator of $\varphi_0$) and a centered functional of matrices $u_tu_t^\prime$ (the estimator of some square root matrix built using the covariance structure $\Sigma(\cdot)$). The $\sqrt{T}-$asymptotic normality of the OIRF estimators is then deduced from the asymptotic behavior of the two factors. The condition $E\left(\epsilon_{t}^{(i)}\epsilon_{t}^{(j)}\epsilon_{t}^{(k)}\right)=0$, $i,j,k\in\{1,\dots,d\}$, is a convenient condition for simplifying the asymptotic variance of our estimators, that is making it block diagonal. It is in particular fulfilled if the errors are supposed Gaussian. The asymptotic results could be also deduced if this condition fails, the asymptotic variance of the estimators would then include some additional covariance terms.
}



\vspace{0.3 cm}

\textbf{Assumption A2:} \, (i) The kernel $K(\cdot)$ is a bounded \color{black} symmetric \color{black} density function defined on the real line such that $K(\cdot)$ is nondecreasing on $(-\infty, 0]$ and decreasing on $[0,\infty)$ and $\int_\mathbb{R} |v|K(v)dv < \infty$. The function $K(\cdot)$ is differentiable except a finite number of points and the derivative $K^\prime(\cdot)$  is a bounded  integrable function.
Moreover, the Fourier Transform $\mathcal{F}[K](\cdot)$ of $K(\cdot)$ satisfies $\int_{\mathbb{R}}  \left| s \mathcal{F}[K](s) \right|ds <\infty$.

(ii) The bandwidths $b_{kl}$, $1\leq k\leq l\leq d$, are taken in the range $\mathcal{B}_T = [c_{min} b_T, c_{max} b_T]$ with $0< c_{min}< c_{max}< \infty$ and $b_T + 1/Tb_T^{2+\gamma} \rightarrow 0$ as $T\rightarrow \infty$, for some $\gamma >0$.

\color{black}
(iii) The kernel $L(\cdot)$ is a symmetric bounded Lipschitz continuous density function with support in $[-1,1]$.

(iv) The bandwidth $h$ satisfies the condition $ h^4 T+ 1/Th^2  \rightarrow 0$ as $T\rightarrow \infty$.

\color{black}

\vspace{0.3 cm}

\subsection{ Proofs}

In the sequel, $c$, $c^\prime$, $c^{\prime\prime}$ and $C$, $C^\prime$, $C^{\prime\prime}$ are constants, possibly different from line to line.

\vspace{0.3cm}

\begin{proof}[Proof of Lemma \ref{CS_ineq_m}]
First, note that
\begin{multline*}
\widetilde H(r) \widetilde H(r)^\prime - \bar H(r)\bar H(r)^\prime =  \frac{1}{q}\int_{r-q/2}^{r+q/2}H(v)H(v)^\prime dv \\ - \left[ \frac{1}{q}\int_{r-q/2}^{r+q/2}\!H(v)dv \right]\! \left[ \frac{1}{q}\int_{r-q/2}^{r+q/2}\!H(v)dv \right]^\prime
\end{multline*}
is a positive semi-definite matrix, whatever the values of $r$ and $q$ are. Moreover,
\begin{equation}\label{statem}
\widetilde H(r) \widetilde H(r)^\prime = \bar H(r)\bar H(r)^\prime \text{ if and only if $H(\cdot)$ is constant on $(r-q/2, r+q/2)$}
\end{equation}
Indeed, for any $a\in\mathbb{R}^d$,
\begin{multline*}
a^\prime \left\{ \widetilde H(r) \widetilde H(r)^\prime - \bar H(r)\bar H(r)^\prime\right\} a  \\
=\frac{1}{q} \int_{r-q/2}^{r+q/2}a^\prime \left[H(v)-  \frac{1}{q}\int_{r-q/2}^{r+q/2}H(u)du \right] \left[H(v)-  \frac{1}{q}\int_{r-q/2}^{r+q/2}H(u)du \right]^\prime a \;dv\\
= \frac{1}{q} \int_{r-q/2}^{r+q/2}\left\|a^\prime \left[H(v)-  \frac{1}{q}\int_{r-q/2}^{r+q/2}H(u)du \right] \right\|^2 dv \geq 0.
\end{multline*}
This shows that $\widetilde H(r) \widetilde H(r)^\prime - \bar H(r)\bar H(r)^\prime$ is positive semi-definite.
Next, under our assumptions, for each $a\in\mathbb{R}^d$ the map
\begin{equation}\label{erz}
v\mapsto \left\|a^\prime \left[H(v)-  \frac{1}{q}\int_{r-q/2}^{r+q/2}H(u)du \right] \right\|^2
\end{equation}
is piecewise continuous on $(0,1)$. Thus, if
$\widetilde H(r) \widetilde H(r)^\prime = \bar H(r)\bar H(r)^\prime$, then  necessarily, for each $a$, the map (\ref{erz}) is constant equal to zero. This implies that $H(\cdot)$ is constant on $(r-q/2, r+q/2)$. Conversely, when $H(\cdot)$ is constant, then $\widetilde H(\cdot) = \bar H(\cdot)$ and thus $\widetilde H(r) \widetilde H(r)^\prime = \bar H(r)\bar H(r)^\prime$. Finally, the two statements in the lemma are direct consequences of \eqref{statem} and the positive semi-definiteness of
$\widetilde H(r) \widetilde H(r)^\prime - \bar H(r)\bar H(r)^\prime$.
\end{proof}

\vspace{0.3cm}

\begin{proof}[Proof of Proposition \ref{prop_tv_oirf}.]
\color{black}
For the convergence of $\widehat{\Sigma}_{[rT]}$ let us recall that $\Sigma(r)=G(r)G(r)'$ and the components $\{g_{k,l}(\cdot): 1\leq k,l \leq d\}$ of $G(\cdot)$ are bounded piecewise Lipschitz continuous functions. Let  $U_t(\vartheta)=u_t(\vartheta)u_t(\vartheta)'$ with $u_t(\vartheta)=X_t-(\widetilde{X}_{t-1}'\otimes I_d)\vartheta$ for some $\vartheta\in R^{d^2p}$. Thus $U_t(\vartheta_0)=u_tu_t'$ and  $U_t(\widehat \vartheta_{OLS})=\widehat u_t \widehat u_t'$. By elementary matrix algebra, 
\begin{multline*}
\left\| U_t(\widehat \vartheta_{OLS}) - U_t(\vartheta_0) \right\|_F \leq 2 d \sqrt{p} \|G\|_{\infty}\left\| \widehat \vartheta_{OLS} - \vartheta_0 \right\| \left\| \widetilde{X}_{t-1} \right\| \|\epsilon_t\|\\
+ d^2p  \left\| \widehat \vartheta_{OLS} - \vartheta_0 \right\|^2\left\| \widetilde{X}_{t-1} \right\| ^2.
\end{multline*}
Herein, $\|\cdot\|_F$, $\|\cdot\|$ and $\|\cdot\|_\infty$ are the Frobenius, Euclidian and uniform norms, respectively.  By the triangle inequality,  the monotonicity of $K(\cdot)$ and the rate of $\left\| \widehat \vartheta_{OLS} - \vartheta_0 \right\|$, deduce
\begin{multline*}
\sup_{\mathcal{B}_T} \! \left\| \widehat{\Sigma}_{[rT]} - \! \sum_{j=1}^{T} w_{[rT],j} \odot {u}_j{u}_j^\prime \right\|_F \! \! \leq \frac{1}{c_{min}Tb_T} \sum_{j=1}^{T} K\left( \! \frac{[rT]-j}{c_{max}b_T T} \right) \! \left\|   \widehat{u}_j\widehat{u}_j^\prime - u_j u_j'\right\|_F \\ = O_{\mathbb{P}} (1/\sqrt{T}).
\end{multline*}
Next, we can write
\begin{multline*}
 \sum_{j=1}^{T} w_{[rT],j} \odot {u}_j{u}_j^\prime=  \sum_{j=1}^{T}  w_{[rT],j}   \odot \left\{ {u}_j{u}_j^\prime  -  E\left( {u}_j{u}_j^\prime \right)\right\}+ \sum_{j=1}^{T}   w_{[rT],j}   \odot  E\! \left( {u}_j{u}_j^\prime \right) \\  =: \Sigma_{1,[rT]}+\Sigma_{2,[rT]}.
\end{multline*}
Let $\sigma_{1,[rT]}^{(k,l)}$, $\sigma_{2,[rT]}^{(k,l)}$, $\Sigma(r)^{(k,l)}$and $\Sigma_j^{(k,l)}$ denote the $(k,l)$ elements of the matrices $\Sigma_{1,[rT]}$, $\Sigma_{2,[rT]}$, $\Sigma(r)$ and $E ( {u}_j{u}_j^\prime )$, respectively.

First we study the bias. For any $r\in (0,1)$, since $K(\cdot)$ is symmetric, we have
\begin{multline*}
\sigma_{2,[rT]}^{(k,l)}= \frac{1}{Tb_{kl}}\sum_{j=1}^T K\left(  \frac{j-[rT]}{Tb_{kl}} \right) \Sigma_j^{(k,l)} \\ = \frac{1}{b_{kl}} \int_{[1/T, (1+T)/T)} K\left(  \frac{[sT]-[rT]}{Tb_{kl}} \right) \Sigma_{[sT]}^{(k,l)} ds 
\\\stackrel{z=(s-r)/b_{kl}}{=}   \int_{[(1-Tr)/Tb_{kl}, (1+T-Tr)/Tb_{kl})} K\left(  \frac{[(r+zb_{kl})T] - [rT]}{Tb_{kl}} \right) \Sigma(r+zb_{kl})^{(k,l)} dz \\
=   \int_{[-r/b_{kl}, 0)} K\left(  \frac{[(r+zb_{kl})T]-[rT]}{Tb_{kl}} \right) \left\{\Sigma(r+zb_{kl})^{(k,l)} - \Sigma(r+)^{(k,l)} \right\} dz \\
+  \int_{[0 , (1-r)/b_{kl})} K\left(  \frac{[(r+zb_{kl})T]-[rT]}{Tb_{kl}} \right) \left\{\Sigma(r+zb_{kl})^{(k,l)} - \Sigma(r-)^{(k,l)} \right\} dz \\
+ \Sigma(r+)^{(k,l)} \int_{[-r/b_{kl}, 0)} K\left(  \frac{[(r+zb_{kl})T]-[rT]}{Tb_{kl}} \right)dz \\ + \Sigma(r-)^{(k,l)} \int_{[0 , (1-r)/b_{kl})} K\left(  \frac{[(r+zb_{kl})T]-[rT]}{Tb_{kl}} \right)   dz + O(1/Tb_T) .
\end{multline*}
Next, on the intervals  where the Lipschitz property holds true, for $z\geq 0$ we have
$$
\left| \Sigma(r+zb_{kl})^{(k,l)} - \Sigma(r+)^{(k,l)} \right| \leq L c_{max} |z|b_T,
$$
and for $z<0$ we have
$$
 \left| \Sigma(r+zb_{kl})^{(k,l)} - \Sigma(r-)^{(k,l)} \right| \leq L c_{max} |z| b_T,
$$
for some constant $L$. Meanwhile, for any $b_{kl}\in\mathcal{B}_T$,
$$
 0 \leq \frac{[(r+zb_{kl})T] - [rT]}{Tb_{kl}} - z  \leq \frac{1}{T c_{min} b_T },
$$
so that, since $K(\cdot)$ is piecewise Lipschitz continuous, except at most a finite number of values $z$,
$$
\sup_{b_{kl} \in\mathcal{B}_T}\left|K\left(  \frac{[(r+zb_{kl})T]-[rT]}{Tb_{kl}} \right) - K(z)  \right| \leq \frac{1}{T c_{min} b_T }.
$$
Finally, for $r \in (0,1)$, $r/b_{kl}$ and $(1-r)/b_{kl}$ tend to infinity and thus
$$
\inf_{b_{kl} \in\mathcal{B}_T} \int_{[-r/b_{kl}, 0)} K \left( \frac{[(r+zb_{kl})T]-[rT]}{Tb_{kl}} \right) dz \uparrow \frac{1}{2}
$$
and
$$
\inf_{b_{kl} \in\mathcal{B}_T}  \int_{[0 , (1-r)/b_{kl})}  K \left( \frac{[(r+zb_{kl})T]-[rT]}{Tb_{kl}} \right) \uparrow \frac{1}{2}.
$$
The case $r=1$ could be treated with similar arguments. Gathering facts, deduce that, for any $r\in(0,1]$, the rate of the bias term is
$$
\sup_{\mathcal{B}_T} \left\| \widehat{\Sigma}_{2,[Tr]} - \frac{1}{2} \left\{\Sigma(r-) + \Sigma(r+)\right\}\right\|_F  = O\left( b_T+1/Tb_T \right).
$$

For the variance term $\Sigma_{1,[rT]}$, we could use the properties of the empirical process indexed by families of functions of polynomial complexity. Here the family of functions are indexed by the constants that multiplies the rate $b_T$ to define the bandwidths for each element $(k,l)$ in the matrix. The polynomial complexity is guaranteed by the monotonicity of $K(\cdot)$ and by the fact that the polynomial complexity is preserved by finite unions.  We apply  Theorem 3.1 in van der Vaart and Wellner (2011) for each component $(k,l)$ with the family
$$
\mathcal{F}_T = \{ (r,u^{(k)},u^{(l)}) \mapsto K(a - r/cb_T)  u^{(k)}u^{(l)}: r\in(0,1], a>0,c_{min}\leq c \leq c_{max}  \},
$$
the envelope $F(r,u^{(k)},u^{(l)})  = Cu^{(k)}u^{(l)}$ for some constant $C>0$,   $p=(\mu-4)/(\mu - 8)>1$, $\delta^2=c^\prime b_T$ for some constant $c^\prime >0$. In this case $J(\delta,\mathcal F, L_2) \leq C^\prime  \delta\sqrt{\log(1/\delta)} \leq C^{\prime \prime } \sqrt{b_T\log (T)}$, for some constants $C^\prime, C^{\prime\prime}>0$.
Deduce that
$$
\sup_{\mathcal{B}_T} \left\| \widehat{\Sigma}_{1,[Tr]} \right\|_F  = O_{\mathbb{P}} \left( \sqrt{\log(T)/Tb_T} \right).
$$
For the second part of the results on the matrices $H$, it suffices to apply a perturbation bound for the Cholesky factorization, as for instance in Theorem 3.1 of Chang and Stehl{\'e} (2010), to deduce
$$
\sup_{\mathcal{B}_T} \left\| \widehat{H}_{[Tr]} -  \frac{1}{2}  H_{\pm}(r)  \right\|_F \leq C \left\|  \frac{1}{2}  H_{\pm}(r)  \right\|_F \frac{\sup_{\mathcal{B}_T} \left\| \widehat{\Sigma}_{[Tr]} - \frac{1}{2} \left\{\Sigma(r-) + \Sigma(r+)\right\}\right\|_F  }{\left\|  \frac{1}{2} \left\{\Sigma(r-) + \Sigma(r+)\right\}\right\|_F },
$$
for some constant $C$. Now the proof in complete.
\end{proof}

\vspace{0.3cm}

\begin{proof}[Proof of Lemma \ref{lemma_aoirf}.]
First let us write from the mean value Theorem:
\begin{equation*}
\mbox{vech}(Int_{T,r,q} (\widehat{U}) )=\mbox{vech}(Int_{T,r,q} (U) )+\frac{\partial\mbox{vech}(Int_{T,r,q} (U(\vartheta)) )}{\partial\vartheta'}|_{\vartheta=\vartheta^*}(\widehat{\vartheta}_{OLS}-\vartheta_0),
\end{equation*}
where $\widehat{U}_t=\hat{u}_t\hat{u}_t'$, $U_t=u_tu_t'$, $U_t(\vartheta)=u_t(\vartheta)u_t(\vartheta)'$ and $u_t(\vartheta)=X_t-(\widetilde{X}_{t-1}'\otimes I_d)\vartheta$ for some $\vartheta\in R^{d^2p}$ and $\vartheta^*$ between $\widehat{\vartheta}_{OLS}$ and $\vartheta_0$. Noting that $\frac{\partial u_t(\vartheta)}{\partial\vartheta'}=-(\widetilde{X}_{t-1}'\otimes I_d)$, the consistency of $\widehat{\vartheta}_{OLS}$ and the fact that $u_t$ is not correlated with $\widetilde{X}_t$, we obtain using basic derivative rules:

$$\frac{\partial\mbox{vech}(Int_{T,r,q} (U(\vartheta))}{\partial\vartheta'}|_{\vartheta=\vartheta^*}=o_p(1).$$
Using the $\sqrt{T}$-convergence of $\widehat{\vartheta}_{OLS}$ (see (\ref{res2ols})), this implies that
\begin{equation}\label{withouthatsame}
\sqrt{T}\mbox{vech}(Int_{T,r,q} (\widehat{U})-Int_{r,q}(\Sigma))=\sqrt{T}\mbox{vech}(Int_{T,r,q} (U)-Int_{r,q}(\Sigma))+o_p(1),
\end{equation}
where we recall that $Int_{r,q}(\Sigma)=q^{-1}\int_{r-q/2}^{r+q/2}\Sigma(v)dv$.

Next, we investigate the joint distribution of
\begin{equation}\label{joint}
\sqrt{T}\left[(\widehat{\vartheta}_{OLS}-\vartheta_0)',\left\{\mbox{vech}(Int_{T,r,q} (U) -Int_{r,q}(\Sigma))\right\}'\right]'.
\end{equation}
We write:
{\small
$$\left(\!\!
    \begin{array}{c}
      \widehat{\vartheta}_{OLS}-\vartheta_0 \\
      \mbox{vech}(Int_{T,r,q} (U)-Int_{r,q}(\Sigma)) \\
    \end{array}
  \!\!\right)
=\left(\!\!
    \begin{array}{cc}
      \left\{Int_{T,0.5,1} (\widetilde{\mathbf{X}}) )\otimes I_d\right\}^{-1} & 0 \\
      0 & I_{d(d+1)/2} \\
    \end{array}
  \!\!\right)\! \left(\!\!
           \begin{array}{c}
            \Upsilon_t^1  \\
            \Upsilon_t^2  \\
           \end{array}
        \!\! \right),
$$}
where $\widetilde{\mathbf{X}}_{t-1}=\widetilde{X}_{t-1}\widetilde{X}_{t-1}'$, $\Upsilon_t^1=\mbox{vec}(Int_{T,0.5,1} (X^u) )$, with $X^u_{t}=u_t\widetilde{X}_t'$ and
$$\Upsilon_t^2=\mbox{vech}(Int_{T,r,q} (U) -Int_{r,q}(\Sigma)).$$ The vector
$\Upsilon_t=(\Upsilon_t^{1'},\Upsilon_t^{2'})'$ is a martingale difference since the process $(u_t)$ is independent. On the other hand we have
$T^{-1}\sum_{t=1}^{T}Int_{T,0.5,1} (\widetilde{\mathbf{X}})\otimes I_d\rightarrow\Lambda_3$, from Patilea and Ra\"{\i}ssi (2012). Then from the Lindeberg CLT and the Slutsky Lemma, (\ref{joint}) is asymptotically normally distributed with mean zero. For the asymptotic covariance matrix in (\ref{res0}), the top left block is given from the asymptotic normality result (\ref{res2ols}), while the bottom right block can be obtained using the same arguments of Patilea and Ra\"{\i}ssi (2010), Lemma 7.1, 7.2, 7.3 and 7.4. The asymptotic covariance matrix is block diagonal since we assumed that $E(u_{it}u_{jt}u_{kt})=0$, $i,j,k\in\{1,\dots,d\}$ in {\bf A1}, together with considering again that $u_t$ is independent with respect to the past of $X_t$. Hence the asymptotic matrix of (\ref{joint}) is given as in (\ref{res0}).
\end{proof}

\vspace{0.3cm}

\color{black}
To simplify the reading, before proceeding to the next proofs, let us put the orthogonal impulse response function (OIRF) notation in a nutshell. First,  let $S\mapsto C(S)$ be the operator that maps a positive definite matrix into the lower triangular matrix of the Cholesky decomposition of $S$. Next, consider a matrix-valued function $r\mapsto A(r)$,  $r\in (0,1]$, and, for any $r\in (0,1]$, $0<q<1$ such that $0 <r-q/2 < r +q/2 <1$, let
$$
Int_{r,q} (A) = \frac{1}{q}\int_{r-q/2}^{r+q/2} A(v)dv \qquad  Int_{T,r,q} (A) =   \frac{1}{[qT]+1}\sum_{k=-[qT/2]}^{[qT/2]}  A_{[rT]-k}\color{black}.
$$
If $\sup_{r\in (0,1)} \|A(r)\|_F<\infty$ and the components of $A(\cdot)$
are piecewise Lipschitz continuous on each sub-intervals of a finite number  partition of $(0,1]$, then
there exists a constant $c$ such that
$$
\sup_{r,q} \left\|Int_{r,q} (A) - Int_{T,r,q} (A) \right\|_F \leq c T^{-1}.
$$
Now,  we could rewrite the theoretical IRF we introduced above as follows: for any $i\geq 1$,
$$
\text{(approximated OIRF)}  \qquad \widetilde{\theta}_{r}^{q}(i)=\Phi_i\widetilde{H}(r) = \Phi_i C(Int_{r,q}(\Sigma)),
$$
and
$$
\text{(averaged OIRF)} \qquad \bar{\theta}_{r}^{q}(i):=\Phi_i\left\{\frac{1}{q}\int_{r-q/2}^{r+q/2}H(v)dv\right\} =  \Phi_i Int_{r,q}(C(\Sigma)).
$$
Moreover, the estimators we introduced could be rewritten as follows: with the matrix-valued function $r\mapsto \widehat{U}(r)=\widehat{u}_{[rT]}\widehat{u}_{[rT]}'$, the usual approximated OIRF estimator
is
$$
\hat{\widetilde{\theta}}_{r}^{q}(i)=\widehat{\Phi}_i^{ols}\widehat{\widetilde H}(r) 
= \widehat{\Phi}_i^{ols} C\left(Int_{T,r,q} \left(\widehat U\right) \right);
$$
the  new approximated OIRF estimator 
is
$$
\hat{\widetilde{\theta}}_{r}^{q,als}(i)=
\widehat{\Phi}_i^{als} C\left(Int_{T,r,q} \left(\widehat U\right)\right);
$$
and the averaged OIRF estimator 
is
$$
\hat{\bar{\theta}}_{r}^{q}(i)= 
\widehat{\Phi}_i^{als} \frac{q+h}{q}  Int_{T,r, q+h} \left(C\left(\widehat V\right)\right),
$$
with
$$
\frac{q+h}{q} Int_{T,r,q+ h} \left(C\left(\widehat V\right)\right) = \bar{\widehat{H}}(r)  \{1+ O(1/T)\}
= 
 \frac{1+ O(1/T)}{[qT]+1}\sum_{k=-[( q+h)T/2]}^{[( q+h)T/2]}\widehat{H}_{[rT]-k} ;
$$
and  $\widehat H_{[rT]-k }$  the lower triangular matrix of the
Cholesky decomposition of ${\widehat V}_{[rT]-k} $, where $\widehat V_{[rT]-k} $ is defined in equation (\ref{nnapprx2}).


Next, let us recall the differentiation formula of the Cholesky operator
\begin{equation}\label{diff_Ch}
\Delta:= \frac{\partial\mbox{vec} (C (\Sigma))}{\partial\mbox{vec} (\Sigma)} = (I_d\otimes  C (\Sigma)) Z (C (\Sigma)^{-1}\otimes C (\Sigma)^{-1}) ,
\end{equation}
where $Z$ is a diagonal matrix such that $Z  \mbox{vec} (A) = \mbox{vec} (\Phi (A))$ for any $d\times d-$matrix $A$. Here $\Phi$  takes the lower-triangular part
of a matrix and halves its diagonal:
$$
\Phi(A)_{ij} = \left\{
\begin{tabular}{ll}
$A_{ij}$ & \; $i>j$ \\
$\frac{1}{2}A_{ij}$  & \; $i=j$\\
0 & \; $i<j$
\end{tabular}
\right. .
$$
Note that
$$
 (C (\Sigma)^{-1}\otimes C (\Sigma)^{-1}) \mbox{vec} \left( \Sigma \right) = \mbox{vec}\left(C (\Sigma)^{-1}  \Sigma(C (\Sigma)^\prime)^{-1} \right) =  \mbox{vec}(I_d)
$$
and thus
\begin{multline*}
\Delta\mbox{vec} \left(\Sigma \right) = (I_d\otimes  C (\Sigma)) Z\mbox{vec}(I_d)  =(I_d\otimes  C (\Sigma)) \mbox{vec} (\Phi (I_d)) \\ = \mbox{vec}(C (\Sigma) \Phi (I_d)) = \frac{1}{2}  \mbox{vec}(C (\Sigma) ).
\end{multline*}

\vspace{0.3cm}

\begin{proof}[Proof of Proposition \ref{resapprox}.]
Using our notations we write
$$
\hat{\widetilde{\theta}}_{r}^{q}(i)-\widetilde{\theta}_{r}^{q}(i)=\widehat{\Phi}_i^{ols}C(Int_{T,r,q}(\widehat{U}))-\Phi_iC(Int_{r,q}(\Sigma)).
$$
From (\ref{withouthatsame}) and the consistency of the OLS estimator, we have
\begin{multline*}
\!\!\!\sqrt{T}(\widehat{\Phi}_i^{ols}C(Int_{T,r,q}(\widehat{U}))\!-\!\Phi_iC(Int_{r,q}(\Sigma)))\!=\!
\sqrt{T}(\widehat{\Phi}_i^{ols}C(Int_{T,r,q}(U))\!-\!\Phi_iC(Int_{r,q}(\Sigma)))\\+o_p(1).
\end{multline*}
Now let us write
\begin{multline}
  \sqrt{T}\mbox{vec}\left[\widehat{\Phi}_i^{ols}C(Int_{T,r,q}(U)-\Phi_iC(Int_{r,q}(\Sigma))\right] = \sqrt{T}\mbox{vec}\left[(\widehat{\Phi}_i^{ols}-\Phi_i)C(Int_{r,q}(\Sigma))\right.\label{rhs1} \\
  + \Phi_i(C(Int_{T,r,q}(U)-C(Int_{r,q}(\Sigma)) \\
  + \left.(\widehat{\Phi}_i^{ols}-\Phi_i)(C(Int_{T,r,q}(U))-C(Int_{r,q}(\Sigma)))\right].
\end{multline}
For the third term in the right hand side of (\ref{rhs1}), $\sqrt{T}\mbox{vec}\{\widehat{\Phi}_i^{ols}-\Phi_i\}$ is asymptotically normal as we can apply the delta method from {\bf A0}(b), Lemma \ref{lemma_aoirf} and Rule (8) Appendix A.13 in L\"{u}tkepohl (2005). Similarly using Rule (10) in Appendix A.13 of L\"{u}tkepohl (2005) and Lemma \ref{lemma_aoirf} again,
$\sqrt{T}\mbox{vec}\{(C(Int_{T,r,q}(U))-C(Int_{r,q}(\Sigma)))\}$ is asymptotically normal. Hence we have
$$
(\widehat{\Phi}_i^{ols}-\Phi_i)(C(Int_{T,r,q}(U))-C(Int_{r,q}(\Sigma)))=O_p(T^{-1}),
$$
so that we obtain
\begin{multline*}
  \sqrt{T}\mbox{vec}\left[\widehat{\Phi}_i^{ols}C(Int_{T,r,q}(U)-\Phi_iC(Int_{r,q}(\Sigma))\right]= \sqrt{T}\mbox{vec}\left[(\widehat{\Phi}_i^{ols}-\Phi_i)C(Int_{r,q}(\Sigma))\right. \\
  + \Phi_i(C(Int_{T,r,q}(U)-C(Int_{r,q}(\Sigma))\Big]+o_p(1).
\end{multline*}
For the right-hand side of this equation, by the identity vec$(ABC)=(C'\otimes A)\mbox{vec}(B)$ for matrices of adequate dimensions:
$$
\mbox{vec}\left\{(\widehat{\Phi}_i^{ols}-\Phi_i)C(Int_{r,q}(\Sigma))\right\}=(C(Int_{r,q}(\Sigma))'\otimes I_d)\mbox{vec}(\widehat{\Phi}_i^{ols}-\Phi_i),
$$
and
\begin{multline*}
\mbox{vec}\left\{\Phi_i(C(Int
_{T,r,q}(U)-C(Int_{r,q}(\Sigma))\right\}=(I_d\otimes\Phi_i)\mbox{vec}\{C(Int_{T,r,q}(U)-C(Int_{r,q}(\Sigma))\}\\
=  (I_d\otimes\Phi_i)\Delta \left\{ \mbox{vec}(Int_{T,r, q}(U) ) -\mbox{vec}(Int_{r,q}(\Sigma)\right\} \{1+o_{\mathbb P}(1)\}.
\end{multline*}
For the last equality we used \eqref{diff_Ch} and the delta method argument. The convergence  (\ref{histapproxasymp})
follows by Lemma \ref{lemma_aoirf}  and the CLT.
\end{proof}

\vspace{0.3cm}


\begin{proof}[Proof of Proposition \ref{propostu}]
Let us fix $\widetilde q \in (0,1)$ and consider the definitions in equations (\ref{nnapprx2}) and (\ref{accumulatedgls_H}) with the generic $\widetilde q$ replacing $q$. Note that, given $A(\cdot)$ a $d \times d-$matrix valued function defined on $(0,1]$ with differentiable elements that have Lipschitz continuous derivatives, we have
\begin{multline}\label{equiv_crucial}
Int_{T,r,\widetilde q+2h} \left(\mbox{vec} (A \widehat V) \right) = 
\frac{1}{[(\widetilde q+2h)T]+1}\sum_{k=-[(\widetilde q+2h)T/2]}^{[(\widetilde q+2h)T/2]} \mbox{vec} ({A}_{[rT]-k}{\widehat V}_{[rT]-k}) \\
=\frac{1}{\widetilde q+2h}\; \frac{1}{T}  \sum_{j=[(r-\widetilde q/2)T]+1}^{[(r+ \widetilde q/2)T]} \mbox{vec} \left(  \left[ \int_{r- \widetilde q/2-h}^{r+ \widetilde q/2+h} \frac{1}{h} L\left( \frac{v-j/T}{h} \right)  A(v)  dv \right] \widehat{u}_j \widehat{u}^{\prime}_j\right)+O_{\mathbb{P}}(1/Th) \\
=\frac{1}{\widetilde q+2h}\; \frac{1}{T}  \sum_{j=[(r- \widetilde q/2)T]+1}^{[(r+\widetilde q/2)T]} \! \mbox{vec} \left(  A(j/T) \widehat{u}_j \widehat{u}^{\prime}_j \right)+O_{\mathbb{P}}(h^2+1/Th) \\=\frac{\widetilde q}{\widetilde q+2h}\; \frac{1}{[\widetilde qT]+1 }  \sum_{j=-[\widetilde qT/2]}^{[\widetilde qT/2]}   \mbox{vec} \left(  A(([rT]-j)/T) \widehat{u}_{[rT]-j}\widehat{u}_{[rT]-j}' \right)+O_{\mathbb{P}}(h^2+1/Th)
\\= 
\frac{\widetilde q}{\widetilde q+2h}\;  Int_{T,r,\widetilde q} \left(\mbox{vec} (A\widehat U)\right) + O_{\mathbb{P}}(h^2+1/Th).
\end{multline}

Moreover,  we can write
$$
\mbox{vec} (C (\widehat V)) \! - \mbox{vec} (C (\Sigma)) = \Delta \left[ \mbox{vec} (\widehat V )\!-  \mbox{vec} \left(\Sigma\right) \right] \! \{1+o_{\mathbb{P}} (1)\}.
$$
Gathering facts, we could now study the asymptotic equivalent of $\mbox{vec}(Int_{T,r,q+h} (C(\widehat V)) )$. We have
\begin{multline*}
\mbox{vec}\left(Int_{T,r,q+h} \left(C\left(\widehat V\right)\right) \right) =Int_{T,r,q+h} \left(\mbox{vec} \left(C\left(\widehat V\right)\right) \right) \\
=Int_{r,q+h} \left(\mbox{vec} \left(C\left(\Sigma\right)\right) \right) + O_{\mathbb{P}}(1/Th)\\
+ \left\{ Int_{T,r,q+h}  (\Delta\mbox{vec}(\widehat V)) -  Int_{r,q+h}  (\Delta\mbox{vec}(\Sigma)) + O_{\mathbb{P}}(1/Th)\right\}  \{1+o_{\mathbb{P}} (1)\} \\
= \frac{q}{q+h} Int_{r,q} \left(\mbox{vec} \left(C\left(\Sigma\right)\right) \right) + O_{\mathbb{P}}(1/Th)\\+\frac{1}{q+h}\int_{r-(q+h)/2}^{r-q/2} \mbox{vec}(C\left(\Sigma(v)\right))  dv  +\frac{1}{q+h}\int^{r+(q+h)/2}_{r+q/2} \mbox{vec} (C\left(\Sigma(v)\right) )dv \\
+ \left\{\frac{q\!-\!h}{q\!+\!h}Int_{T,r,q-h}  (\Delta\mbox{vec}(\widehat U))\! + O_{\mathbb{P}}(h^2\!+\! 1/Th)\! - \! Int_{r,q+h}  (\Delta\mbox{vec}(\Sigma)) + O_{\mathbb{P}}(1/Th) \right\}\\ \times  \{1+o_{\mathbb{P}} (1)\},
\end{multline*}
where for replacing $Int_{T,r,q+h}  (\Delta\mbox{vec}(\widehat V))  $ we use the equation (\ref{equiv_crucial}) with $q-h$ instead of $\widetilde q$.
Moreover, since $\Delta\mbox{vec} \left(\Sigma \right) = (1/2)  \mbox{vec}(C (\Sigma) )$, we also have
\begin{multline*}
Int_{r,q+h}  (\Delta\mbox{vec}(\Sigma)) = \frac{q-h}{q+h} Int_{r,q-h}  (\Delta\mbox{vec}(\Sigma)) \\
+\frac{1}{2} \; \frac{1}{q+h} \int^{r-q/2}_{r-(q+h)/2} \mbox{vec} (C\left(\Sigma(v)\right) )dv+\frac{1}{2} \; \frac{1}{q+h} \int^{r-(q-h)/2}_{r-q/2}
 \mbox{vec} (C\left(\Sigma(v)\right) )dv\\
+\frac{1}{2} \; \frac{1}{q+h} \int^{r+(q+h)/2}_{r+q/2}
\mbox{vec} (C\left(\Sigma(v)\right) )dv+\frac{1}{2} \; \frac{1}{q+h} \int^{r+q/2}_{r+(q-h)/2} \mbox{vec} (C\left(\Sigma(v)\right) )dv
\\=\frac{q-h}{q+h} Int_{r,q-h}  (\Delta\mbox{vec}(\Sigma)) \\
+ \frac{1}{q+h} \int^{r-q/2}_{r-(q+h)/2} \mbox{vec} (C\left(\Sigma(v)\right) )dv+ \frac{1}{q+h} \int^{r+(q+h)/2}_{r+q/2} \mbox{vec} (C\left(\Sigma(v)\right) )dv + O(h^2),
\end{multline*}
 where for the last equality we use the change of variables $v\rightarrow v -h/2$ (resp. $v\rightarrow v + h/2$) in the integral on the interval $[r-q/2, r-(q-h)/2]$ (resp.
 $[r+q/2, r+(q+h)/2]$) and  the Lipschitz property of the elements on $\Sigma(\cdot)$. Thus, we could write
 \begin{multline}\label{id78}
 \mbox{vec}\left(Int_{T,r,q+h} \left(C\left(\widehat V\right)\right) \right) =\frac{q}{q+h} Int_{r,q} \left(\mbox{vec} \left(C\left(\Sigma\right)\right) \right) \\
 +\frac{q-h}{q+h} \left\{ Int_{T,r,q-h}  (\Delta\mbox{vec}(\widehat U)) - Int_{r,q-h}  (\Delta\mbox{vec}(\Sigma))\right\} + O_{\mathbb{P}}(h^2+1/Th).
 \end{multline}
That means
\begin{multline}\label{iid_rep_c}
\sqrt{T} \left( \frac{q+h}{q}  \mbox{vec}\left(Int_{T,r,q+h} \left(C\left(\widehat V\right)\right) \right) -\mbox{vec}  \left(Int_{r,q}\left(C\left(\Sigma\right)\right) \right) \right) \\
= \left\{\!1\! - \frac{h}{q}\right\} \sqrt{T} \left\{ Int_{T,r,q-h}  (\Delta\mbox{vec}(\widehat U)) - Int_{r,q-h}  (\Delta\mbox{vec}(\Sigma))\right\} + O_{\mathbb{P}}(\sqrt{Th^4}+1/\sqrt{Th^2})\\
=\sqrt{T} \left\{ Int_{T,r,q-h}  (\Delta\mbox{vec}(\widehat U)) - Int_{r,q-h}  (\Delta\mbox{vec}(\Sigma))\right\} + O_{\mathbb{P}}(h+\sqrt{Th^4}+1/\sqrt{Th^2}).
\end{multline}
It also means that
\begin{equation}\label{utile_q}
\frac{q+h}{q}  \mbox{vec}\left(Int_{T,r,q+h} \left(C\left(\widehat V\right)\right) \right) = \mbox{vec}  \left(Int_{r,q}\left(C\left(\Sigma\right)\right) \right)+ O_{\mathbb{P}}(1/\sqrt{T}).
\end{equation}
Now, we have the ingredients to derive the asymptotic normality of our averaged OIRF estimator
$$
\hat{\bar{\theta}}_{r}^{q}(i)= 
\widehat{\Phi}_i^{als} \frac{q+h}{q}  Int_{T,r, q+h} \left(C\left(\widehat V\right)\right),
$$
of the averaged OIRF $  \bar{\theta}_{r}^{q}(i) =  \Phi_i Int_{r,q}(C(\Sigma)).
$
First, note that by \eqref{utile_q} and the $\sqrt{T}-$convergence of $\mbox{vec}(\widehat{\Phi}_i^{als})$
\begin{multline*}
\sqrt{T}\mbox{vec} \left(   \hat{\bar{\theta}}_{r}^{q}(i) -  \bar{\theta}_{r}^{q}(i) \right) = \mbox{vec}\bigg[\sqrt{T} \left(\widehat{\Phi}_i^{als}  - \widehat{\Phi}_i  \right)  Int_{r,q}(C(\Sigma))\\ \left. + \widehat{\Phi}_i \sqrt{T} \left\{ \frac{q+h}{q}  Int_{T,r, q+h} \left(C\left(\widehat V\right)\right) - Int_{r,q}(C(\Sigma)) \right\} \right]+ o_{\mathbb{P}}(1).
\end{multline*}
By \eqref{iid_rep_c}, the $\sqrt{T}-$asymptotic normality of
\begin{multline*}
( I_d\otimes \Phi_i ) \left\{ \!\frac{q+h}{q}  \mbox{vec}\left (Int_{T,r,q+h} (C (\widehat V)) \right) \! -\mbox{vec}  \left(Int_{r,q}\left(C\left(\Sigma\right)\right) \right)\! \right\} \\
 = ( I_d\otimes \Phi_i ) \left\{ \mbox{vec}\left (\bar{\widehat{H}}(r)  \right) -\mbox{vec}  \left(Int_{r,q}\left(C\left(\Sigma\right)\right) \right) \right\}
\end{multline*}
follows from the CLT applied to
\begin{multline*}
\sqrt{T} ( I_d\otimes \Phi_i ) \left\{ Int_{T,r,q-h}  (\Delta\mbox{vec}(\widehat U)) - Int_{r,q-h}  (\Delta\mbox{vec}(\Sigma))\right\} \\
= \sqrt{T} ( I_d\otimes \Phi_i ) \Delta \left\{\mbox{vec}( Int_{T,r,q-h}  (\widehat U)) - \mbox{vec}(Int_{r,q-h} (\Sigma))\right\}\\
= \sqrt{T} ( I_d\otimes \Phi_i ) \Delta \left\{\mbox{vec}( Int_{T,r,q-h}  (U)) - \mbox{vec}(Int_{r,q-h} (\Sigma))\right\}\{1+o_{\mathbb P }(1)\}
\\
= \sqrt{T} ( I_d\otimes \Phi_i ) \Delta \left\{\mbox{vec}( Int_{T,r,q}  (U)) - \mbox{vec}(Int_{r,q} (\Sigma))\right\}\{1+o_{\mathbb P }(1)\}.
\end{multline*}
The result follows from the  $\sqrt{T}-$asymptotic normality of $\mbox{vec}(\widehat{\Phi}_i^{als})$ and the zero-mean condition for the product of any three components of the error vector, see Assumption \textbf{A1}. \end{proof}

\vspace{0.3cm}

Let us note that, taking $A(\cdot)$ equal to the identity matrix $I_d$ in \eqref{equiv_crucial} we can deduce   that  new approximated OIRF estimator  could be equivalently defined, with $\widetilde q = q$, as equal to
$$
 \sqrt{\frac{q+2h}{q}} \; \widehat{\Phi}_i^{als}C(Int_{T,r,q+2h} (\widehat V)) ,
$$
where here $\widehat V$ is defined in \eqref{nnapprx2}.
The difference between the two definitions is asymptotically negligible. More precisely,
\begin{multline*}
\hat{\widetilde{\theta}}_{r}^{q,als}(i)=
\widehat{\Phi}_i^{als} C\left(Int_{T,r,q} \left(\widehat U\right)\right)
\\ =
 \sqrt{\frac{q+2h}{q}} \; \widehat{\Phi}_i^{als}C(Int_{T,r,q+2h} (\widehat V)) +
 O_{\mathbb{P}}(h^2+1/Th)\\
= \sqrt{\frac{q+2h}{q}} \; \widehat{\Phi}_i^{als}C(Int_{T,r,q+2h} (\widehat V)) +
 o_{\mathbb{P}}(1/\sqrt{T}),
\end{multline*}
provided $Th^4+1/Th^2 \rightarrow 0$.

\vspace{0.3cm}

\begin{proof}[Proof of Proposition \ref{asy_index}]
In the sequel, when we use the $o_{\mathbb{P}}(\cdot)$ and $O_{\mathbb{P}}(\cdot)$
symbols for a vector or a matrix, it should be understood as used for their norms.
Recall that
\begin{equation*}
i_{r,q} =  \left\| \bar H(r)^{-1}\widetilde H(r) \right\|_2^2 ,
\end{equation*}
where
$$
\bar{H}(r)= Int_{r,q} (C(\Sigma))
\quad
\text{and}
\quad
\widetilde H(r) = C(Int_{r,q} (\Sigma)).
$$
The estimator we propose is
$$
\widehat i_{r,q} =
\left\| \bar{\widehat {H}}(r)^{-1}\; \widehat{\widetilde H}(r) \right\|_2^2
$$
where
$$
  \bar{\widehat{H}}(r)=   \frac{1}{[ q T]+1}  \sum_{k=-[(q + h)T/2]}^{[(q+ h)T/2]}\widehat{H}_{[rT]-k}= \frac{q+h}{q} Int_{T,r,q+h}(C(\widehat V))
$$
and
$$
\widehat{\widetilde H}(r) = C (\widehat S_T(r))
$$
with $\widehat S_T(r)$ some estimator of $q^{-1}\int_{r-q/2}^{r+q/2}\Sigma(v)dv$.

By \eqref{id78}
\begin{multline*}
 \mbox{vec}\left( \bar{\widehat{H}}(r) \right) =Int_{r,q} \left(\mbox{vec} \left(C\left(\Sigma\right)\right) \right) \\
 +\frac{q-h}{q} \left\{ Int_{T,r,q-h}  (\Delta\mbox{vec}( U)) - Int_{r,q-h}  (\Delta\mbox{vec}(\Sigma))\right\} + O_{\mathbb{P}}(h^2+1/Th)\\
 =Int_{r,q} \left(\mbox{vec} \left(C\left(\Sigma\right)\right) \right)
 +\frac{q-h}{q} \left\{  \Delta\mbox{vec}( Int_{T,r,q-h} (U)-Int_{r,q-h}(\Sigma))\right\} + o_{\mathbb{P}}(1/\sqrt{T})
 \\ = : \mbox{vec}\left( \bar{H}(r) \right)  + \frac{q-h}{q} G_{T,r,q-h} + o_{\mathbb{P}}(1/\sqrt{T}),
\end{multline*}
with $\Delta$ defined in \eqref{diff_Ch}. If we consider
$$
\widehat S_T(r) =  Int_{T,r,q}  (\widehat  U) =  Int_{T,r,q}  (  U) + o_{\mathbb P}(1/\sqrt{T})
$$
and use the identity
\begin{multline*}
\mbox{vec} (C (\widehat S_T(r))) - \mbox{vec} (C (Int_{r,q}(\Sigma))) = \Delta \left[ \mbox{vec} (\widehat S_T(r) )-  \mbox{vec} \left(Int_{r,q}(\Sigma))\right) \right] \{1+o_{\mathbb{P}} (1)\}\\
= \Delta \left[ \mbox{vec} (Int_{T,r,q}  (  U)-Int_{r,q}(\Sigma))\right]  \{1+o_{\mathbb{P}} (1)\},
\end{multline*}
we deduce
$$
 \mbox{vec}\left( \widehat{\widetilde{H}}(r) \right) = \mbox{vec}\left( \widetilde{H}(r) \right)  + G_{T,r,q} + o_{\mathbb{P}}(1/\sqrt{T}).
$$
Note that
$$
\frac{q-h}{q} G_{T,r,q-h} - G_{T,r,q} =  O_{\mathbb{P}}(h/\sqrt{T}).
$$

We deduce from above
\begin{multline*}
\bar{\widehat {H}}(r)^{-1}\; \widehat{\widetilde H}(r) = \left[ I_d+ \bar{H}(r)^{-1}\left\{\bar{\widehat {H}}(r)-\bar{H}(r)\right\} \right]^{-1} \!\bar{H}(r)^{-1} \!\left[  \widetilde H(r) + \left\{\widehat{\widetilde H}(r)-\widetilde H(r) \right\}\right]\\
= \left[ I_d-  \bar{H}(r)^{-1}\left\{\bar{\widehat {H}}(r)-\bar{H}(r)\right\} + O_{\mathbb{P}}(1/T) \right]\bar{H}(r)^{-1} \left[  \widetilde H(r) + \left\{\widehat{\widetilde H}(r)-\widetilde H(r) \right\}\right]\\
= \bar{H}(r)^{-1}   \widetilde H(r) + \bar{H}(r)^{-1}\left\{\widehat{\widetilde H}(r)-\widetilde H(r)\right\}-\bar{H}(r)^{-1}\left\{\bar{\widehat {H}}(r)-\bar{H}(r)\right\}\bar{H}(r)^{-1}   \widetilde H(r)\\ + O_{\mathbb{P}}(1/T)\\
= \bar{H}(r)^{-1}   \widetilde H(r) + \bar{H}(r)^{-1}\left\{\mbox{ivec} (G_{T,r,q})\right\}\!-\frac{q-h}{q}\bar{H}(r)^{-1}\!\left\{\mbox{ivec} (G_{T,r,q-h})\right\}\bar{H}(r)^{-1}   \widetilde H(r)\\ + O_{\mathbb{P}}(h^2+1/Th+1/T),
\end{multline*}
where $\mbox{ivec}(\cdot)$ denotes the inverse of the $\mbox{vec}(\cdot)$ operator: for any matrix $A$, $\mbox{ivec} (\mbox{vec} (A))=A$. In particular, we deduce that in the case where $\Sigma(\cdot)$ is constant on the interval $[r-q/2, r+q/2]$, and thus  $i_{r,q} = 1$, we have
$$
\left\| \bar{\widehat {H}}(r)^{-1}\; \widehat{\widetilde H}(r) - I_d \right\|_2 =  o_{\mathbb P}(1/\sqrt{T}).
$$
As a consequence,
\begin{multline*}
\left| \; \left\| \bar{\widehat {H}}(r)^{-1}\; \widehat{\widetilde H}(r) \right\|_2 - 1\; \right| = \left|
\left\| \bar{\widehat {H}}(r)^{-1}\; \widehat{\widetilde H}(r) \right\|_2 - \left\|I_d \right\|_2\right| \\ \leq \left\| \bar{\widehat {H}}(r)^{-1}\; \widehat{\widetilde H}(r) - I_d \right\|_2 =  o_{\mathbb P}(1/\sqrt{T}),
\end{multline*}
and thus
$$
 \widehat i_{r,q} - 1 = o_{\mathbb P}(1/\sqrt{T}).
$$

In the case where $\Sigma(\cdot)$ is not constant on the interval $[r-q/2, r+q/2]$, and thus $i_{r,q} > 1$, let us note that
$i_{r,q}$ is also the largest eigenvalue of the symmetric matrix
 \begin{equation}\label {matrix6}
 {\widetilde H} (r) ^\prime \bar { H}(r) ^{-1\prime }  \bar { H}(r) ^{-1} {\widetilde H} (r) .
 \end{equation}
 By the decomposition of $\bar{\widehat {H}}(r)^{-1}\; \widehat{\widetilde H}(r)$ we have
 \begin{multline*}
\widehat{\widetilde H} (r) ^\prime \bar {\widehat H}(r) ^{-1\prime }  \bar {\widehat H}(r) ^{-1} \widehat{\widetilde H} (r)\\ =
 \left\{\bar{H}(r)^{-1}   \widetilde H(r) + M_{T,r,q} + o_{\mathbb P}(1/\sqrt{T})\right\}^\prime  \times \left\{\bar{H}(r)^{-1}   \widetilde H(r) + M_{T,r,q} + o_{\mathbb P}(1/\sqrt{T})\right\}
 \\
 = \widetilde H (r) ^\prime \bar {H}(r) ^{-1\prime }  \bar {H}(r) ^{-1} {\widetilde H} (r) + \mathcal H_{T,r,q} + o_{\mathbb P}(1/\sqrt{T}),
 \end{multline*}
 where
 $$
 \mathcal H_{T,r,q} = M_{T,r,q}^\prime \bar{H}(r)^{-1}   \widetilde H(r)  + \widetilde H (r) ^\prime \bar {H}(r) ^{-1\prime }M_{T,r,q}  ,
 $$
$$
M_{T,r,q} = \bar{H}(r)^{-1}\left\{\mbox{ivec} (G_{T,r,q})\right\}-\bar{H}(r)^{-1}\left\{\mbox{ivec} (G_{T,r,q})\right\}\bar{H}(r)^{-1}   \widetilde H(r)
$$
and, recall,  $G_{T,r,q}= \Delta\mbox{vec}( Int_{T,r,q-h} (U)-Int_{r,q-h}(\Sigma)).$
By the delta-method and the differential of the first eigenvalue of a symmetric matrix, see Theorem 7, section 8, Magnus and Neudecker (1988),
$$
\sqrt{T}\left( \widehat i_{r,q} - i_{r,q} \right) = \upsilon_1 ^\prime \sqrt{T}\mathcal H_{T,r,q}\upsilon_1 + o_{\mathbb P}(1) = (\upsilon_1 ^\prime \otimes \upsilon_1 ^\prime) \mbox{vec}(\sqrt{T}\mathcal H_{T,r,q})+ o_{\mathbb P}(1),
$$
with $\upsilon_1$ a normalized eigenvector associated to the largest eigenvalue $i_{r,q}$ of the matrix \eqref{matrix6}. Finally,
CLT guarantees that $\mbox{vec}(\sqrt{T}\mathcal H_{T,r,q})$ convergences in distribution to a Gaussian limit. The result follows.  \end{proof}

\newpage

\section*{References}
    \begin{description}
\item[]{\sc Aue, A., H\"{o}rmann S., Horv\`{a}th L.
    and Reimherr, M.} (2009) Break detection in the covariance structure of multivariate time series models.
    \textit{Annals of Statistics} 37, 4046-4087.
\item[] {\sc Alter, A., and Beyer, A.} (2014) The dynamics of spillover effects during the European sovereign debt turmoil. \textit{Journal of Banking and Finance} 42, 134-153.
\item[] {\sc Beetsma, R., and Giuliodori, M.} (2012) The changing macroeconomic response to stock market volatility shocks. \textit{Journal of Macroeconomics} 34, 281-293
\item[] {\sc Benkwitz, A., L\"{u}tkepohl, H., and Neumann, M.H.} (2000) Problems related to confidence intervals for impulse response of autoregressive processes. \textit{Econometric Review} 19, 69-103.
\item[] {\sc Bernanke, B.S., and Mihov, I.} (1998a) Measuring monetary policy. \textit{The Quarterly Journal of Economics} 113, 869-902.
\item[] {\sc Bernanke, B.S., and Mihov, I.} (1998b) The liquidity effect and long-run neutrality. \textit{Carnegie-Rochester Conference Series on Public Policy} 49, 149-194.
\item[] {\sc Blanchard, O., and Simon, J.} (2001) The long and large decline in U.S. output volatility. \textit{Brookings Papers on Economic Activity} 1, 135-164.
\item[] {\sc Cavaliere, G., Rahbek, A., and Taylor, A.M.R.} (2010) Testing for co-integration in vector autoregressions with non-stationary volatility. \textit{Journal of Econometrics} 158, 7-24.
\item[]{\sc Cavaliere, G., and Taylor, A.M.R.} (2007) Time-transformed unit-root tests for models with non-stationary volatility. \textit{Journal of Time Series Analysis} 29, 300-330.
\item[]{\sc Cavaliere, G., and Taylor, A.M.R.} (2008) Bootstrap unit root tests for time Series with nonstationary volatility. \textit{Econometric Theory} 24, 43-71.
\item[] {\sc Chang, X.-W., and Stehl{\'e}, D.} (2010)  Rigorous Perturbation Bounds of Some Matrix Factorizations.
\textit{SIAM J. Matrix Analysis Applications} 31, 2841--2859.

%

\item[]{\sc Dahlhaus, R.} (1997) Fitting time series models to nonstationary processes. \textit{Annals of
    Statistics} 25, 1-37.
\item[]{\sc Dees, S., and Saint-Guilhem, A.} (2011) The role of the United States in the global economy and its evolution over time. \textit{Empirical Economy} 41, 573-591.
\item[]{\sc Diebold, F., and Yilmaz, K.} (2014) On the network topology of variance decompositions: measuring the connectedness of financial firms. \textit{Journal of Econometrics} 182, 119-134

\item[]{\sc Flury, B.N.} (1985) Analysis of linear combinations with extreme
ratios of variance. \textit{Journal of the American Statistical Association}
80, 915-922.
\item[]{Giraitis, L., Kapetanios, G., and Yates, T.} (2018) Inference on multivariate heteroscedastic time varying random coefficient models. \textit{Journal of Time Series Analysis} 39, 129-149.

\item[]{\sc Kew, H., and Harris, D.} (2009) Heteroskedasticity-robust testing for a fractional unit root. \textit{Econometric Theory} 25, 1734-1753.

\item[]{\sc L\"{u}tkepohl, H.} (2005) \!\textit{New Introduction to Multiple Time Series Analysis}. Springer, Berlin.
\item[]{\sc L\"{u}tkepohl, H., Staszewska-Bystrova, A., and Winker, P.} (2015) Confidence bands for impulse response:
Bonferroni vs. Wald. \textit{Oxford Bulletin of Economics and Statistics} 77, 800-821.
\item[]{\sc Magnus, J.R.,  and Neudecker, H.} (1988)  \emph{Matrix differential calculus with applications in
statistics and econometrics.} John Wiley \& Sons.
\item[]{\sc Nazlioglu, S., Soytas, U., and Gupta, R.} (2015) Oil prices and financial stress: A volatility spillover analysis. \textit{Energy Policy} 82, 278-288.
\item[] {\sc Patilea, V., and Ra\"{i}ssi, H.} (2013) Corrected portmanteau tests for VAR models with time-varying variance. \textit{Journal of Multivariate Analysis} 116, 190-207.
\item[] {\sc Patilea, V., and Ra\"{i}ssi, H.} (2012) Adaptive estimation of vector autoregressive models with time-varying variance: application to testing linear causality in mean. \textit{Journal of Statistical Planning and Inference} 142, 2891-2912.
\item[] {\sc Patilea, V., and Ra\"{i}ssi, H.} (2010) Adaptive estimation of vector autoregressive models with time-varying variance: application to testing linear causality in mean. Working paper, arXiv:1007.1193v2.
\item[] {\sc Patilea, V., and Ra\"{i}ssi, H.} (2014) Testing second-order dynamics for autoregressive processes in presence of time-varying variance. \textit{Journal of the American Statistical Association} 109, 1099-1111.
\item[] {\sc Primiceri, G.E.} (2005) Time varying structural vector autoregressions and monetary policy. \textit{The Review of Economic Studies} 72, 821-852.
\item[]{\sc  Ra\"{i}ssi, H.} (2015) Autoregressive order identification for VAR models with non-constant variance. \textit{Communications in Statistics: Theory and Methods} 44, 2059-2078.
\item[]{\sc Sensier, M., and van Dijk, D.} (2004) Testing for volatility changes in U.S. macroeconomic time series. \textit{Review of Economics and Statistics} 86, 833-839.
\item[]{\sc Sims, C.A.} (1999) Drift and breaks in monetary policy. \textit{Unpublished paper}.
\item[]{\sc Stock, J.H., and Watson, M.W.} (2002)  Has the business cycle has changed and why? \textit{NBER Macroeconomics Annual} 17, 159-230.
\item[]{\sc Stock, J.H., and Watson, M.W.} (2005) Understanding changes in international business cycle dynamics. \textit{Journal of the European Economic Association} 3, 968-1006.
\item[]{\sc Strongin, S.} (1995) The identification of monetary
policy disturbances explaining the liquidity puzzle. \textit{Journal of Monetary Economics} 35, 463-497.
\item{\sc Strohsal, T., Proa\~{n}o, C.R., and Wolters, J.} (2019) Assessing the Cross-Country Interaction of Financial Cycles: Evidence from a Multivariate Spectral Analysis of the US and the UK. \textit{Empirical Economics} 57, 385-398.
\item{\sc van der Vaart, A., and Wellner, J.A.} (2011). A local maximal inequality under
uniform entropy. \textit{Electronic Journal of Statistics}  5,  192-203.
\item[]{\sc Xu, K.L., and Phillips, P.C.B.} (2008)
    Adaptive estimation of autoregressive models with time-varying
    variances. \textit{Journal of Econometrics} 142, 265-280.
\end{description}

\newpage

\section*{Tables and Figures}

\begin{table}[hh]\!\!\!\!\!\!\!\!\!\!
\begin{center}
\caption{The kernel estimators of $\int g$ and $(\int g^2)^{0.5}$ for the  monthly global price of brent crude, in U.S. Dollars per barrel. The pre and post crisis periods are from January 1990 to July 2008, and from January 2020 to July 2020.}
\begin{tabular}{|c|c|c|}
  \hline
  periods & pre-crisis & post-crisis \\
  \hline
  $(\int g^2)^{0.5}$ & 3.91 & 5.48  \\
  \hline
  $\int g$ & 3.75 & 3.61  \\
  \hline
\end{tabular}
\label{tab00}
\end{center}
\end{table}

\begin{table}[hh]\!\!\!\!\!\!\!\!\!\!
\begin{center}
\caption{The $\widehat{i}_{r,q}$'s for the oil-inflation data.}
\begin{tabular}{|c|c|c|}
  \hline
  periods & pre-crisis & post-crisis \\
  \hline
  $\widehat{i}_{r,q}$ & 1.26 & 1.63 \\
  \hline
\end{tabular}
\label{tab5}
\end{center}
\end{table}

\begin{figure}[h]
\begin{center}
\includegraphics[scale=0.46]{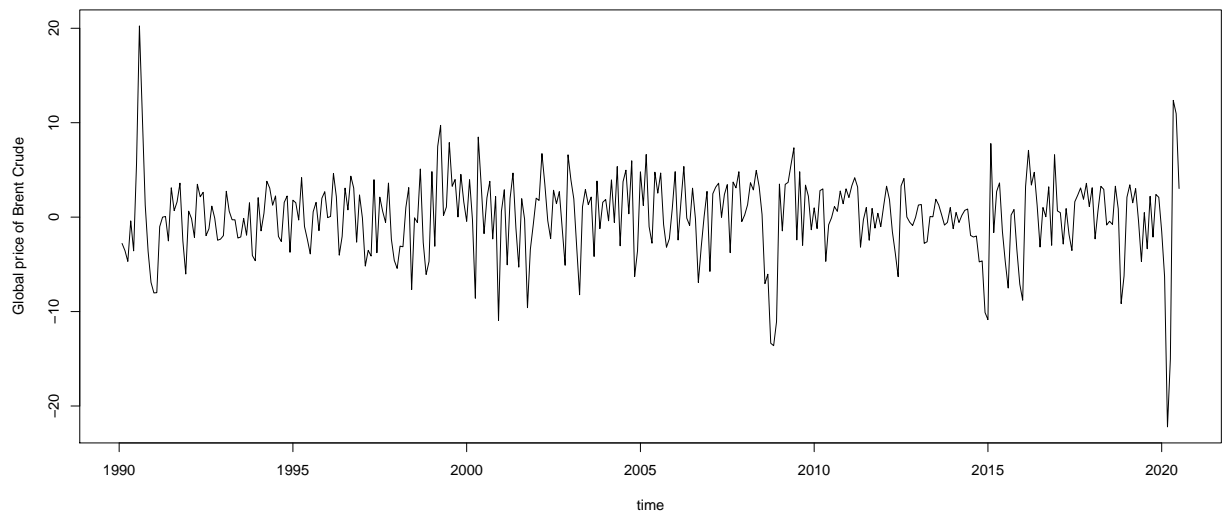}
\end{center}
\caption{The log differences of the monthly global price of brent crude multiplied by 100, in USD per barrel, from January 1990 to July 2020.}
\label{fig1}
\end{figure}

\begin{figure}[h]
\begin{center}
\includegraphics[scale=0.46]{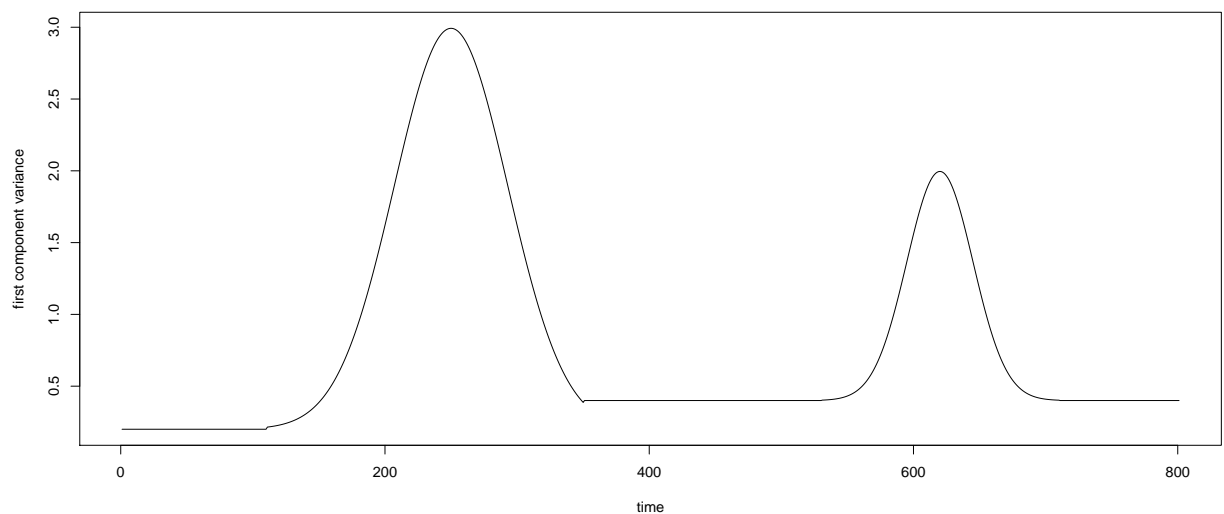}
\end{center}
\caption{The variance structure $\sigma^2_{11}(r)$ of the first innovations component of the simulated process (\ref{simulDGP}).}
\label{first-var-comp}
\end{figure}

\begin{figure}[h]
\begin{center}
\includegraphics[scale=0.46]{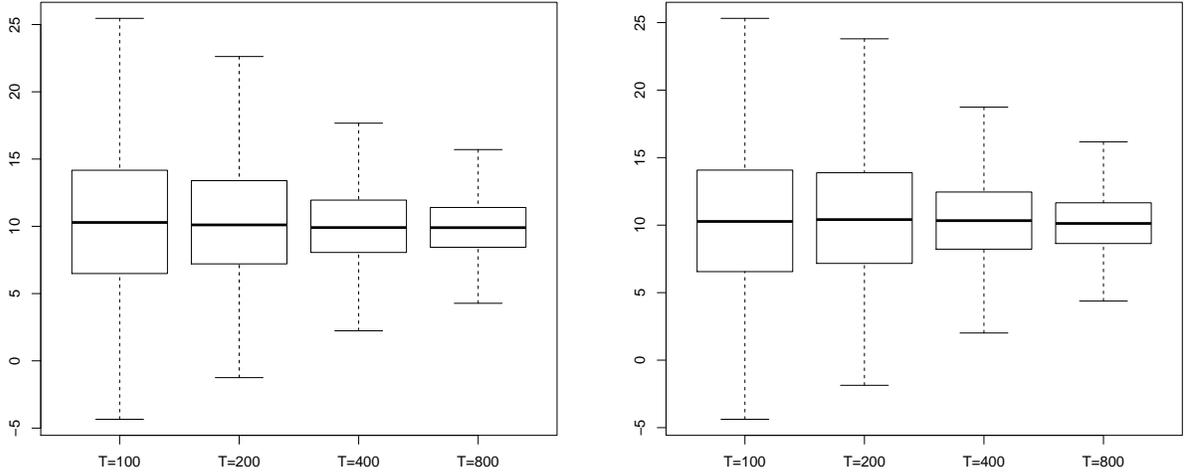}
\end{center}
\caption{The relative differences between the approximated and averaged OIRFs: $100*\left(\frac{\hat{\widetilde{\theta}}_{0.5}^{0.5,11}(1)}{\hat{\bar{\theta}}_{0.5}^{0.5,11}(1)}-1\right)$, (see equations (\ref{ALSapproxhist}) and (\ref{accumulatedgls_H})). The results corresponding to a bandwidth with $T^{-1/3}$ (resp. $T^{-2/7}$) decreasing rate is displayed on the left (resp. on the right).}
\label{fig2}
\end{figure}

\begin{figure}[h]
\begin{center}
\includegraphics[scale=0.46]{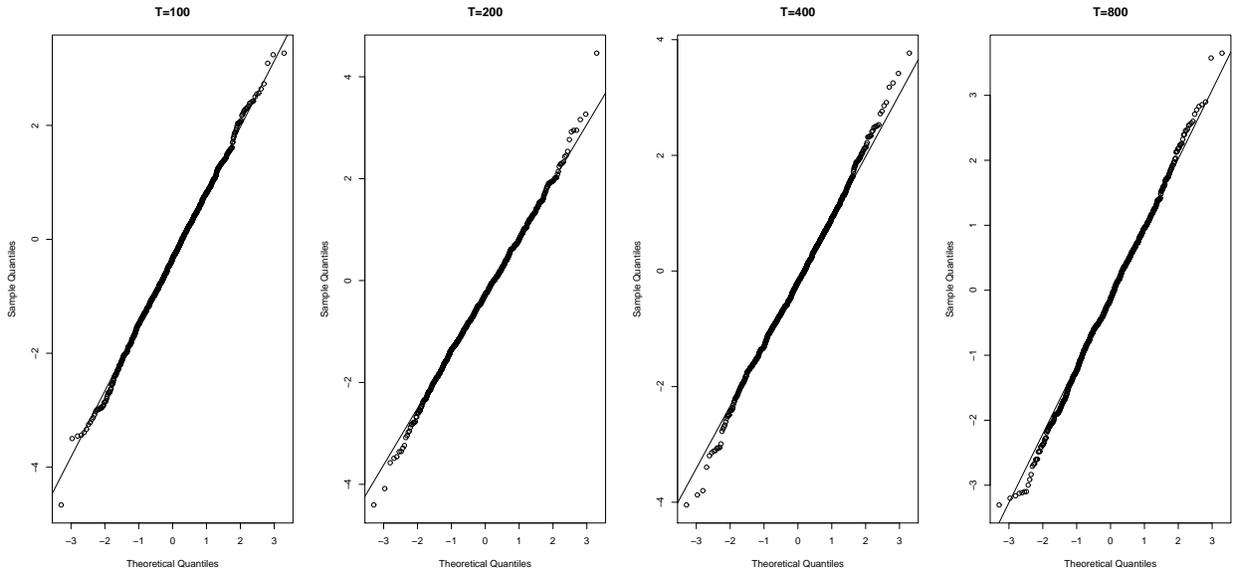}
\end{center}
\caption{The normal Q-Q plot of the approximated OIRFs of order one, that is $\sqrt{T}(\hat{\widetilde{\theta}}_{0.5}^{0.5,11}(1)-\widetilde{\theta}_{0.5}^{0.5,11}(1))$', over the $N=1000$ iterations.}
\label{fig3}
\end{figure}

\begin{figure}[h]
\begin{center}
\includegraphics[scale=0.46]{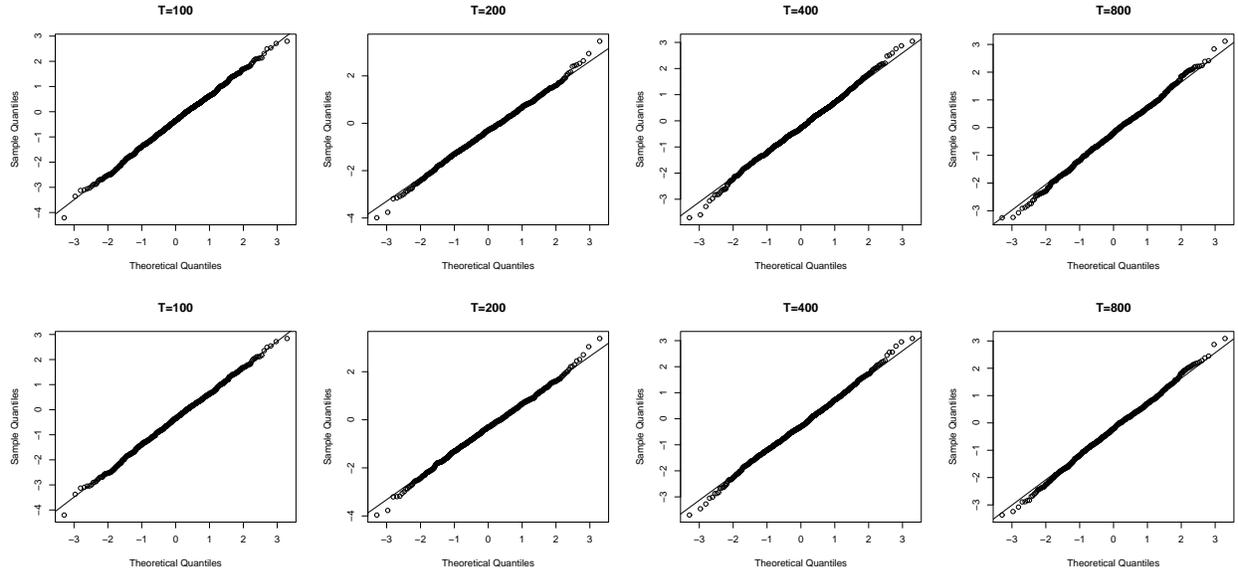}
\end{center}
\caption{The normal Q-Q plot of the averaged OIRFs of order one $\sqrt{T}(\hat{\bar{\theta}}_{0.5}^{0.5,11}(1)-\bar{\theta}_{0.5}^{0.5,11}(1))$'s. The results corresponding to a bandwidth with a $T^{-1/3}$ (resp. $T^{-2/7}$) decreasing rate are displayed on the top (resp. on the bottom) panels.}
\label{fig4}
\end{figure}

\begin{figure}[h]
\begin{center}
\includegraphics[scale=0.46]{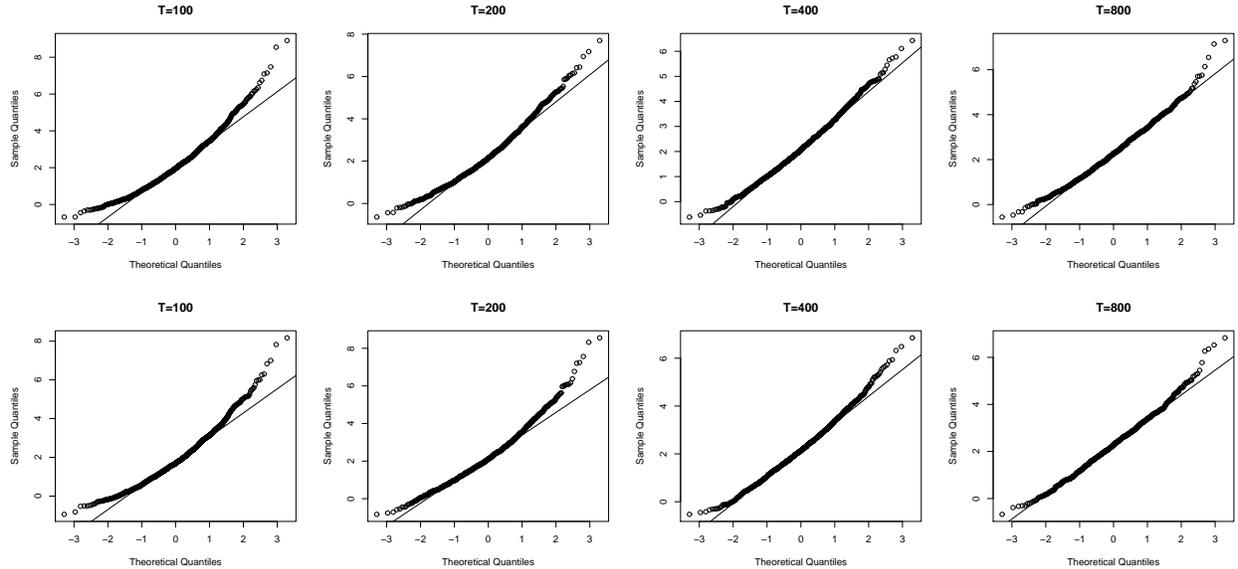}
\end{center}
\caption{The normal Q-Q plot of the
$\sqrt{T}\left(\widehat i_{r,q} - i_{r,q} \right)$'s. The results corresponding to a bandwidth with a $T^{-1/3}$ (resp. $T^{-2/7}$) decreasing rate are displayed on the top (resp. on the bottom) panels.}
\label{fig5}
\end{figure}

\begin{figure}[h]
\begin{center}
\includegraphics[scale=0.64]{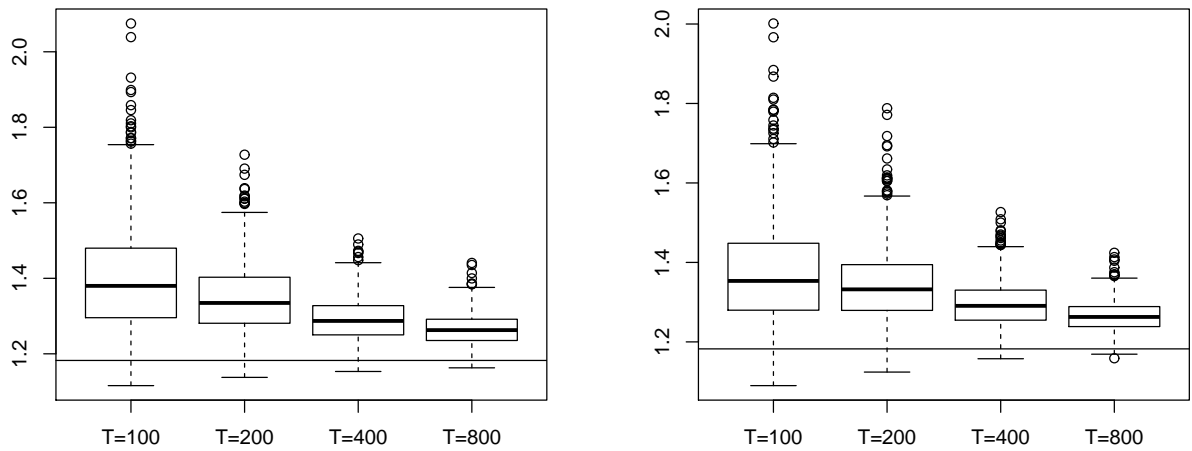}
\end{center}
\caption{The box-plots of the $\widehat i_{r,q}$'s for different sample sizes. The horizontal line corresponds to the true value. The results corresponding to a bandwidth with a $T^{-1/3}$ (resp. $T^{-2/7}$) decreasing rate are displayed on the left (resp. on the right) panels.}
\label{fig6}
\end{figure}

\begin{figure}[h]
\begin{center}
\includegraphics[scale=0.46]{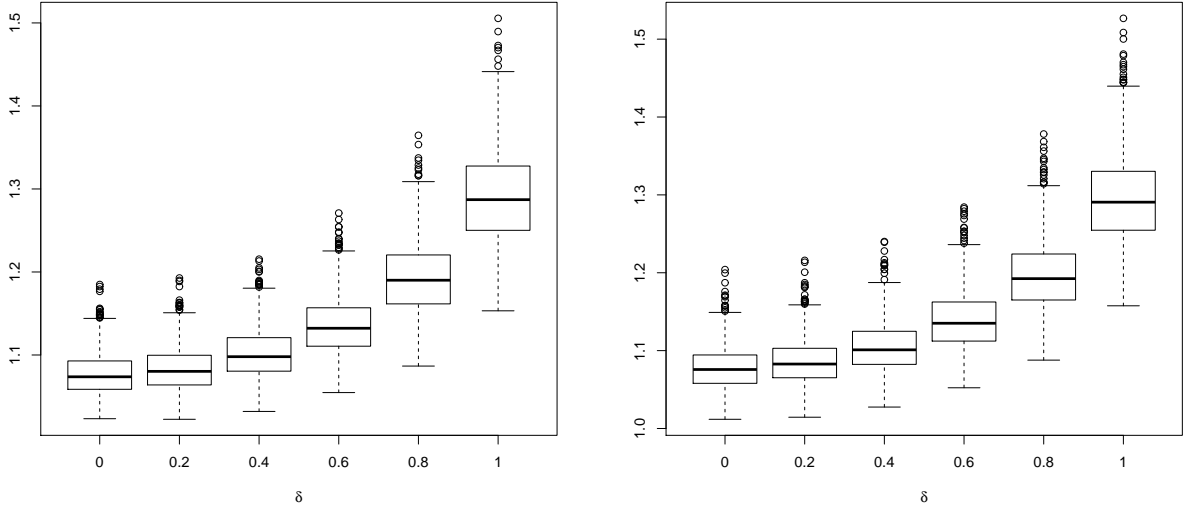}
\end{center}
\caption{The box-plots of the $\widehat i_{r,q}$'s for different values for the heteroscedasticity parameter $\delta$. As $\delta$ is far from zero, the heteroscedasticity is more marked. The results corresponding to a bandwidth with a $T^{-1/3}$ (resp. $T^{-2/7}$) decreasing rate are displayed on the left (resp. on the right) panel.}
\label{fig7}
\end{figure}

\begin{figure}[h]
\begin{center}
\includegraphics[scale=0.46]{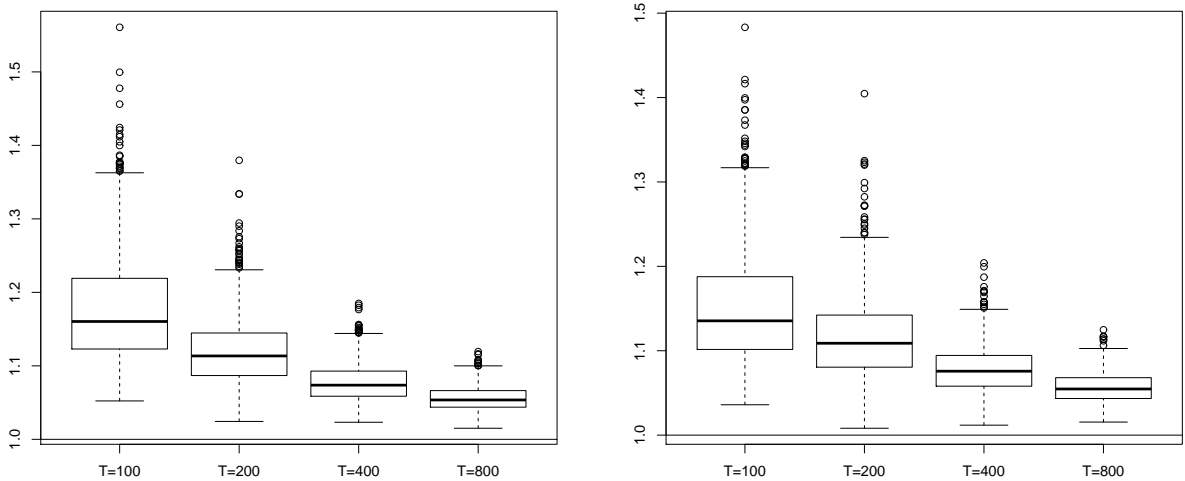}
\end{center}
\caption{The box-plots of the $\widehat i_{r,q}$'s for different sample sizes in the homoscedastic case (the true value is equal to one). The results corresponding to a bandwidth with a $T^{-1/3}$ (resp. $T^{-2/7}$) decreasing rate are displayed on the left (resp. on the right) panel.}
\label{fig8}
\end{figure}

\begin{figure}[h]
\begin{center}
\includegraphics[scale=0.45]{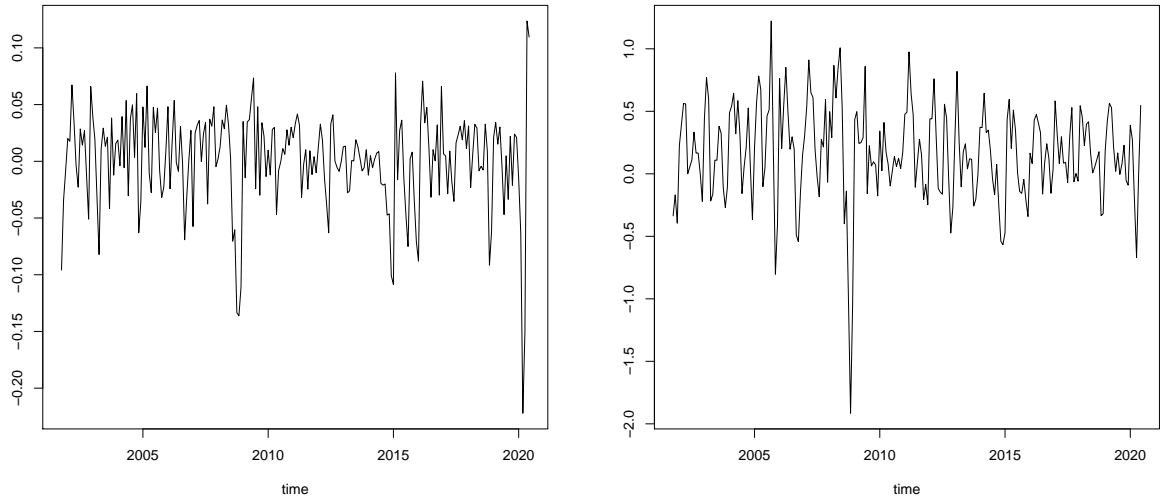}
  \end{center}
\caption{The $\log$ first differences of the brent crude in USD per barrel multiplied by 100, on the left. The growth rate previous period for the consumer price index for the United States on the right. The series are monthly, taken from October, 2001 to June, 2020.}
\label{data}
\end{figure}

\begin{figure}[h]
\begin{center}
\includegraphics[scale=0.45]{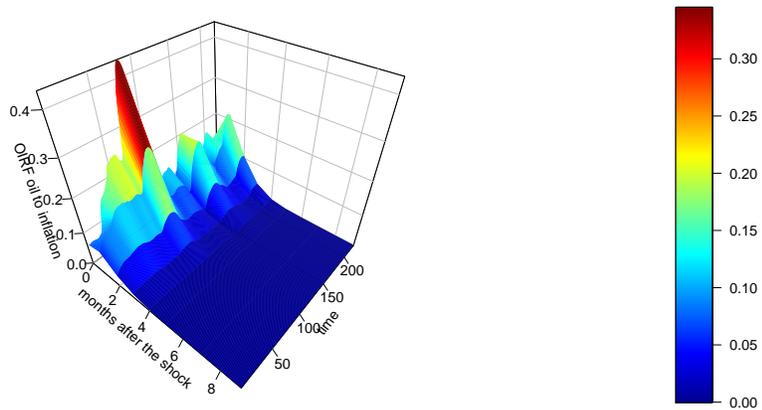}
  \end{center}
\caption{The pointwise orthogonal response of the inflation growth rate response to an oil price shock.}
\label{fig9}
\end{figure}

\begin{figure}[h]
\begin{center}
\includegraphics[scale=0.30]{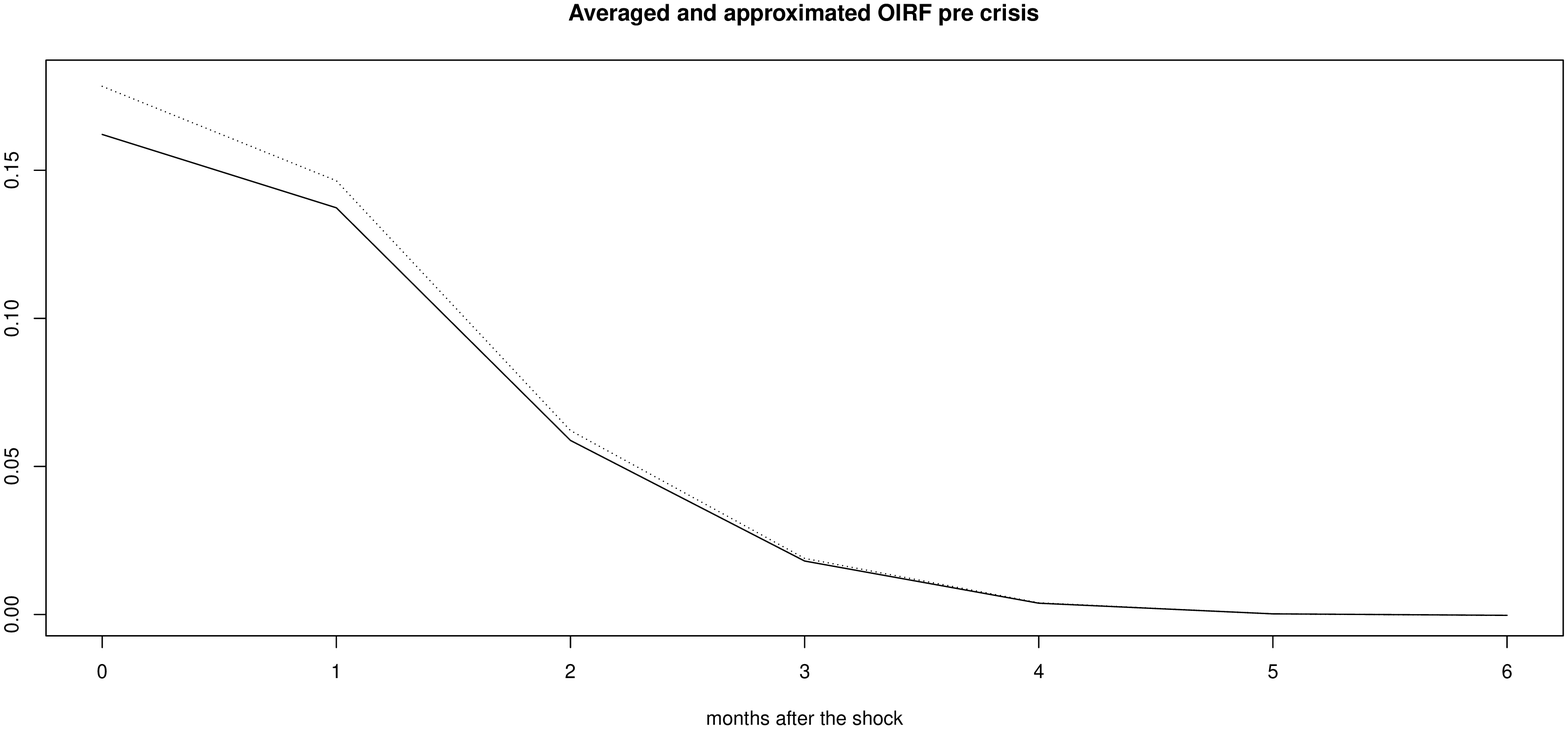}
\includegraphics[scale=0.30]{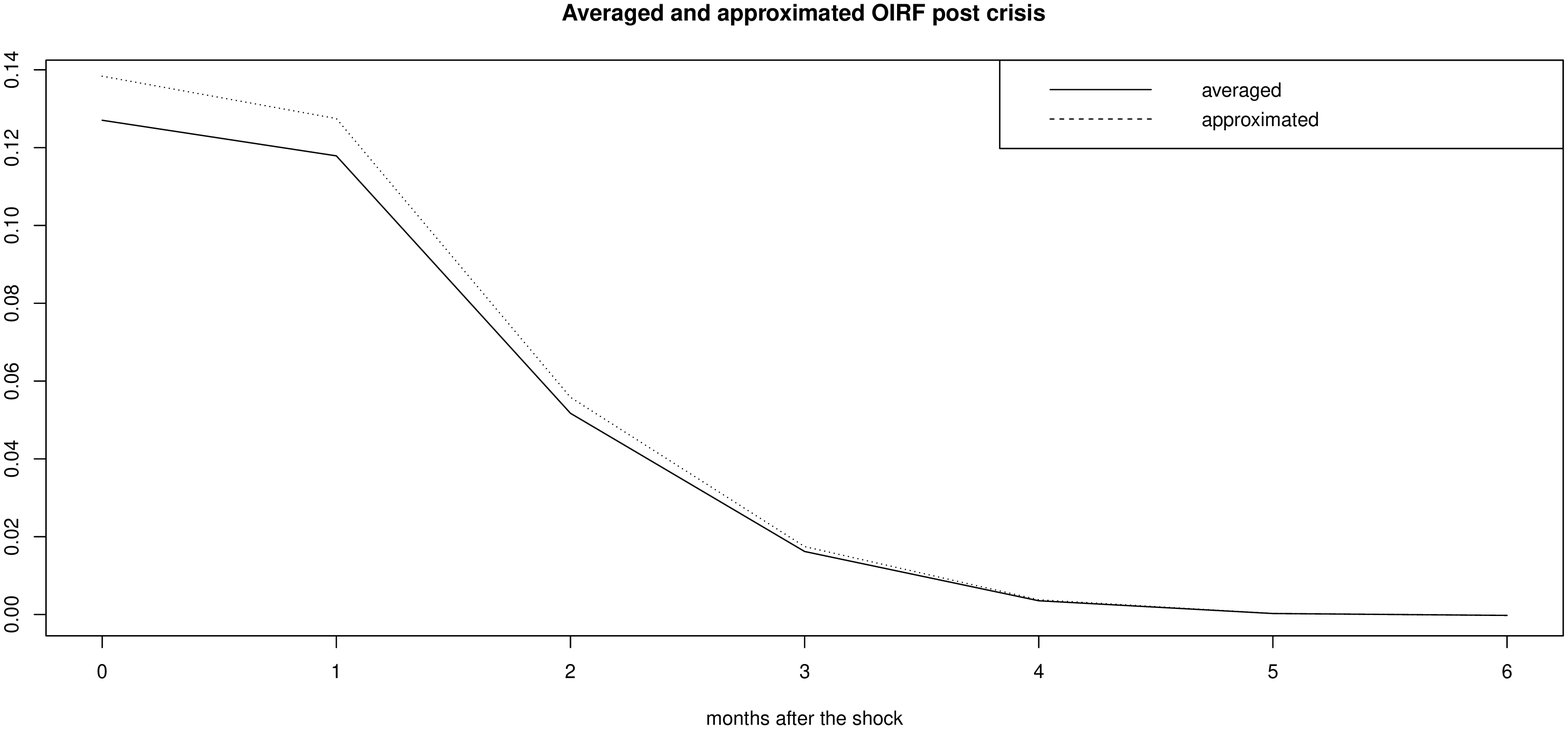}
  \end{center}
\caption{The pre and post crisis averaged and approximated OIRF for the oil prices-inflation variables.}
\label{fig10}
\end{figure}

\end{document}